\title{Optimizing Data Aggregation for Uplink Machine-to-Machine Communication Networks}
\author{Derya~Malak,~Harpreet S. Dhillon, and~Jeffrey~G.~Andrews
\thanks{D. Malak and J. G. Andrews are with the Wireless and Networking Communications Group (WNCG), The University of Texas at Austin, TX, USA. Email: deryamalak@utexas.edu, jandrews@ece.utexas.edu. H. S. Dhillon is with Wireless@VT, Department of Electrical and Computer Engineering, Virginia Tech, Blacksburg, VA, USA. Email: hdhillon@vt.edu.}
\thanks{Manuscript last revised: {\today}.}}
\newcommand{\norm}[1]{\left\| #1\right\|}
\newtheorem{theo}{Theorem}
\newtheorem{defi}{Definition}
\newtheorem{assu}{Assumption}
\newtheorem{prop}{Proposition}
\newtheorem{remark}{Remark}
\newtheorem{cor}{Corollary}
\newtheorem{lem}{Lemma}
\DeclareMathOperator*{\argmin}{arg\,min}
\begin{document}

\maketitle
\begin{abstract}
Machine-to-machine (M2M) communication's severe power limitations challenge the interconnectivity, access management, and reliable communication of data. In densely deployed M2M networks, controlling and aggregating the generated data is critical. We propose an energy efficient data aggregation scheme for a hierarchical M2M network. We develop a coverage probability-based optimal data aggregation scheme for M2M devices to minimize the average total energy expenditure per unit area per unit time or simply the {\em energy density} of an M2M communication network. Our analysis exposes the key tradeoffs between the energy density of the M2M network and the coverage characteristics for successive and parallel transmission schemes that can be either half-duplex or full-duplex. Comparing the rate and energy performances of the transmission models, we observe that successive mode and half-duplex parallel mode have better coverage characteristics compared to full-duplex parallel scheme. Simulation results show that the uplink coverage characteristics dominate the trend of the energy consumption for both successive and parallel schemes.
\end{abstract}


\maketitle

\section{Introduction}
\label{intro}
Machine-to-machine (M2M) applications are rapidly growing, and will become an increasingly important source of traffic and revenue on 4G and 5G cellular networks. Unlike video applications, which are expected to consume around 70\% of all wireless data by the end of the decade \cite{Cisco2015}, M2M devices will use a comparatively small fraction. However, M2M communication has its own challenges. The air interface design for high-data-rate applications may not effectively support M2M's vast number of devices, each usually having only a small amount of data to transmit. Thus, M2M will require sophisticated access management and resource allocation with QoS constraints to prevent debilitating random access channel (RACH) congestion \cite{DhiHuVis2015}, and to enable link adaptation with low overhead, reduced energy consumption, and efficient control channel design \cite{Lien2011}. Bluetooth (IEEE 802.15.1), Zigbee (IEEE 802.15.4) and WiFi (IEEE 802.11b) are a few examples of current technologies that have been used for M2M communication \cite{Lien2011}. Meanwhile, 3GPP has been studying M2M communication in its standardization process for Long Term Evolution-Advanced (LTE-A). In Release 13, a low-complexity user equipment (UE) category for M2M devices is specified, in which a UE has reduced bandwidth and lower maximum transmission power, and can operate in all the duplex modes \cite{3GPPStandardv13}. However, there is still no consensus on the general network architecture for large scale M2M communication. 

\subsection{Motivation and Related Work}  
Unlike most human generated or consumed traffic, M2M as defined in this paper is characterized by a very large number of small transactions, often from battery powered devices. The power and energy optimal uplink design for various access strategies is studied in \cite{DhillonEH2-2013}, while an optimal uncoordinated strategy to maximize the average throughput for a time-slotted RACH is developed in \cite{DhillonEH2014}. For the small payload sizes relevant for M2M, a strategy that transmits both identity and data over the RACH is shown to support significantly more devices compared to the conventional approach, where transmissions are scheduled after an initial random-access stage. 

Clustering is a key technique to reduce energy consumption and it increases the scalability and lifetime of the network \cite{Ester1996,Bandyopadhyay2003}. A Low-Energy Adaptive Clustering Hierarchy protocol (LEACH) to evenly distribute the energy load among the sensors and enable scalability for dynamic networks by incorporating localized coordination is proposed \cite{Heinzelman2000energy,Kaj2009}. A low overhead protocol that periodically selects cluster-heads according to the node residual energy to support scalable data aggregation is established in \cite{Younis2004}. A Distributed Energy-Efficient Clustering (DEEC) scheme, where the cluster-heads are elected by a probability based on the ratio between residual energy of each node and the average energy of the network is proposed in \cite{Qing2006}. A data gathering mechanism, where clusters closer to the BS have smaller sizes than those farther away from the BS to balance the energy consumption is analyzed in \cite{Li2005MobileAdhoc}. Although these techniques are energy efficient, they do not incorporate the coverage characteristics of wireless transmissions.

A stochastic analysis to determine how many data collectors per area are required to meet the outage probability for a given sensor node density is implemented in \cite{Kwon2013}. Modulation strategies to minimize the total energy to send a given number of bits are analyzed in \cite{Cui2003}. The energy efficiency of a large scale interference limited ad hoc network at physical, medium access control (MAC) and routing layers is quantified \cite{Zaidi2012}. Optimal power allocation to minimize the total energy consumption in a clustered WSN where sensors cooperatively relay data to nearby clusters is studied in \cite{Zhou2008}. The energy-optimal relay distance and the optimal energy to transmit a bit successfully over a particular distance are analyzed in \cite{Holland2011}.  

Different control mechanisms to avoid congestion caused by random channel access of M2M devices are reviewed \cite{Hasan2013}. An adaptive slotted ALOHA scheme for random access control of M2M systems with bursty traffic that achieves near-optimal access delay performance is proposed \cite{wu2013fasa}. A comprehensive performance evaluation of the energy efficiency of the random access mechanism of LTE and its role for M2M is provided \cite{Laya2014}. An energy-efficient uplink design for LTE networks in M2M and human-to-human coexistence scenarios that satisfies quality-of-service (QoS) requirements is developed \cite{Aijaz2014}. Similar to \cite{Laya2014,Aijaz2014}, we study an energy-efficient design for M2M uplink where devices perform multi-hop transmissions. We also incorporate the coverage characteristics for different transmission modes using stochastic geometry.

Because low-power M2M devices may not be able to communicate with the BS directly, hierarchical architectures may be necessary. Hence, critical design issues also include optimizing hierarchical organization of the devices and energy efficient data aggregation. Although these issues have not been studied in the context of M2M, there is prior work on distributed networks in the context of wired communications. In \cite{Rodoplu1999}, energy consumption is optimized by studying a distributed protocol for stationary ad hoc networks. In \cite{Baccelli1999}, a distribution problem which consists of subscribers, distribution and concentration points for a wired network model is studied to minimize the cost by optimizing the density of distribution points. In \cite{Baek2004}, a hierarchical network including sinks, aggregators and sensors is proposed, which yields significant energy savings.

Hierarchical networks can provide efficient data aggregation in M2M or other power-limited systems to enable successful end-to-end transmission. Despite previous research efforts, e.g., \cite{Kwon2013}, \cite{Bandyopadhyay2003} and \cite{Holland2011}, to the best of our knowledge, there has been no study focusing on data aggregation schemes for M2M networks together with the rate coverage characteristics, especially from an energy optimal design perspective. Providing such a study is the main contribution of this paper.

\subsection{Contributions and Organization}
{\bf A large-scale hierarchical wireless network model for M2M.} We propose a new communication model for the M2M uplink. In Sect. \ref{systemmodel}, we consider a hierarchical scenario to model the wireless transmissions where the hierarchical levels describe the multi-hop stages of the model. Each level is composed of the transmitter and aggregator processes. Once the transmissions of the current hierarchical level are completed, the transmitters are either turned off or kept on depending on the transmission scheme used, and in the subsequent hierarchical level, the aggregator process forms the new transmitter process. Since each device can act as a transmitter or an aggregator in a particular time slot, we model the devices as wireless transceivers. In Sect. \ref{multi-stage}, the aggregation process is repeated over multiple levels to generate a hierarchical model. We show there is an optimal fraction of aggregators that minimizes the overall energy consumption.

{\bf SIR and rate coverage probability.} In Sect. \ref{coverageprobability}, we analyze the SIR coverage and rate coverage characteristics of the multi-stage transmission process to determine the optimal number of stages. We consider two possible transmission techniques: i) {\em successive scheme}, where the hierarchical levels are not active simultaneously, and ii) {\em parallel scheme}, where either all the levels are active simultaneously as in full-duplex transmission or where the active levels are interleaved as in half-duplex mode. We describe the proposed transmission models in detail in Sect. \ref{rate}. 

{\bf Optimizing the number of multi-hop stages.} We propose a general aggregator model for power limited M2M devices. Since the M2M devices are power limited and thus range limited, multi-hop routing is a feasible strategy rather than direct transmissions. However, in designing multi-hop protocols, the number of hops cannot be increased arbitrarily due to the additional energy consumption incurred by relays; long-hop routing is a competitive strategy for many networks \cite{Haenggi2005}. We find an upper and lower bound on the optimal number of hops in Sect. \ref{coverageprobability}.

{\bf Optimizing the network energy density.} We use a generic transceiver scheme to model the energy consumption of the devices. For multi-hop transmission models, we find the optimal fraction of aggregators independently chosen from the set of devices that minimizes the average total energy consumption per unit area, i.e., the mean total energy density, for a fixed payload per device of an M2M network. We incorporate the energy consumption of aggregators and transceiver circuit components into the energy model. In Sects. \ref{single-stage} and \ref{multi-stage}, we optimize the total energy density of the transmissions based on a given SIR coverage requirement for the devices.

We evaluate the performance of the proposed techniques in Sect. \ref{Performance}, and provide a comparison in terms of their energy densities and communication rates through numerical investigations. 

\section{System Model}
\label{systemmodel}
We consider a cellular-based uplink model for M2M communication where the BS and device locations are distributed as independent Poisson Point Processes (PPPs)\footnote{PPP is not just plausible, basically a tractable model where the points are randomly and independently scattered in the space, and analysis for a typical node is permissible in a homogeneous PPP by Slivnyak's theorem \cite{Stoyan1996}.} with respective densities of $\lambda_{\mathrm{BS}}$ and $\lambda$ with $\lambda\gg \lambda_{\mathrm{BS}}$. Each BS has an average coverage area of $\lambda_{\mathrm{BS}}^{-1}$, and each device has a fixed payload of $M$ bits to be transmitted to the BS. We also assume open loop power control with maximum transmit power constraint under which the transmit power of a device located at distance $d$ from the BS is\footnote{Later in Sect. \ref{coverageprobability}, in evaluating the SIR-based coverage probability, we also incorporate the small-scale fading into the analysis that is assumed to be independent and identically distributed (iid) with unit mean. Therefore, incorporating fading yields the same average energy analysis. To keep the notation simple, we do not incorporate fading in Sects. \ref{systemmodel}, \ref{single-stage} and \ref{multi-stage}.} $P_T(d)=\min\{P_{T_{\max}}, \overline{P}_T d^{\alpha}\}$, 
where $\alpha$ is the path loss exponent, $P_{T_{\max}}$ is the maximum transmit power constraint and $\overline{P}_T$ is the received power when $d\leq \big({P_{T_{\max}}}/{\overline{P}_T}\big)^{1/\alpha}$. With this assumption, the average received power at the BS from a device in its coverage area and located at distance $d$ from the BS is constant and equal to $P_R(d)=\min\{P_{T_{\max}}d^{-\alpha}, \overline{P}_T \}$. Assuming $\mathrm{N_a}$ devices\footnote{$\mathrm{N_a}$ (random variable) denotes the number of devices in the Voronoi cell of a typical aggregator, and is detailed in Sect. \ref{single-stage}.} are scheduled based on TDMA, the uplink SINR for any device is ${\rm SINR} = {P_R(d)}{(I_{\rm ic} + I_{\rm oc} +N_0 W)^{-1}}$, where $I_{\rm ic}$ and $I_{\rm oc}$ are the intra cell and out of cell interferences\footnote{The interference is due to simultaneously active aggregator cells. Users within each Voronoi cell are assumed to use TDMA for access, and at a particular time slot, there is only one active transmitting device in each cell. For the sequential mode, the interference is due to the active transmitters outside the typical cell. On the other hand, for the parallel transmission mode, since all the stages are simultaneously active, there is both intra cell and out of cell interferences., which is detailed in Sect. \ref{rate}.}, $N_0$ is the power spectral density and $W$ is the bandwidth.

\subsection{Data Aggregation and Transmission Model}
\label{data-aggregation}
Devices transmit data to the BS by aggregating data. The initial device process $\Psi$ is independently thinned by probability $\gamma<0.5$ to generate\footnote{In Sect. \ref{single-stage}, we will motivate the choice of $\gamma<0.5$ in our setup.} the aggregator process $\Psi_a$ with density $\lambda_a$ and the device (transmitter) process $\Psi_u$ with density $\lambda_u$, where $\lambda=\lambda_u+\lambda_a$. Each transmitter is associated to the closest receiving device (aggregator), i.e., the transmitting devices within the Voronoi cell of the typical aggregator device will transmit their payloads to that aggregator.  

The aggregation process can be extended to multiple stages. Each hierarchical level is composed of the transmitter and the aggregator processes, where the aggregator processes of all stages are initially determined such that they are disjoint from each other. At each stage, after the set of transmitters transmit their payloads to their nearest aggregators, and once the transmissions of a hierarchical level are completed, the transmitters are turned off and excluded from the process. In the subsequent hierarchical level, the aggregators of the previous stage become the transmitters, and they transmit their data to the aggregator process of the new stage. The aggregation process is repeated over multiple stages to generate the hierarchical transmission model. The process ends when all the payload is transmitted to the BS in the last stage of this multi-hop process, which we call a transmission cycle. The aggregation model will be detailed in Sects. \ref{single-stage} and \ref{multi-stage}.

\subsection{Wireless Transceiver Model}
We model the devices as wireless transceivers, since each device can act as a transmitter or an aggregator in a particular time slot. In \cite{Wang2006}, a generic architecture for energy-limited wireless transceivers is provided. The transceiver has four major building blocks. The transmit block (TX) is responsible for modulation and up-conversion, the receive block (RX) for down-conversion and demodulation, the local oscillator (LO) block for the generation of the required carrier frequency, and the power amplifier (PA) block for amplification of the signal to produce the required RF transmit power $P_T$, where we assume the maximum transmit power is bounded and given by $P_{T_{\max}}$. The power consumption in the receive and transmit paths are
\begin{align}
P_{\rm rxr}=P_{\rm LO}+P_{\rm RX}+P_{\rm O},\quad P_{\rm txr}=P_{\rm LO}+P_{\rm TX}+P_{\rm PA},
\end{align}
where the power consumption of ${\rm LO}$, ${\rm RX}$ and ${\rm TX}$ blocks are denoted by $P_{\rm LO}$, $P_{\rm RX}$ and $P_{\rm TX}$, respectively, which are non negative constants. $P_{\mathrm{O}}$ is the receiver overhead power that is assumed to be constant. The PA power consumption is given by $P_{\rm PA}=\eta^{-1}P_T$, where $\eta$ is the PA efficiency, which is constant in the linear regime. The average energy cost of the transceiver is given by $\beta_R\cdot P_{\rm rxr}+\beta_T\cdot P_{\rm txr}$ where $\beta_T={\bar{t}_{\rm tx}}/{(\bar{t}_{\rm tx}+\bar{t}_{\rm rx})}$ and $\beta_R={\bar{t}_{\rm rx}}/{(\bar{t}_{\rm tx}+\bar{t}_{\rm rx})}$, and $\bar{t}_{\rm rx}$ and $\bar{t}_{\rm tx}$ stand for the average receive and transmit times. The symbol definitions are given in Table \ref{table:tab1}.

In the proposed model, multiple devices (transmitters) send data to an aggregator (receiver), where each device is allocated a different time slot on the same frequency, i.e., TDMA. Therefore, $\beta_T=1$ and $\beta_R=0$ for transmitting devices, and their energy cost is proportional to $P_{\rm txr}$. For the aggregator, $\beta_T=0$ and $\beta_R=1$, and its energy cost is proportional to $P_{\rm rxr}$. Let $E_R$ denote the average energy required by an aggregator. $E_T$ denotes the average energy required for all transmissions within the coverage of an aggregator node, i.e., if $\mathrm{N_a}$ devices transmit to the aggregator, $E_T$ is the sum of their average transmission energies. Hence, for a total transmission/reception time slot of duration $\Delta t$, aggregator and total transmitter energy consumptions are 
\begin{align}
\label{ER}
E_R=\mathbb{E}[\Delta t (P_{\rm LO}+P_{\rm RX}+P_{\rm{O}})],\quad E_T=\mathbb{E}[\Delta t \left(\mathrm{N_a}P_{\rm LO}+P_{\rm TX}+P_{\rm PA}\right)],
\end{align}
where we assume $\mathrm{N_a}$ devices sequentially transmit to the aggregator, the time slot satisfies $\Delta t=\mathbb{E}[\mathrm{N_a}]\bar{t}_{\rm tx}=\bar{t}_{\rm rx}\geq\bar{t}_{\rm tx}$, and we assume that the communication delay due to processing of the data is negligible. Hence, transmitted data can be received within the time slot allocated and the aggregator can decode the received data. We also assume that the transmissions and receptions are synchronized. $E_R$ is mainly determined by the reception duration, while $E_T$ depends on the energy consumption of the PA, $\bar{t}_{\rm tx}$ and $\mathrm{N_a}$. Furthermore, whenever a device is in sleep mode, its transmitter and receiver modules are not active but other components, i.e., ${\rm LO}$ and ${\rm RX}$ blocks, are still consuming energy that justifies the scaled term $\mathrm{N_a}P_{\rm LO}$ of $E_T$ in (\ref{ER}).

We will use this transceiver model and the data aggregation strategy described to formulate our energy optimization problems in Sects. \ref{single-stage}-\ref{multi-stage}, and analyze the SIR coverage in Sect. \ref{coverageprobability}.

\begin{table}[t!]\scriptsize
\begin{center}
\setlength{\extrarowheight}{0.2pt}
  \begin{tabular}{|l|l||l|l|l} 
    \hline
    \bf Parameter & \bf Symbol & \bf Parameter & \bf Symbol \\ 
    \hline
    Power consumption in the receive (transmit) path & $P_{\rm rxr}$ ($P_{\rm txr}$) & RF transmit power; maximum transmit power & $P_T$; $P_{T_{\max}}$ \\
    \hline
    Power consumption of the receive (transmit) block & $P_{\rm RX}$ ($P_{\rm TX}$) & RF receive power; mean total received power & $P_R$; $\overline{P}_R$ \\
    \hline
    Power consumption of the local oscillator (LO) & $P_{\rm LO}$ & Path loss exponent & $\alpha$ \\
    \hline
    Receiver overhead power & $P_{O}$ & Average receive (transmit) time in one cycle & $\bar{t}_{\rm rx}$ ($\bar{t}_{\rm tx}$)\\
    \hline
    Power consumption of the power amplifier PA & $P_{\rm PA}$ & Transmission/reception time slot duration & $\Delta t$ \\    
    \hline
PA efficiency & $\eta$ & Receiver (transmitter) activity factors & $\beta_R$ ($\beta_T$)\\
    \hline

  \end{tabular}
\end{center}
\caption{Notation.}
\label{table:tab1}
\end{table}

\section{Single-Stage Energy Density Optimization}
\label{single-stage}
The main focus of this section is to model the average total energy density for a single-stage data aggregation scheme, which is a single hop energy model incorporating the data aggregation and transceiver model described in Sect. \ref{systemmodel}. This model paves the way for understanding the multi-stage energy model to be discussed in Sect. \ref{multi-stage}. 

For the single-stage model, recall that initially a density $\lambda_a<\lambda/2$ of the devices will be independently selected as aggregators, and a density of $\lambda_u=\lambda-\lambda_a$ will be the transmitters. Then, the set of aggregators, $\Psi_a$, collects the data from the remaining devices, $\Psi_u$, based on nearest aggregator association, and each aggregator might have multiple devices assigned to it. However, multiple transmissions to an aggregator at a particular time slot are not allowed, hence the devices are scheduled based on TDMA. Once all devices complete their transmissions, the aggregators also incorporate their payloads, and then transmit the whole data to the BS.

We assume perfect channel inversion power control, under which the received power at the BS from any device is unity. Then, the average total uplink power is given by Theorem \ref{maintheo}.
\begin{theo}
\label{maintheo}
The mean total uplink power of the devices in the Voronoi cell of the typical aggregator, i.e., the average of the total power consumption of the PAs, is given by
\begin{eqnarray}
\label{PAthm}
P(\lambda_a)=\frac{\pi\lambda_u \overline{P}_T}{\eta(\pi\lambda_a )^{1+\alpha/2}} \gamma\left(\frac{\alpha}{2}+1,\lambda_a\pi r_c^2\right)+\frac{\lambda_u P_{T_{\max}}}{\eta\lambda_a} e^{-\lambda_a\pi r_c^2},
\end{eqnarray}
where $r_c=\big( {P_{T_{\max}}}/{\overline{P}_T}\big)^{1/\alpha}$ is defined as the critical distance. 
\end{theo}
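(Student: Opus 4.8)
The plan is to condition on a typical aggregator placed at the origin, which is legitimate by Slivnyak's theorem applied to the aggregator process $\Psi_a$, and to write $P(\lambda_a)$ as the expectation of a sum of PA powers over the transmitters that this aggregator serves. Because the thinning is independent, $\Psi_u$ and $\Psi_a$ are independent homogeneous PPPs of densities $\lambda_u$ and $\lambda_a$, and adjoining the typical aggregator at the origin leaves $\Psi_u$ statistically unchanged. A transmitter at $x\in\Psi_u$ contributes PA power $\eta^{-1}P_T(|x|)=\eta^{-1}\min\{P_{T_{\max}},\overline{P}_T|x|^{\alpha}\}$ precisely when it is served by the origin, i.e.\ when the origin is its nearest aggregator, so that $P(\lambda_a)=\mathbb{E}\big[\sum_{x\in\Psi_u\cap V_o}\eta^{-1}P_T(|x|)\big]$, where $V_o$ is the Voronoi cell of the typical aggregator.

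The key observation is that the association event has a clean probability. A transmitter at distance $r=|x|$ is served by the origin iff no other aggregator lies in the disk $B(x,r)$, which by the void probability of a PPP of density $\lambda_a$ equals $e^{-\lambda_a\pi r^2}$. Applying Campbell's theorem to $\Psi_u$ (conditioning on $\Psi_a$, then averaging the served-by-origin indicator over $\Psi_a$ to produce exactly this void probability) and passing to polar coordinates yields
\begin{equation*}
P(\lambda_a)=\frac{2\pi\lambda_u}{\eta}\int_0^{\infty}\min\{P_{T_{\max}},\overline{P}_T r^{\alpha}\}\,e^{-\lambda_a\pi r^2}\,r\,\mathrm{d}r.
\end{equation*}

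It then remains to evaluate the radial integral, which I would split at the critical distance $r_c=(P_{T_{\max}}/\overline{P}_T)^{1/\alpha}$, where the two branches of the minimum coincide. On $[0,r_c]$ the integrand is $\overline{P}_T\,r^{\alpha+1}e^{-\lambda_a\pi r^2}$, and the substitution $u=\lambda_a\pi r^2$ turns it into a lower incomplete gamma function, producing the factor $(\pi\lambda_a)^{-(1+\alpha/2)}\gamma(\alpha/2+1,\lambda_a\pi r_c^2)$. On $[r_c,\infty)$ the integrand is $P_{T_{\max}}\,r\,e^{-\lambda_a\pi r^2}$, an elementary integral evaluating to $(2\pi\lambda_a)^{-1}e^{-\lambda_a\pi r_c^2}$. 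Collecting the constants reproduces the two terms of (\ref{PAthm}).

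I expect the only real subtlety to be the reduction of the Voronoi-cell sum to a single radial integral, namely justifying that the served-by-origin indicator averages (over the independent aggregator process) to the void probability $e^{-\lambda_a\pi r^2}$ rather than to something coupled to the shape of the full cell. Since $\Psi_u$ and $\Psi_a$ are independent and the per-transmitter association depends only on the aggregators inside $B(x,|x|)$, Campbell's theorem decouples cleanly and no joint cell-geometry computation is needed; the remaining integrals are routine. As a consistency check, replacing the power weight by unity recovers the expected cell population $\mathbb{E}[\mathrm{N_a}]=\lambda_u/\lambda_a$, confirming the normalization.
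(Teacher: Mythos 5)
Your proof is correct and follows essentially the same route as the paper's: Slivnyak's theorem for the typical aggregator, reduction of the Voronoi-cell sum to a radial integral against the void probability $e^{-\lambda_a\pi r^2}$ (the paper phrases this via the Palm expectation and the mean additive characteristic of the typical cell), then splitting the integral at $r_c$ with the substitution $u=\lambda_a\pi r^2$ to obtain the incomplete gamma term and the exponential tail term. Both the decomposition and the final evaluation match the paper's Appendix A computation.
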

\begin{proof}
Proof is given in Appendix \ref{App:AppendixA}, where the final result can be obtained by incorporating the maximum power constraint into the mean additive characteristic associated with the typical cell of the access network model provided in \cite[Ch. 4.5]{BaccelliBook1}.
 \qedhere
\end{proof}

\begin{cor}
If the maximum transmit power is unbounded, the mean total uplink power becomes
\begin{eqnarray}
\lim_{P_{T_{\max}} \to \infty} P(\lambda_a)=\frac{\pi\lambda_u \overline{P}_T }{\eta(\pi\lambda_a)^{1+\alpha/2}}\Gamma\left(\frac{\alpha}{2}+1\right).\nonumber
\end{eqnarray}	
\end{cor}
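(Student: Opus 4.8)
The plan is to pass to the limit term by term in the closed form of Theorem \ref{maintheo}, exploiting the fact that the critical distance $r_c=\big(P_{T_{\max}}/\overline{P}_T\big)^{1/\alpha}$ diverges as $P_{T_{\max}}\to\infty$. Writing $x:=\lambda_a\pi r_c^2=\lambda_a\pi\big(P_{T_{\max}}/\overline{P}_T\big)^{2/\alpha}$, both terms in (\ref{PAthm}) are functions of $x$ (and of $P_{T_{\max}}$ through the prefactor of the second term), and $x\to\infty$ in the limit, so the whole argument reduces to the asymptotics of the lower incomplete gamma function together with a polynomial-versus-exponential comparison.

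For the first term I would use the defining property of the lower incomplete gamma function, $\gamma(s,x)=\int_0^x t^{s-1}e^{-t}\,dt$, which increases monotonically to $\Gamma(s)$ as $x\to\infty$. Since the prefactor $\pi\lambda_u\overline{P}_T/\big(\eta(\pi\lambda_a)^{1+\alpha/2}\big)$ does not depend on $P_{T_{\max}}$, the first term converges to exactly the claimed expression $\tfrac{\pi\lambda_u\overline{P}_T}{\eta(\pi\lambda_a)^{1+\alpha/2}}\Gamma\!\big(\tfrac{\alpha}{2}+1\big)$, using $\gamma\!\big(\tfrac{\alpha}{2}+1,x\big)\to\Gamma\!\big(\tfrac{\alpha}{2}+1\big)$.

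It then remains to show the second term vanishes. Substituting $r_c^2=\big(P_{T_{\max}}/\overline{P}_T\big)^{2/\alpha}$, the term equals $\tfrac{\lambda_u}{\eta\lambda_a}\,P_{T_{\max}}\,\exp\!\big(-c\,P_{T_{\max}}^{2/\alpha}\big)$ with the constant $c:=\lambda_a\pi\,\overline{P}_T^{-2/\alpha}>0$. This is a product of a factor growing only linearly in $P_{T_{\max}}$ and a factor decaying like $e^{-cP_{T_{\max}}^{2/\alpha}}$. Because the path loss exponent satisfies $\alpha>0$, we have $2/\alpha>0$, so the stretched-exponential decay dominates the linear growth and the term tends to $0$ as $P_{T_{\max}}\to\infty$. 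Adding the two limits yields the stated corollary.

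The argument is essentially routine; the only point that warrants a line of care is the dominance in the second term, i.e., confirming that $P_{T_{\max}}\,e^{-cP_{T_{\max}}^{2/\alpha}}\to0$, which I would justify by noting that $t\,e^{-ct^{2/\alpha}}\to0$ as $t\to\infty$ for any fixed $c>0$ and $\alpha>0$ (a standard polynomial-beats-exponential comparison, e.g. by taking logarithms). No interchange of limit and integral is needed beyond the elementary monotone convergence of $\gamma(\cdot,x)$ to $\Gamma(\cdot)$, so I do not anticipate any genuine obstacle here.
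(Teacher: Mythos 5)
Your proof is correct and is exactly the argument the paper leaves implicit: the corollary follows from Theorem \ref{maintheo} by letting $r_c \to \infty$, so that $\gamma\big(\tfrac{\alpha}{2}+1,\lambda_a\pi r_c^2\big)\to\Gamma\big(\tfrac{\alpha}{2}+1\big)$ in the first term, while the second term $\tfrac{\lambda_u}{\eta\lambda_a}P_{T_{\max}}e^{-\lambda_a\pi r_c^2}$ vanishes because the stretched-exponential decay in $P_{T_{\max}}^{2/\alpha}$ dominates the linear growth. Nothing is missing; your extra care about the polynomial-versus-exponential comparison is the only step that needed stating.
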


Let $\mathrm{N_a}$ be a random variable that denotes the number of devices within the Voronoi cell of a typical aggregator and its average can be obtained as
\begin{eqnarray}
\label{Na}
\mathbb{E}\left[\mathrm{N_a}\right]=\mathbb{E}_{\Psi_a}^0{\sum\limits_{n}{1_{\bar{X}_n\in V_0}}}=2\pi\lambda_u \int_{r>0}{ \! e^{-\lambda_a \pi r^2} r\, \mathrm{d}r}=\frac{\lambda_u}{\lambda_a},
\end{eqnarray}
where $\mathbb{E}_{\Psi_a}^0$ is the expectation with respect to the Palm probability conditioned on $0 \in \Psi_a$ \cite{BaccelliBook1}.

We denote the number of devices within the Voronoi cell of a typical aggregator and within distance $d$ by $\mathrm{N_a}(d)$, and its mean can be obtained as
\begin{eqnarray}
\label{Nad}
\mathbb{E}\left[\mathrm{N_a}(d)\right]=\mathbb{E}_{\Psi_a}^0{\sum\limits_{n}{1_{\bar{X}_n\in V_0 \cap \mathrm{B}(0,d))}}}=2\pi\lambda_u \int_{0}^{d}{ \! e^{-\lambda_a \pi r^2} r\, \mathrm{d}r}=\frac{\lambda_u}{\lambda_a}(1-e^{-\lambda_a\pi d^2}),
\end{eqnarray}
where $\mathrm{B}(x,\norm{x})$ is the closed ball centered at $x$ and of radius $\norm{x}$.

\begin{lem}\label{meanPR}
	The mean total received power is given by
	\begin{eqnarray}
	\overline{P}_R=\frac{\lambda_u}{\lambda_a}\left(1-\exp(-\pi\lambda_a r_c^2)\right) \overline{P}_T+\frac{\pi\lambda_u}{(\pi\lambda_a)^{-\alpha/2+1} } P_{T_{\max}}\Gamma\left(1-\frac{\alpha}{2},\pi\lambda_ar_c^2\right).
	\end{eqnarray}
\end{lem}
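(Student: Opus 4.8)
The plan is to follow the same route as the proof of Theorem~\ref{maintheo}, replacing the scaled transmit power $P_T(\cdot)/\eta$ by the received power $P_R(\cdot)$ in the mean additive characteristic of the typical aggregator cell. Here $\overline{P}_R$ is the expected sum of the received powers contributed by the transmitters lying in the Voronoi cell $V_0$ of the typical aggregator, so by the same mean additive characteristic used for (\ref{Na}) and (\ref{Nad}) (the typical-cell additive functional of \cite[Ch.~4.5]{BaccelliBook1}),
\[
\overline{P}_R=\mathbb{E}_{\Psi_a}^0\Big[\sum_n P_R(\norm{\bar{X}_n})\,1_{\bar{X}_n\in V_0}\Big]=2\pi\lambda_u\int_{0}^{\infty} P_R(r)\,e^{-\pi\lambda_a r^2}\,r\,\mathrm{d}r,
\]
where $e^{-\pi\lambda_a r^2}$ is the probability that a point at distance $r$ from the nucleus lies in $V_0$, i.e.\ that no other aggregator is closer, which is exactly the weight already used in (\ref{Na}).

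First I would substitute the two regimes of the received power induced by the power-control rule. Since $P_R(r)=\min\{P_{T_{\max}}r^{-\alpha},\overline{P}_T\}$, we have $P_R(r)=\overline{P}_T$ for $r\le r_c$ and $P_R(r)=P_{T_{\max}}r^{-\alpha}$ for $r>r_c$, with the crossover at the critical distance $r_c=(P_{T_{\max}}/\overline{P}_T)^{1/\alpha}$ from Theorem~\ref{maintheo}. Splitting the integral at $r_c$ produces two terms. The near-field term has constant integrand $\overline{P}_T$ and is elementary, giving $\tfrac{\lambda_u}{\lambda_a}\big(1-e^{-\pi\lambda_a r_c^2}\big)\overline{P}_T$ via the same primitive already evaluated in (\ref{Nad}); this is the first summand of the claim.

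The far-field term is where the actual work lies. After the substitution $u=\pi\lambda_a r^2$, the tail integral $\int_{r_c}^{\infty} r^{1-\alpha}e^{-\pi\lambda_a r^2}\,\mathrm{d}r$ becomes $\tfrac12(\pi\lambda_a)^{\alpha/2-1}\int_{\pi\lambda_a r_c^2}^{\infty} u^{-\alpha/2}e^{-u}\,\mathrm{d}u$, and the inner integral is by definition the upper incomplete gamma function $\Gamma\big(1-\tfrac{\alpha}{2},\pi\lambda_a r_c^2\big)$; collecting the prefactors reproduces the second summand. The one point that needs care is that for the physically relevant regime $\alpha>2$ the order $1-\alpha/2$ is non-positive, so this is the complementary (upper) incomplete gamma rather than the lower incomplete gamma $\gamma(\tfrac{\alpha}{2}+1,\cdot)$ appearing in Theorem~\ref{maintheo}. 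The integral nonetheless converges because the lower limit $\pi\lambda_a r_c^2$ is strictly positive, so no singularity at the origin is encountered and $\Gamma(1-\tfrac{\alpha}{2},\cdot)$ is well defined. Summing the two terms yields exactly the stated $\overline{P}_R$, so the main obstacle is the incomplete-gamma bookkeeping rather than any probabilistic subtlety, the latter being inherited verbatim from Theorem~\ref{maintheo}.
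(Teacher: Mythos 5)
Your proposal is correct and follows essentially the same route as the paper's own proof: both reduce $\overline{P}_R$ to the Palm expectation of the additive functional $\sum_n P_R(\bar{X}_n)1_{\bar{X}_n\in V_0}$, evaluate it as $2\pi\lambda_u\int_0^\infty P_R(r)e^{-\pi\lambda_a r^2}r\,\mathrm{d}r$ via Slivnyak/Poisson-law reasoning, split the integral at $r_c$, and apply the substitution $u=\pi\lambda_a r^2$ together with the definition of the (upper) incomplete gamma function. Your additional remark that $\Gamma(1-\tfrac{\alpha}{2},\cdot)$ remains well defined for $\alpha>2$ because the lower integration limit $\pi\lambda_a r_c^2$ is strictly positive is a correct refinement of the bookkeeping the paper leaves implicit.
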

\begin{proof}
Proof is given in Appendix \ref{App:AppendixB}. \qedhere
\end{proof}

We now define the average energy terms as functions of the aggregator and device densities.

\begin{defi}
{\bf Average total energy in a Voronoi cell of an aggregator.} This is the average total energy consumption due to the transmissions within the coverage of the typical aggregator for a fixed payload per device. The average total energy in a typical cell is given by
\begin{align}
\label{EV}
\mathcal{E}_V(\lambda_a)=E_R(\lambda_a)+E_T(\lambda_a).
\end{align}
\end{defi}

\begin{defi}
{\bf Average total energy density.} This is the average total energy consumption per unit area per unit time for a fixed payload per device, and can be found by scaling the average energy density per Voronoi cell by the number of Voronoi cells per area. The average number of Voronoi cells per unit area is given by $\mathbb{E}\left[N_v\right]={\lambda}/{(\lambda_u/\lambda_a+1)}=\lambda_a$.

The dissipated energy density, i.e., the energy consumption per unit area, is defined as
\begin{align}
\label{problem}
\mathcal{E}(\lambda_a)=\lambda_a\mathcal{E}_V(\lambda_a)=\lambda_a\left(E_R(\lambda_a)+E_T(\lambda_a)\right).
\end{align}
\end{defi}

Our objective is to find the optimal value of $\lambda_a$ that minimizes the total energy consumption per unit area. As $\lambda_a$ increases, the total number of aggregators and their total energy consumption to receive data will increase, and as $\lambda_a$ decreases, the distance between the aggregators and the energy requirement of the typical transmitter will increase. We modify (\ref{ER}) as
\begin{eqnarray}
\label{ERandETlambdaa}
E_R(\lambda_a)=\mathbb{E}[\Delta t]\left(P_{\rm LO}+ P_{\rm RX}+P_{\rm O}\right), \quad
E_T(\lambda_a)=\bar{t}_{\rm tx}\left[\mathbb{E}\left[\mathrm{N_a}^2\right]P_{\rm LO}+\mathbb{E}\left[\mathrm{N_a}\right]P_{\rm TX}+P(\lambda_a)\right],
\end{eqnarray}
where we recall $\bar{t}_{\rm rx}=\bar{t}_{\rm tx}\mathbb{E}\left[\mathrm{N_a}\right]=\Delta t$ and $\mathbb{E}\left[\mathrm{N_a}^2\right]=\frac{\lambda_u}{\lambda_a}+\frac{4.5}{3.5}\left(\frac{\lambda_u}{\lambda_a}\right)^2$ is the second moment of the number of devices per aggregator\footnote{We derive $\mathbb{E}\big[\mathrm{N_a}^2\big]$ in Sect. \ref{rate} using an approximate model for the Voronoi cell areas detailed in \cite{JaraiSzabo2008}.}. 

Any aggregator device aggregates data from multiple devices. Hence, the average number of devices per aggregator, i.e., $\mathbb{E}\left[\mathrm{N_a}\right]={\lambda_u}/{\lambda_a}={(1-\gamma)}/{\gamma}$, is always greater than 1. Thus, the optimal fraction of aggregators that minimizes the overall energy density should satisfy $\gamma< 0.5$.

\section{Multi-Stage Energy Density Optimization}
\label{multi-stage}
We extend the model in Sect. \ref{single-stage} to the case where the aggregation process is repeated $K>1$ times before the data is eventually transmitted to the BS. Let $E_{R,k}(\lambda_a(k))$ and $E_{T,k}(\lambda_a(k))$ be the average energies required for reception and transmission in the coverage of an aggregator and $\lambda_a(k)$ be the aggregator density at stage $k$. The average total energy density at stage $k$ is
\begin{eqnarray}
\mathcal{E}_k(\lambda_a(k))=\lambda_a(k)(E_{R,k}(\lambda_a(k))+E_{T,k}(\lambda_a(k))).
\end{eqnarray}  
The initial total density of the devices is given by $\lambda$. The initial process with density $\lambda$ is independently thinned to obtain the set of aggregator processes with density $\lambda_a(k)$ for stage $k$ to form a disjoint set of aggregators for each stage. The density of aggregators at stage $k$ is denoted by $\lambda_a(k)=\lambda\gamma^{k}$, for $k\in \{1, \hdots, K-1\}$. The total density of active devices that can be either transmitter and receiver at stage $k$ is given by $\lambda(k)=\lambda_u(k)+\lambda_a(k)$.

The transmitter device density of the first stage is found by subtracting the total density of aggregators from the initial density of the device process as $\lambda_u(1)=\lambda-\sum\nolimits_{k=1}^{K-1}{\lambda_a(k)}$. By the end of the first stage, the devices with a density of $\lambda_u(1)$ will transmit their payload to the aggregator process with a density of $\lambda_a(1)$. Next, at the second stage, the aggregators of the first stage form the new transmitting device process, i.e., $\lambda_u(2)=\lambda_a(1)$, and these devices transmit to the devices forming the set of second stage aggregator process with density $\lambda_a(2)$. The aggregation process can be extended to $k>2$ stages in a similar manner by letting $\lambda_u(k)=\lambda_a(k-1)$ for $k\geq 2$. In Fig. \ref{fig-model}, we illustrate a three-stage model, where the aggregator process of stage $k$, i.e., $\Psi_a(k)$, is obtained by independently thinning $\Psi$ with probability $0.4^k$, i.e., $\Psi_a(k)$ has a density of $0.4^k\lambda$.

\begin{figure}[t!]
\centering
\includegraphics[width=\columnwidth]{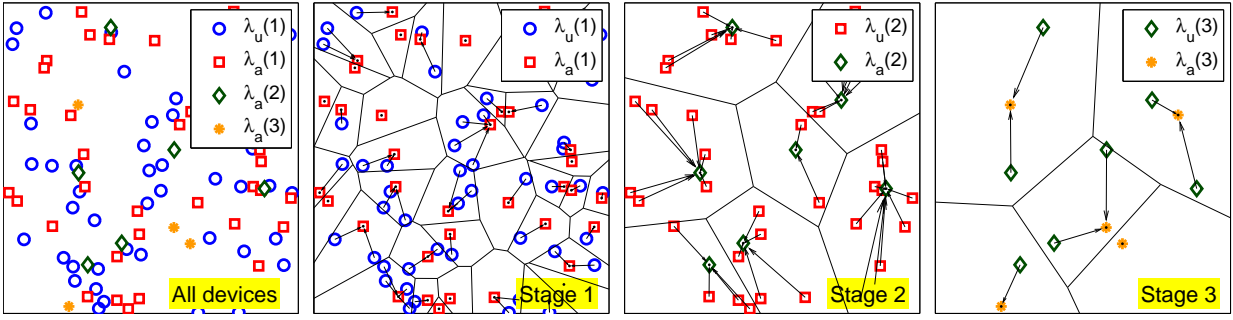}
\caption{\footnotesize{A three-stage aggregation scheme with PPP($\lambda=1$) device process. The aggregator processes (stage 1: $\Box$, stage 2: $\Diamond$, stage 3: $*$) are obtained by thinning the original PPP, and have densities $\lambda_a(k)=0.4^k$ for stages k=1:3. Density for the initial transmitter process (stage 1: $\circ$) is $\lambda_u(1)=\lambda-\sum\nolimits_{k=1}^{3}{\lambda_a(k)}$, and for the transmitter processes for later stages are $\lambda_u(k)=\lambda_a(k-1)$ for $k>2$, respectively. Last stage aggregators correspond to the BSs.}}	
\label{fig-model}
\end{figure}

Generalizing (\ref{ERandETlambdaa}), the average energy consumptions per unit area at stage $k$ are given by
\begin{eqnarray}
\label{energyandcoverage}
&E_{R,k}(\lambda_a(k))=\Delta t(k)\mathcal{P}_{\mathrm{cov}}(k-1)\left(P_{\rm LO}+P_{\rm O}+ P_{\rm RX}\right)\\
\label{TXenergyandcoverage}
&E_{T,k}(\lambda_a(k))=\bar{t}_{\rm tx}(k)\mathcal{P}_{\mathrm{cov}}(k-1)\Big{[}\mathbb{E}\left[\mathrm{N_a}(k)^2\right]P_{\rm LO}+\mathbb{E}\left[\mathrm{N_a}(k)\right]P_{\rm TX}+P(\lambda_a(k))\Big{]},
\end{eqnarray}
where $\Delta t(k)=\bar{t}_{\rm tx}(k)\mathbb{E}\left[\mathrm{N_a}(k)\right]$, and $\mathcal{P}_{\mathrm{cov}}(k)$ is the joint SIR coverage probability up to and including stage $k$ at a given threshold. SIR coverage probability gives the fraction of successful transmissions with respect to an SIR threshold, and the actual average energy at stage $k$ is determined by the fraction of successful transmissions up to stage $k$, i.e., the joint SIR coverage $\mathcal{P}_{\mathrm{cov}}(k-1)$. Hence, the current stage energy consumption scales with the fraction of the successful transmissions. For the devices in outage, no additional energy expenditure is incurred in the consecutive stages. In this paper, we interchangeably use $\lambda_a$ (or $\lambda_a(k)$), $\lambda_u$ (or $\lambda_u(k)$), $\mathrm{N_a}$ (or $\mathrm{N_a}(k)$), and $\Delta t(k)$ (or $\Delta t$) when the stage index is insignificant or clear from context.

\subsection{An Upper Bound on the Energy Density}\label{UBenergydensity}
Let $\pmb{\lambda_a}(K)=\begin{bmatrix}\lambda_a(1) & \lambda_a(2) & \cdots & \lambda_a(K)\end{bmatrix}^\intercal$. Following the aggregation procedure, an upper bound for the average total energy consumption density over $K$ stages is found by ignoring the SIR coverage probability as $\mathcal{E}_{\mathrm U}(\pmb{\lambda_a}(K))=\big. \sum\nolimits_{k=1}^{K}{\mathcal{E}_{k}(\lambda_a(k))} \big\vert_ {\mathcal{P}_{\mathrm{cov}}(k)=1, \forall k}$ that is equivalent to 
\begin{multline}
\label{energyupper}
\mathcal{E}_{\mathrm U}(\pmb{\lambda_a}(K))
=\sum\nolimits_{k=1}^{K}\bar{t}_{\rm tx}(k)\Big[\lambda_u(k)P_C+\lambda_a(k)P(\lambda_a(k))+\Big(\lambda_u(k)+\frac{4.5}{3.5}\frac{\lambda_u(k)^2}{\lambda_a(k)}\Big)P_{\rm LO}\Big],
\end{multline}
where $P_C=P_{\rm TX}+P_{\rm RX}+P_{\rm LO}+P_{\rm O}$. We simplify (\ref{energyupper}) by letting $\bar{t}_{\rm tx}(1)=1$ as this scaling does not affect the result of the optimization formulation.

The slot duration at stage $k$ is $\Delta t(k)=\bar{t}_{\rm rx}(k)$ and the duration that the transmit block is on is $\bar{t}_{\rm tx}(k)$. The relation between the average transmit times are as $\bar{t}_{\rm tx}(k+1)=\bar{t}_{\rm tx}(k)\mathbb{E}\left[\mathrm{N_a}(k)\right]$. The transmit duration at slot $k$ is proportional to the mean total number of bits to be transmitted as $\bar{t}_{\rm tx}(k)\propto \mathbb{E}\left[\prod\nolimits_{i=1}^{k-1}{\mathrm{N_a}(i)}\right]\stackrel{(a)}{\geq} \prod\nolimits_{i=1}^{k-1}{\mathbb{E}\left[\mathrm{N_a}(i)\right]}$. Note that $\mathrm{N_a}(i)$'s for $i\in\{1,\hdots, k-1\}$ are dependent random variables because of the correlation between the Voronoi cell areas of the aggregators at subsequent stages \cite{Yu2013}. In fact, the inequality $(a)$ follows from the observation that $\mathrm{N_a}(i)$'s are positively associated, i.e., when the values of one variable tend to increase as the values of the other variable increase as can be seen from Fig. \ref{fig-cor}.

\begin{figure}[t!]
	\centering
	\includegraphics[width=0.5\linewidth]{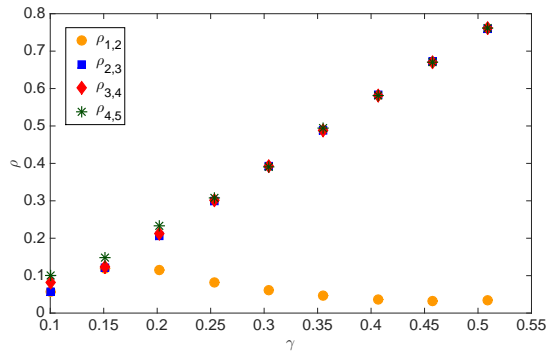}
	\caption{\footnotesize{Numerical correlation over multiple stages.}}
	\label{fig-cor}
\end{figure}

We denote the correlation between the number of devices within the Voronoi cells of the typical aggregators of the consecutive stages $k$ and $k+1$, i.e., the correlation between $\mathrm{N_a}(k)$ and $\mathrm{N_a}(k+1)$, by $\rho_{k,k+1}$, which is expressed as
\begin{eqnarray}
\rho_{k,k+1}=\frac{\mathbb{E}[\mathrm{N_a}(k)\mathrm{N_a}(k+1)]-\mathbb{E}[\mathrm{N_a}(k)]\mathbb{E}[\mathrm{N_a}(k+1)]}{\sqrt{\mathrm{Var}[\mathrm{N_a}(k)]\mathrm{Var}[\mathrm{N_a}(k+1)]}}.
\end{eqnarray}
To the best of our knowledge, there is no analytical model in the literature for the correlation between the number of stages for the proposed data aggregation model. However, we can observe the trend of the correlation numerically. We illustrate how the correlation behaves for $K=4$ number of stages in Fig. \ref{fig-cor}. From Table \ref{table:tab3}, the average number of devices per aggregator is $\mathbb{E}[\mathrm{N_a}(1)]=(1-\sum\nolimits_{k=1}^{K-1}\gamma^k)/\gamma$ for stage 1 and $\mathbb{E}[\mathrm{N_a}(k)]=\gamma^{-1}$ for stages $k=2,\hdots, K$. As $\gamma$ increases, $\mathbb{E}[\mathrm{N_a}(1)]$ decreases with a rate $\gamma^{-2}[1+\sum\nolimits_{k=2}^{K-1}{(k-1)\gamma^k}]$ which is faster than $\mathbb{E}[\mathrm{N_a}(k)]$ for $k>2$ with a decrease rate of $\gamma^{-2}$. Therefore, $\mathrm{N_a}(1)$ dominates the trend of $\rho_{1,2}$ and thus, we observe a different behavior for $\rho_{1,2}$ compared to $\rho_{i,i+1}$ for $i>1$, where the correlation is similar because the device and aggregator densities for higher stages are obtained with the same scaling parameter $\gamma$. Correlation satisfies $|\rho_{k,k+1}|<0.4$ for $\gamma\in (0,0.3)$, which is sufficiently small given large number of stages. Thus, letting $\gamma\in (0,0.3)$ for analytical tractability, and assuming inter-stage independence among $\mathrm{N_a}(i)$'s, we can modify the expectation of the product as
\begin{eqnarray}
\bar{t}_{\rm tx}(k)\propto\prod\nolimits_{i=1}^{k-1}{\mathbb{E}\left[\mathrm{N_a}(i)\right]}=\prod\nolimits_{i=1}^{k-1}{\lambda_u(i)/\lambda_a(i)}.
\end{eqnarray} 

\begin{remark}
If there are $K$ stages in total, at the last stage, all the aggregators of the previous stage (stage $K-1$) with density $\lambda_u(K)=\lambda_a(K-1)$ are transmitting to the BSs with density $\lambda_{\mathrm{BS}}$. Thus, at stage $K$, $\lambda_a(K)=\lambda_{\mathrm{BS}}$. Therefore, the fraction of aggregators at the last stage is
\begin{eqnarray}
\gamma(K)
=\lambda_{\mathrm{BS}}(\lambda\gamma^{K-1}+\lambda_{\mathrm{BS}})^{-1},
\end{eqnarray}
which is incorporated into the numerical analysis to find the minimizing aggregator fraction. 
\end{remark}

\subsection{Average Energy Density with an SIR Coverage Constraint}
\label{MeanEnergyDensity}
This section mainly concentrates on the average energy consumption in the case of SIR outage. The average energy density of the proposed aggregator model is found by incorporating the SIR coverage probability of the typical transmitting device. If the received SIR at any stage $k$ cannot exceed the threshold $T$, then the transmission is unsuccessful. Therefore, devices with a density of $\mathcal{P}_k(T)\lambda_u(k)$ will be successful at stage $k$, i.e., out of $\mathrm{N_a}(k)$ transmitting devices served per aggregator, on average, only $\mathcal{P}_k(T)\mathbb{E}\left[\mathrm{N_a}(k)\right]$ of them will successfully transmit their data, and the rest of the devices will not meet the minimum SIR requirement despite consuming energy. Therefore, the performance of the current stage depends on the previous stages. 

Incorporating the SIR coverage, the average total energy density in (\ref{energyupper}) is modified as
\begin{eqnarray}
\label{SequentialTotalPower}
\mathcal{E}(\pmb{\lambda_a}(K))=\sum\limits_{k=1}^{K}\mathcal{P}_{\mathrm{cov}}(k-1)\bar{t}_{\rm tx}(k)\Big{[}\lambda_u(k)P_C+\lambda_a(k)P(\lambda_a(k))+\big(\lambda_u(k)+\frac{4.5}{3.5}\frac{\lambda_u(k)^2}{\lambda_a(k)}\big)P_{\rm LO}\Big{]},
\end{eqnarray}
where $\mathcal{P}_{\mathrm{cov}}(k)$ denotes the joint SIR coverage probability for $k$ stages and $\mathcal{P}_{\mathrm{cov}}(0)=1$, where we do not consider the effect of SIR coverage probability as all the devices of first stage transmit even though they might not be successful. In Sect. \ref{coverageprobability}, we consider sequential and parallel modes with the inter-stage independence assumption, i.e., the joint SIR coverage becomes $\mathcal{P}_{\mathrm{cov}}(k)=\prod_{i=1}^{k}\mathcal{P}_i(T)$, where the dependence on $T$ is inherent and omitted to keep the notation concise.  

Next, we formulate the energy density optimization problem assuming that SIR coverage probability of each device is unity, i.e., $\mathcal{P}_{\mathrm{cov}}(k)=1$ for $k\in\{1,\hdots, K\}$. Note that $\mathcal{P}_{\mathrm{cov}}(k)$ depends on the communication scheme, the network model parameters, such as device density, fading distribution, and the path loss, and is determined independently from the energy model. This section mainly concentrates on the energy model, while the SIR coverage and rate models -- and hence, $\mathcal{P}_{\mathrm{cov}}(k)$'s for various transmission schemes -- will be discussed extensively in Sects. \ref{coverageprobability}-\ref{rate}. In Sect. \ref{Performance}, we combine the energy results with the coverage characteristics for the successive and parallel modes, to evaluate the energy efficiencies of the proposed models.

\subsection{Energy Density Optimization Problem}
The following optimization problem minimizes the average total energy density over $K$ stages:
\begin{equation}
\begin{aligned}
\label{eq:M2M-opt}
\underset{\pmb{\lambda_a}(K)}{\min} &\quad\mathcal{E}(\pmb{\lambda_a}(K))\\
\textrm{s.t.} 
& \quad \lambda_u(k)= \lambda_a(k-1),\quad k\in \{2,\hdots, K\}\\
&\quad \bar{t}_{\rm tx}(k)=\bar{t}_{\rm tx}(k-1)\mathbb{E}\left[\mathrm{N_a}(k-1)\right],\quad k\in \{2,\hdots, K\}\\
&\quad \lambda_u(1)=\lambda-\lambda\sum\nolimits_{k=1}^{K-1}{\gamma^k},
\end{aligned}
\end{equation} 
where the aggregator density is $\lambda_a(k)= \lambda\gamma^{k}$ for $1\leq k<K$, and $\lambda_a(K)=\lambda_{\rm BS}$ for stage $K$.

For the optimization formulation in (\ref{eq:M2M-opt}), and for $K\geq 2$, we let ${\bar{\gamma}}_K=\sum_{k=1}^{K-1}{\gamma^k}$. The important design parameters of the total energy density are tabulated in Table \ref{table:tab3}. Using (\ref{PAthm}) and the design parameters in Table \ref{table:tab3}, the mean total uplink power for different stages is given by
\begin{eqnarray}
P(\lambda_a(k))=
\begin{cases}
\frac{(1-{\bar{\gamma}}_K) {\gamma}^{-(1+\alpha/2)} \overline{P}_T}{\eta(\pi\lambda)^{\alpha/2}}\Gamma\left(\frac{\alpha}{2}+1,\lambda\gamma\pi r_c^2\right)+\frac{(1-{\bar{\gamma}}_K)P_{T_{\max}}}{\eta\gamma}e^{-\lambda\gamma\pi r_c^2}, \quad k=1\\
\frac{\gamma^{-(1+k\alpha/2)}\overline{P}_T}{\eta(\pi\lambda)^{\alpha/2}}\Gamma\left(\frac{\alpha}{2}+1,\lambda\gamma^k\pi r_c^2\right)+\frac{P_{T_{\max}}}{\eta\gamma}e^{-\lambda\gamma^k\pi r_c^2}, \quad 2\leq k\leq K-1\\
\frac{\lambda\gamma^{K-1} \overline{P}_T}{\eta{\lambda_{\rm BS}}^{1+\alpha/2}\pi^{\alpha/2}}\Gamma\left(\frac{\alpha}{2}+1,\lambda_{\rm BS}\pi r_c^2\right)+\frac{\lambda\gamma^{K-1}P_{T_{\max}}}{\eta\lambda_{\rm BS}}e^{-\lambda_{\rm BS}\pi r_c^2}, \quad k=K
\end{cases}.
\end{eqnarray}

\begin{prop}
\label{simple}
Combining (\ref{SequentialTotalPower}) with the constraints in (\ref{eq:M2M-opt}) for unit SIR coverage probability, we define $c_k(\gamma)$ to be the total energy density at stage $k\in\{1,\hdots,K\}$. Hence, $c_k(\gamma)$ is given by
\begin{eqnarray}
\label{ckequation}
c_k(\gamma)=
\begin{cases}
\lambda\Big[\left(1-{\bar{\gamma}}_K\right)P_C+\gamma P(\lambda\gamma)+\big[ \left(1-{\bar{\gamma}}_K\right)+\frac{4.5}{3.5}\frac{\left(1-{\bar{\gamma}}_K\right)^2}{\gamma}\big]P_{\rm LO}\Big], \quad k=1\\
\lambda\left(1-{\bar{\gamma}}_K\right)\Big[P_C+\gamma P(\lambda\gamma^k)+\big[1+\frac{4.5}{3.5}\gamma^{-1}\big]P_{\rm LO}\Big], \quad 2\leq k\leq K-1\\
\lambda\left(1-{\bar{\gamma}}_K\right)\Big[P_C+\frac{\lambda_{\rm BS}}{\lambda\gamma^{K-1}}P(\lambda_{\rm BS})+\big[1+\frac{4.5}{3.5}\frac{\lambda}{\lambda_{\rm BS}}\gamma^{K-1}\big]P_{\rm LO}\Big], \quad k=K
\end{cases}.
\end{eqnarray}

Incorporating the constraints, and using (\ref{ckequation}), the objective function of (\ref{eq:M2M-opt}) is equivalent to 
\begin{eqnarray}
\label{optimizationsol}
\mathcal{E}(\pmb{\lambda_a}(K))= \sum\limits_{k=1}^{K}\mathcal{P}_{\mathrm{cov}}(k-1)c_k(\gamma).
\end{eqnarray}
Similarly, the energy upper bound $\mathcal{E}_{\mathrm U}(\pmb{\lambda_a}(K))$ can be minimized by evaluating (\ref{optimizationsol}) at $\mathcal{P}_{\mathrm{cov}}(k)=1$ $\forall k$, and taking its derivative with respect to $\gamma$.
\end{prop}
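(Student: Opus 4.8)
The plan is to prove the proposition by direct substitution of the constraint relations of (\ref{eq:M2M-opt}) into the energy density expression (\ref{SequentialTotalPower}), collapsing the per-stage summand into the claimed closed-form coefficient $c_k(\gamma)$, after which the equivalence (\ref{optimizationsol}) is immediate. First I would solve the recurrences for the densities and transmit times in closed form. The aggregator density is prescribed as $\lambda_a(k)=\lambda\gamma^k$ for $1\le k<K$ and $\lambda_a(K)=\lambda_{\rm BS}$, and the constraint $\lambda_u(k)=\lambda_a(k-1)$ gives $\lambda_u(1)=\lambda(1-\bar{\gamma}_K)$ together with $\lambda_u(k)=\lambda\gamma^{k-1}$ for $k\ge 2$. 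Using (\ref{Na}), the mean per-aggregator occupancy is $\mathbb{E}[\mathrm{N_a}(1)]=(1-\bar{\gamma}_K)/\gamma$ and $\mathbb{E}[\mathrm{N_a}(k)]=\gamma^{-1}$ for $2\le k\le K-1$. With the normalization $\bar{t}_{\rm tx}(1)=1$ and the telescoping recurrence $\bar{t}_{\rm tx}(k)=\bar{t}_{\rm tx}(k-1)\mathbb{E}[\mathrm{N_a}(k-1)]$, this yields the explicit form $\bar{t}_{\rm tx}(k)=(1-\bar{\gamma}_K)/\gamma^{k-1}$ for all $k\ge 2$.

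Next I would substitute these closed forms into the bracketed factor of each summand of (\ref{SequentialTotalPower}), multiply by $\bar{t}_{\rm tx}(k)$, and treat the three regimes separately. For $k=1$ the factor $\bar{t}_{\rm tx}(1)=1$ leaves the expression essentially as written, reproducing the first line of (\ref{ckequation}). For $2\le k\le K-1$ the decisive cancellation is $\bar{t}_{\rm tx}(k)\,\lambda_u(k)=\tfrac{1-\bar{\gamma}_K}{\gamma^{k-1}}\,\lambda\gamma^{k-1}=\lambda(1-\bar{\gamma}_K)$, which strips the explicit $k$ dependence from every term except the power argument $P(\lambda\gamma^k)$; the quadratic LO term simplifies through $\lambda_u(k)^2/\lambda_a(k)=\lambda\gamma^{k-2}$, so that $\bar{t}_{\rm tx}(k)\cdot\tfrac{4.5}{3.5}\lambda\gamma^{k-2}=\tfrac{4.5}{3.5}\lambda(1-\bar{\gamma}_K)\gamma^{-1}$, matching the second line. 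For $k=K$ the only structural change is $\lambda_a(K)=\lambda_{\rm BS}$ in place of $\lambda\gamma^K$, so the $P(\cdot)$ and quadratic LO terms pick up the factors $\lambda_{\rm BS}/(\lambda\gamma^{K-1})$ and $(\lambda/\lambda_{\rm BS})\gamma^{K-1}$ respectively, giving the third line. Defining $c_k(\gamma)=\bar{t}_{\rm tx}(k)\cdot[\text{bracket}]$ then establishes (\ref{optimizationsol}) term by term, and the upper-bound claim follows trivially by evaluating at $\mathcal{P}_{\mathrm{cov}}(k)=1$ for all $k$ and differentiating in $\gamma$.

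The computation is entirely routine algebra; the only genuine care required is bookkeeping the powers of $\gamma$ so that the $\bar{t}_{\rm tx}(k)$ factor precisely cancels the stage-dependent density scaling and leaves the stage-independent prefactor $\lambda(1-\bar{\gamma}_K)$ for $k\ge 2$. The main obstacle, such as it is, lies in correctly handling the two boundary stages, since both break the uniform $\gamma$-scaling: the reduced first-stage transmitter density $\lambda_u(1)=\lambda(1-\bar{\gamma}_K)$ produces the distinct first line, and the base-station aggregator density $\lambda_a(K)=\lambda_{\rm BS}$ produces the distinct last line of (\ref{ckequation}).
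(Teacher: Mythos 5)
Your proposal is correct and matches the paper's (implicit) derivation exactly: the paper offers no separate proof for this proposition, since it follows by substituting the closed-form design parameters (its Table of $\lambda_u(k)$, $\lambda_a(k)$, $\mathbb{E}[\mathrm{N_a}(k)]$, $\bar{t}_{\rm tx}(k)$, which are precisely the closed forms you derive from the constraint recurrences) into (\ref{SequentialTotalPower}) and simplifying stage by stage. Your algebra checks out in all three regimes, including the key cancellations $\bar{t}_{\rm tx}(k)\lambda_u(k)=\lambda(1-\bar{\gamma}_K)$ and $\lambda_u(k)^2/\lambda_a(k)=\lambda\gamma^{k-2}$ for $2\leq k\leq K-1$, and the boundary-stage substitutions $\lambda_u(1)=\lambda(1-\bar{\gamma}_K)$ and $\lambda_a(K)=\lambda_{\rm BS}$.
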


\begin{prop}\label{Propk}
$c_k(\gamma)$ is an increasing function of $k$ for $1\leq k\leq K-1$, and $c_K(\gamma)$ is a decreasing function of total number of stages $K$.
\end{prop}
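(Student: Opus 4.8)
The plan is to prove the two assertions separately, using the monotonicity of the mean total uplink power $P(\cdot)$ from Theorem~\ref{maintheo} as the central tool. For the first claim I would isolate the $k$-dependence of $c_k(\gamma)$ in (\ref{ckequation}). On the interior range $2\le k\le K-1$ the prefactor $\lambda(1-{\bar{\gamma}}_K)$ together with the bracketed terms $P_C$ and $\big(1+\tfrac{4.5}{3.5}\gamma^{-1}\big)P_{\rm LO}$ are all independent of $k$, so the entire $k$-dependence sits in the single summand $\gamma P(\lambda\gamma^k)$. Hence on the interior it suffices to show that $P(\lambda\gamma^k)$ is increasing in $k$, and since $\lambda(1-{\bar{\gamma}}_K)\gamma>0$ the monotonicity of $c_k$ follows.

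To establish this I would return to the representation underlying Theorem~\ref{maintheo}: applying the reduced Campbell--Mecke formula to the nearest-aggregator association gives $P(\lambda_a)=\tfrac{\lambda_u}{\eta}\int_0^\infty \min\{P_{T_{\max}},\overline{P}_T r^{\alpha}\}\,e^{-\lambda_a\pi r^2}\,2\pi r\,\mathrm{d}r$, which reproduces (\ref{PAthm}) after splitting the integral at $r_c$. For the interior stages $\lambda_u(k)/\lambda_a(k)=\gamma^{-1}$ is fixed, so the substitution $u=\lambda_a\pi r^2$ yields $P(\lambda_a)=\tfrac{1}{\eta\gamma}\int_0^\infty \min\{P_{T_{\max}},\overline{P}_T (u/(\lambda_a\pi))^{\alpha/2}\}\,e^{-u}\,\mathrm{d}u$, in which $\lambda_a$ enters only inside the min. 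The integrand is pointwise non-increasing in $\lambda_a$, and strictly decreasing on the branch $r<r_c$ (a set of positive measure), so $P$ is strictly decreasing in $\lambda_a$; as $\lambda_a(k)=\lambda\gamma^k$ decreases in $k$, $P(\lambda\gamma^k)$ increases in $k$, proving $c_k(\gamma)$ increasing for $2\le k\le K-1$.

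The delicate part of the first claim is the boundary step $c_1(\gamma)<c_2(\gamma)$, since $c_1$ has a genuinely different functional form through $\lambda_u(1)=\lambda(1-{\bar{\gamma}}_K)$ rather than $\lambda\gamma$. Writing $a=1-{\bar{\gamma}}_K$ and differencing (\ref{ckequation}) directly, the $P_C$ contributions cancel, the oscillator terms leave the positive quantity $\lambda\tfrac{4.5}{3.5}\tfrac{a}{\gamma}{\bar{\gamma}}_K P_{\rm LO}$, and the amplifier terms leave $\lambda\gamma\big(aP(\lambda\gamma^2)-P(\lambda\gamma)\big)$, which can be negative because $a<1$. I expect this sign reconciliation to be the main obstacle: one must show the positive circuit-power term dominates. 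I would control it using the uniform bound $P(\lambda\gamma)\le \tfrac{a P_{T_{\max}}}{\eta\gamma}$ read off the integral above, which reduces the claim to $\tfrac{4.5}{3.5}\tfrac{{\bar{\gamma}}_K}{\gamma}P_{\rm LO}\gtrsim \tfrac{P_{T_{\max}}}{\eta}$; this holds in the M2M regime where the oscillator and receiver powers are not negligible relative to the peak transmit power, and is further aided by the discarded positive term $a\gamma P(\lambda\gamma^2)$.

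For the second claim I would analyze the $K$-dependence of $c_K(\gamma)$. The apparent difficulty is the explicit factor $\lambda_{\rm BS}/(\lambda\gamma^{K-1})$, which grows as $K$ increases; the key observation is that $P(\lambda_{\rm BS})$ itself carries a hidden factor $\lambda_u(K)=\lambda\gamma^{K-1}$ through Theorem~\ref{maintheo}, so writing $P(\lambda_{\rm BS})=\lambda\gamma^{K-1}Q$ with $Q$ independent of $K$ makes $\tfrac{\lambda_{\rm BS}}{\lambda\gamma^{K-1}}P(\lambda_{\rm BS})=\lambda_{\rm BS}Q$ constant in $K$ and cancels the blow-up. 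After this cancellation $c_K(\gamma)=\lambda(1-{\bar{\gamma}}_K)\big[(P_C+\lambda_{\rm BS}Q+P_{\rm LO})+\tfrac{4.5}{3.5}\tfrac{\lambda}{\lambda_{\rm BS}}\gamma^{K-1}P_{\rm LO}\big]$ is a product of two manifestly positive factors, each decreasing in $K$: the prefactor $1-{\bar{\gamma}}_K$ decreases because ${\bar{\gamma}}_{K+1}={\bar{\gamma}}_K+\gamma^{K}$, and the bracket is a positive constant plus a term proportional to $\gamma^{K-1}$. A product of positive decreasing functions is decreasing, which gives the result.
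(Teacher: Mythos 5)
Your treatment of the interior stages ($2\le k\le K-1$) and of the last-stage claim is sound and essentially the paper's argument: the fixed-ratio integral representation showing $P$ decreasing in $\lambda_a$, and the cancellation $\tfrac{\lambda_{\rm BS}}{\lambda\gamma^{K-1}}P(\lambda_{\rm BS})=\lambda_{\rm BS}Q$ with $Q$ independent of $K$, are exactly what the paper's (terser) proof relies on; you are in fact more careful than the paper in also tracking the prefactor $1-{\bar{\gamma}}_K$.

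However, there is a genuine gap in your boundary step $c_1(\gamma)<c_2(\gamma)$, and it is self-inflicted: you treat $P(\lambda\gamma)$ in (\ref{ckequation}) as if its transmitter density were $\lambda$, when in fact $P(\lambda_a(1))$ carries the hidden factor $a=1-{\bar{\gamma}}_K$ through $\lambda_u(1)=\lambda(1-{\bar{\gamma}}_K)$ --- this is visible both in Table \ref{table:tab3} and in the paper's displayed stage-wise formula for $P(\lambda_a(k))$ at $k=1$, and it is the very same hidden-$\lambda_u$ observation you exploit for stage $K$. Writing $P(\lambda\gamma)=a\tilde{P}_1$, where $\tilde{P}_1$ is Theorem \ref{maintheo}'s expression with aggregator density $\lambda\gamma$ and device-to-aggregator ratio $\gamma^{-1}$, the amplifier difference becomes
\begin{equation*}
\lambda\gamma\bigl(aP(\lambda\gamma^2)-P(\lambda\gamma)\bigr)=\lambda\gamma a\bigl(P(\lambda\gamma^2)-\tilde{P}_1\bigr)>0,
\end{equation*}
since both terms correspond to the same ratio $\gamma^{-1}$ but aggregator densities $\lambda\gamma^2<\lambda\gamma$, so positivity follows from precisely the fixed-ratio monotonicity you already established for the interior stages. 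Hence no ``sign reconciliation'' is needed: both the oscillator difference $\lambda\tfrac{4.5}{3.5}\tfrac{a{\bar{\gamma}}_K}{\gamma}P_{\rm LO}$ and the amplifier difference are positive, and $c_1<c_2$ holds unconditionally. Your substitute argument, requiring $\tfrac{4.5}{3.5}\tfrac{{\bar{\gamma}}_K}{\gamma}P_{\rm LO}\gtrsim \tfrac{P_{T_{\max}}}{\eta}$, turns the proposition into a conditional statement about a parameter regime that is nowhere assumed, and it fails outright in regimes the paper explicitly works in (e.g., $P_{T_{\max}}\to\infty$, used in Corollary \ref{maincoveragelemma2} and throughout the numerics). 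As written, your proof therefore establishes only a weaker claim; the fix above closes the gap with tools you already have.
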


\begin{proof}
Note that $\gamma^{-k}$ increases with $k$, and $P(\gamma^{k}\lambda)$ is an increasing function of $k$ for $2\leq k\leq K-1$. Using (\ref{ckequation}), it is easy to show that $c_k(\gamma)<c_{k+1}(\gamma)$ for $1\leq k\leq K-2$ for fixed $\gamma$. Note also that $P(\lambda_{\rm BS})$ decreases, and $P(\lambda_{\rm BS})/\gamma^{K-1}$ does not change with $K$, but $\gamma^{K-1}P_{\rm LO}$ decreases with $K$, hence it is trivial to show that $c_K(\gamma)$ is decreasing in $K$. 
\end{proof}
Proposition \ref{Propk} implies that the energy density is higher for higher order stages. On the other hand, as the number of stages increases, the energy density at the last stage decreases. The number of stages cannot be increased arbitrarily. Next, to capture the energy consumption tradeoff between the stages, we investigate the effect of $\gamma$ on the total energy density.

\begin{remark}\label{rem}
Using (\ref{ckequation}), we can easily observe that $c_{k}(\gamma)$ is decreasing in $\gamma$ for $1\leq k\leq K-1$. Note also that $c_K(\gamma)$ is given by the following sum
\begin{eqnarray}
c_K(\gamma)=\lambda\left(1-{\bar{\gamma}}_K\right)\left[P_C+\frac{\lambda_{\rm BS}}{\lambda\gamma^{K-1}}P(\lambda_{\rm BS})+P_{\rm LO}\right]+\lambda\left(1-{\bar{\gamma}}_K\right)\Big[\frac{4.5}{3.5}\frac{\lambda}{\lambda_{\rm BS}}\gamma^{K-1}\Big]P_{\rm LO}.\nonumber
\end{eqnarray} 
We observe that the first term decays with $\gamma$ as $1-{\bar{\gamma}}_K$ decreases with $\gamma$, and $P(\lambda_{\rm BS})/\gamma^{K-1}$ is invariant to $\gamma$. The second term is not strictly a decreasing function of $\gamma$. On the other hand, it is strictly increasing for $K\geq 2$ and $\gamma\in (0,a)$, where $a\in (0,0.5)$ and decreasing in $K$. As the second term is scaled by $\lambda/\lambda_{\rm BS}$, which is typically very high for the operating regime of M2M, it determines the trend of total energy.
\end{remark}

\begin{table}[t!]\scriptsize
\centering
\setlength{\extrarowheight}{4pt}
  \begin{tabular}{l||lll}
     & $k=1$ & $ 2 \leq k\leq K-1$ & $k=K$ \\ 
    \hline
    \hline
    $\lambda_u(k)$ & $\lambda(1-{\bar{\gamma}}_K)$ & $\lambda\gamma^{k-1}$ & $\lambda\gamma^{K-1}$\\   
    $\lambda_a(k)$ & $\lambda\gamma$ & $\lambda\gamma^k$ & $\lambda_{\rm BS}$\\
    $\mathbb{E}\left[\mathrm{N_a}(k)\right]$ & $(1-{\bar{\gamma}}_K)/\gamma$ & $\gamma^{-1}$ & $\lambda\gamma^{K-1}/\lambda_{\rm BS}$\\ 
    $\bar{t}_{\rm tx}(k)$ & 1 & $(1-{\bar{\gamma}}_K)/\gamma^{k-1}$ &$(1-{\bar{\gamma}}_K)/\gamma^{K-1}$
  \end{tabular}
\caption{\footnotesize{Design parameters.}}
\label{table:tab3}
\end{table}

\begin{prop}
\label{mainprop}
Let $\gamma^{(K)*}$ be the minimizer of the energy optimization problem in (\ref{optimizationsol}) with the assumption\footnote{Note that it is trivial to extend this result for $\mathcal{P}_{\mathrm{cov}}(k)<1$ because $c_k(\gamma)$'s are invariant to the coverage probability.} of $\mathcal{P}_{\mathrm{cov}}(k)=1$ $\forall k$, and $c_K(\gamma)$ is increasing in $\gamma$. Then, $\gamma^{(K)*}$ is a decreasing function of the total number of stages $K$.
\end{prop}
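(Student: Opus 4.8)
The plan is to treat $\mathcal{E}_K(\gamma)\equiv\mathcal{E}(\pmb{\lambda_a}(K))\vert_{\mathcal{P}_{\mathrm{cov}}(k)=1}=\sum_{k=1}^{K}c_k(\gamma)$ as a one-parameter family of objectives indexed by the discrete parameter $K$, and to show by monotone comparative statics that its minimizer moves to the left as $K$ grows. First I would record the sign structure already in hand: by Remark \ref{rem}, $c_k(\gamma)$ is decreasing in $\gamma$ for $1\leq k\leq K-1$, while the standing hypothesis of the proposition gives $c_K'(\gamma)>0$. Thus $\mathcal{E}_K(\gamma)=\sum_{k=1}^{K-1}c_k(\gamma)+c_K(\gamma)$ splits into a decreasing part and a single increasing part. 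Because the decreasing terms carry $\gamma^{-1}$ factors (through the $\frac{4.5}{3.5}\gamma^{-1}$ and $\frac{(1-\bar\gamma_K)^2}{\gamma}$ contributions in (\ref{ckequation})), they force $\mathcal{E}_K\to\infty$ as $\gamma\to 0^+$, whereas the increasing last-stage term grows with $\gamma$; hence $\mathcal{E}_K$ is unimodal with an interior stationary point $\gamma^{(K)*}$ characterized by the first-order condition $\sum_{k=1}^{K-1}c_k'(\gamma^{(K)*})+c_K'(\gamma^{(K)*})=0$, which also fixes uniqueness of the minimizer referred to in the statement.

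To compare the minimizers for $K$ and $K+1$, I would evaluate the $(K+1)$-stage derivative at the $K$-stage optimum. Writing $\Delta_K(\gamma)=\mathcal{E}_{K+1}(\gamma)-\mathcal{E}_K(\gamma)$, we have $\mathcal{E}_{K+1}'(\gamma^{(K)*})=\mathcal{E}_K'(\gamma^{(K)*})+\Delta_K'(\gamma^{(K)*})=\Delta_K'(\gamma^{(K)*})$, so it suffices to show $\Delta_K'(\gamma)\geq 0$ on the relevant range, i.e.\ that $\mathcal{E}_K$ has increasing differences in $(\gamma,K)$. A nonnegative value of $\mathcal{E}_{K+1}'$ at $\gamma^{(K)*}$ places this point on the increasing (right) branch of the unimodal curve $\mathcal{E}_{K+1}$, forcing $\gamma^{(K+1)*}\leq\gamma^{(K)*}$; this is exactly Topkis's conclusion for minimization of an objective with increasing differences, and since both $\gamma$ and $K$ range over chains the discreteness of $K$ causes no difficulty. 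As a cross-check I would also confirm the direction through the implicit-function form $d\gamma^{(K)*}/dK=-(\partial_K\partial_\gamma\mathcal{E})/(\partial_\gamma^2\mathcal{E})$, where the denominator is positive by the second-order condition, so that the sign again reduces to that of the same cross-difference.

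The bookkeeping behind $\Delta_K(\gamma)$ is where the real work lies, and I expect it to be the main obstacle. Passing from $K$ to $K+1$ stages does three things simultaneously: it inserts one additional middle stage (a term decreasing in $\gamma$), it converts the former last-stage term into an ordinary middle-stage term while introducing a fresh last stage, and it replaces the common prefactor $1-\bar\gamma_K$ by $1-\bar\gamma_{K+1}=1-\bar\gamma_K-\gamma^{K}$ throughout. I would therefore decompose $\Delta_K$ along the explicit branches of (\ref{ckequation}), collecting the newly inserted middle-stage contribution, the change in the last-stage term, and the common-prefactor correction. The decisive ingredient is Remark \ref{rem}: the trend of $c_K$ is governed by the $\lambda/\lambda_{\rm BS}$-scaled piece $\frac{4.5}{3.5}\tfrac{\lambda}{\lambda_{\rm BS}}\lambda(1-\bar\gamma_K)\gamma^{K-1}P_{\rm LO}$, and since $\lambda/\lambda_{\rm BS}$ is very large in the M2M regime, this term dominates $\Delta_K'$. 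Combining this with Proposition \ref{Propk} (that $c_K$ decreases in $K$ and $c_k$ increases in $k$) and with the hypothesis $c_K'(\gamma)>0$, I would argue that the $\lambda/\lambda_{\rm BS}$-scaled last-stage variation sets the sign of $\Delta_K'$ and renders it nonnegative on $\gamma\in(0,a)$, completing the comparative-statics step and hence the proof.
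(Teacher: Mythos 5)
Your comparative-statics machinery is set up in the right direction: if the cross-difference $\Delta_K(\gamma)=\mathcal{E}_{K+1}(\gamma)-\mathcal{E}_K(\gamma)$ were nondecreasing in $\gamma$ on the admissible range, then any minimizer of $\mathcal{E}_{K+1}$ would indeed lie weakly to the left of any minimizer of $\mathcal{E}_K$ (and this conclusion does not even need the unimodality you assert --- which, incidentally, you never prove, and which does not follow from ``sum of decreasing terms plus one increasing term''). The genuine gap is that the decisive step, $\Delta_K'(\gamma)\geq 0$, is never established, and the route you sketch toward it produces the \emph{opposite} sign. Keeping your own (correct) bookkeeping, the $\lambda/\lambda_{\rm BS}$-scaled part of $\Delta_K$ is $\tfrac{4.5}{3.5}\tfrac{\lambda^2}{\lambda_{\rm BS}}P_{\rm LO}\left[h_{K+1}(\gamma)-h_K(\gamma)\right]$ with $h_K(\gamma)=(1-\bar{\gamma}_K)\gamma^{K-1}$, and expanding gives $h_{K+1}(\gamma)-h_K(\gamma)=2\gamma^{K}-\gamma^{K-1}-\gamma^{2K-1}-\gamma^{2K}$. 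For $K=2$ its derivative is $-1+4\gamma-3\gamma^{2}-4\gamma^{3}$, which is strictly negative on all of $(0,0.5)$ (its maximum there is about $-0.13$); the same happens for larger $K$, where the derivative behaves like $-(K-1)\gamma^{K-2}$ as $\gamma\to 0^{+}$. So the very term you claim ``sets the sign of $\Delta_K'$ and renders it nonnegative'' is strictly decreasing in $\gamma$ throughout the operating range. The subtlety you missed is that Remark \ref{rem} controls the signs of $c_K'$ and $c_{K+1}'$ \emph{individually}, but your step needs the sign of their (corrected) difference: the two large $\lambda/\lambda_{\rm BS}$-scaled increasing pieces nearly cancel, and what survives has the wrong sign. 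In the very asymptotic regime $\lambda\gg\lambda_{\rm BS}$ that you invoke for dominance, your sufficient condition therefore fails, and Topkis applied to it would yield the reverse monotonicity.

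The paper's proof (Appendix C) avoids this trap by never computing any derivative or cross-difference. It implicitly identifies the first $K$ terms of the $(K+1)$-stage objective with $\mathcal{E}(\pmb{\lambda_a}(K))$ and runs an exchange/contradiction argument: if $\gamma^{(K+1)*}>\gamma^{(K)*}$, then replacing $\gamma^{(K+1)*}$ by $\gamma^{(K)*}$ cannot increase the partial sum $\sum_{k=1}^{K}c_k$, because $\gamma^{(K)*}$ is by definition its global minimizer (inequality (\ref{gammasum})), while it strictly decreases $c_{K+1}$, because the last-stage term is increasing in $\gamma$ by hypothesis; this contradicts the optimality of $\gamma^{(K+1)*}$. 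Note that under this same identification your cross-difference collapses to $\Delta_K=c_{K+1}$, so ``increasing differences'' becomes literally the proposition's hypothesis and your Topkis step goes through trivially --- i.e., your argument works exactly when it coincides with the paper's, just phrased in comparative-statics language. With the honest bookkeeping you insist on (prefactor change $\bar{\gamma}_K\to\bar{\gamma}_{K+1}$, conversion of the old last stage into a middle stage, fresh last stage), every correction term in $\Delta_K'$ that I can identify is also negative, so the sign condition is not salvageable by ``collecting'' them. To repair your write-up you must either adopt the paper's decomposition explicitly and cite the hypothesis on $c_{K+1}$, or replace the pointwise derivative condition by the global-optimality ingredient the paper actually uses; as written, the central step of your proposal is both unproven and, by the computation above, false.
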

\begin{proof}
See Appendix \ref{App:AppendixC}.
\qedhere
\end{proof}

This section concentrates on the energy efficiency of M2M assuming perfect coverage. We now study the SIR coverage of the hierarchical M2M model for different transmission schemes.

\section{SIR Coverage Probability}
\label{coverageprobability}
The data aggregation models in Sects. \ref{single-stage} and \ref{multi-stage} do not incorporate the fact that the transmissions are not always successful. In this section, we generalize this to a coverage-based model, where the transmission is successful if the SIR of the device is above a threshold. We derive the probability of coverage in the uplink for the proposed data aggregation model in Sects. \ref{single-stage} and \ref{multi-stage}. We assume that the tagged aggregator and tagged devices experience Rayleigh fading, and we ignore shadowing. The transmit power at the typical node at distance $r$ from its aggregator is $P_T(r)=\min\{P_{T_{\max}}, \overline{P}_T r^{\alpha}\}$, and the received power is $P_R(r)=g P_T(r) r^{-\alpha}$, where $g \sim \exp(1)$. 

Orthogonal access is assumed in the uplink and at any given resource block, there is at most one device transmitting in each cell. Let $\Psi_u$ be the point process denoting the location of devices transmitting on the same resource as the typical device. The uplink SIR of the typical device $x\in\Psi_u$ located at distance $\norm{x}$ from its associated BS (aggregator)  on a given resource block is
\begin{eqnarray}
\mathrm{SIR}=\left.\frac{P_R(r)}{I_r}\right\vert_{r=\norm{x}}=\frac{g \min\{P_{T_{\max}}\norm{x}^{-\alpha}, \overline{P}_T \}}{\sum\limits_{z\in \Psi_u\backslash \{x\}}g_z \min\{P_{T_{\max}}, \overline{P}_T R_z^{\alpha} \} D_z^{-\alpha}},
\end{eqnarray}
where $R_z$ and $D_z$ denote the distance between the transmitter aggregator pair and the distance between the interferer and the typical aggregator, respectively. The random variable $g_z\sim g$ is the small-scale iid fading parameter due to interferer $z$.

\begin{assu}\label{distanceassumption}
	The actual distribution of $R_z$ is very hard to characterize due to the randomness both in the area of the Voronoi cell of the aggregator and in the number of the devices it serves. Therefore, we approximate it by the distance of a randomly chosen point in $\mathbb{R}^2$ to its closest aggregator and hence it can be approximated by a Rayleigh distribution \cite{Novlan2013}:
	\begin{eqnarray}
	\label{distancedensity}
	f_{R_z}(r_z)=({r_z}/{\sigma^2}) e^{-r_z^2/2\sigma^2},\quad r_z\geq 0, \quad \sigma=\sqrt{1/(2\pi\lambda_a)}.
	\end{eqnarray}
\end{assu}

The uplink SIR coverage of the proposed system model is given by the following Lemma.
\begin{lem}
\label{maincoveragelemma}
{\bf The uplink SIR coverage with truncated power control:} 
With truncated power control and with minimum average path loss association\footnote{In ``minimum average path loss association", a device associates to an aggregator with minimum path loss averaged over the small-scale fading, i.e., the aggregator has minimum $R_z^{\alpha}$ product among all aggregators.}, the uplink SIR coverage is given by
\begin{eqnarray}
\label{uplink-SIR-truncated}
\mathcal{P}(T)= p\mathcal{L}_{I_r}(T\overline{P}_T^{-1})+\int_{r_c}^{\infty}{\!\mathcal{L}_{I_r}(Tr^{\alpha}P_{T_{\max}}^{-1})f_R(r)\, \mathrm{d}r},
\end{eqnarray}
where $p=1-\exp\big(-\pi\lambda_a r_c^2\big)$, $R$ is Rayleigh distributed with parameter $\sigma=\sqrt{1/(2\pi\lambda_a)}$, and 
\begin{eqnarray}
\label{laplacetransformgeneral}
\mathcal{L}_{I_r}(s)&\approx&
\exp\Big(- \frac{2s}{\alpha-2}\Big((1-e^{-\pi\lambda_ar_c^2}(1+\pi\lambda_ar_c^2))\overline{P}_TC_{\alpha}(s\overline{P}_T)\nonumber\\
&+&(1-p)\pi\lambda_aP_{T_{\max}}\mathbb{E}_{R_z}\left[\left.R_z^{2-\alpha}C_{\alpha}(sP_{T_{\max}}R_z^{-\alpha})\right\vert R_z>r_c\right]\Big)\Big)
\end{eqnarray}
denotes the Laplace transform of the interference where $C_{\alpha}(T)={_2F_1}\Big(1,1-\frac{2}{\alpha},2-\frac{2}{\alpha},-T\Big)$, and $_2F_1$ is the Gauss-Hypergeometric function.
\end{lem}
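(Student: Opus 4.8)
The plan is to follow the standard Rayleigh-fading coverage argument, splitting on the critical distance $r_c$ to absorb the truncated power control, and then to obtain the interference Laplace transform through a probability generating functional (PGFL) computation on the interferer process. First I would write $\mathcal{P}(T)=\mathbb{P}(\mathrm{SIR}>T)$ and condition on the tagged link distance $R=r$. Since the desired-signal fading is $g\sim\exp(1)$ and the useful received power excluding $g$ equals $\min\{P_{T_{\max}}r^{-\alpha},\overline{P}_T\}$, the event $\{\mathrm{SIR}>T\}$ reads $\{g>TI_r/\min\{P_{T_{\max}}r^{-\alpha},\overline{P}_T\}\}$, whose conditional probability is exactly $\mathbb{E}[e^{-sI_r}]=\mathcal{L}_{I_r}(s)$ with $s=T/\min\{P_{T_{\max}}r^{-\alpha},\overline{P}_T\}$. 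The truncation then splits the argument naturally: for $r\le r_c$ the useful power is the constant $\overline{P}_T$, so $s=T\overline{P}_T^{-1}$, while for $r>r_c$ it is $P_{T_{\max}}r^{-\alpha}$, so $s=Tr^{\alpha}P_{T_{\max}}^{-1}$. Averaging over the Rayleigh link distance $R$ (Assumption~\ref{distanceassumption}, with $f_R(r)=2\pi\lambda_a r e^{-\pi\lambda_a r^2}$), the $r\le r_c$ branch factors out the constant Laplace transform times $\int_0^{r_c}f_R(r)\,\mathrm{d}r=p=1-e^{-\pi\lambda_a r_c^2}$, whereas the $r>r_c$ branch remains under the integral; this reproduces the claimed coverage expression.

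Next I would establish the Laplace transform itself. After averaging each interferer's exponential fading $g_z$, the per-interferer factor is $(1+sP(R_z)D_z^{-\alpha})^{-1}$ with $P(R_z)=\min\{P_{T_{\max}},\overline{P}_T R_z^{\alpha}\}$. Modelling the simultaneously active interferers as a PPP of density $\lambda_a$ (one transmitting device per aggregator cell) that is independent of the tagged link, the PGFL gives $\mathcal{L}_{I_r}(s)=\exp\bigl(-2\pi\lambda_a\,\mathbb{E}_{R_z}\int (1-(1+sP(R_z)D^{-\alpha})^{-1})D\,\mathrm{d}D\bigr)$. The essential geometric input is the exclusion constraint $D_z\ge R_z$: since an interferer associates to its own nearest aggregator, the tagged aggregator cannot be closer than its serving one, so the radial integral runs from $D=R_z$ rather than from $0$.

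I would then evaluate the radial integral. Substituting $D=R_z v$ converts it into $R_z^2\int_1^\infty \frac{c}{v^{\alpha}+c}v\,\mathrm{d}v$ with $c=sP(R_z)R_z^{-\alpha}$, and a termwise hypergeometric-series match verifies the identity $\int_1^\infty \frac{c}{v^{\alpha}+c}v\,\mathrm{d}v=\frac{c}{\alpha-2}\,{}_2F_1(1,1-\tfrac{2}{\alpha},2-\tfrac{2}{\alpha},-c)=\frac{c}{\alpha-2}C_\alpha(c)$, valid for $\alpha>2$. Splitting the $R_z$-expectation at $r_c$ closes the argument: for $R_z\le r_c$ the power is $\overline{P}_T R_z^{\alpha}$, so $c=s\overline{P}_T$ is constant and the Jacobian factor is averaged via $\pi\lambda_a\mathbb{E}[R_z^2\mathbf{1}(R_z\le r_c)]=1-e^{-\pi\lambda_a r_c^2}(1+\pi\lambda_a r_c^2)$; for $R_z>r_c$ the power is $P_{T_{\max}}$, so $c=sP_{T_{\max}}R_z^{-\alpha}$, the factor $R_z^2\frac{c}{\alpha-2}C_\alpha(c)$ becomes $\frac{sP_{T_{\max}}}{\alpha-2}R_z^{2-\alpha}C_\alpha(sP_{T_{\max}}R_z^{-\alpha})$, and $\mathbb{E}[\cdot\,\mathbf{1}(R_z>r_c)]=(1-p)\mathbb{E}[\cdot\mid R_z>r_c]$. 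Collecting both contributions and pulling out $\tfrac{2s}{\alpha-2}$ gives the stated form of $\mathcal{L}_{I_r}(s)$.

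The main obstacle, and the reason the Laplace transform is only approximate, is the interference model rather than any computation: the active interferers are neither an exact PPP nor independent of the tagged cell, and the constraint $D_z\ge R_z$ treats the correlated pair $(R_z,D_z)$ as decoupled with $R_z$ an independent Rayleigh mark (Assumption~\ref{distanceassumption}). I expect the delicate point to be justifying this decoupling and the independence of $I_r$ from the tagged link distance; by contrast, the fading average, the PGFL application, the hypergeometric integral, and the moment bookkeeping are routine once the exclusion region is fixed.
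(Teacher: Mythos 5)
Your proposal is correct and follows essentially the same route as the paper's proof: conditioning on the tagged link distance and splitting at $r_c$ for the coverage expression, modelling the active interferers as a PPP of density $\lambda_a$ with the exclusion constraint $D_z\geq R_z$ (the paper's $R_z^{\alpha}<y^{\alpha}$ condition, following the Singh et al.\ approximation), applying the PGFL, and evaluating the radial integral via the same Gauss-hypergeometric identity with the $R_z$-expectation split at $r_c$ using $R_z^2\sim\exp(\pi\lambda_a)$. Your substitution $D=R_z v$ versus the paper's $t=(y/R_z)^2$ and your up-front Fubini exchange (expectation over $R_z$ outside the radial integral) are only cosmetic differences, and you correctly flag the same approximations (PPP modelling and independent Rayleigh marks) that make the result approximate.
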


\begin{proof}
See Appendix \ref{App:AppendixD}.
\qedhere
\end{proof}

\begin{cor}
	\label{maincoveragelemma2}
	{\bf The uplink SIR coverage with open loop power control \cite{Singh2014}:} 
	With open loop power control and with minimum average path loss association, the uplink SIR coverage is
	\begin{eqnarray}
	\label{uplink-SIR}
	\lim_{P_{T_{\max}} \to \infty}\mathcal{P}(T)\approx\exp{\Big(-\frac{2 T}{\alpha-2}C_{\alpha}(T)\Big)}.
	\end{eqnarray}
\end{cor}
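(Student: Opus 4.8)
The plan is to derive Corollary \ref{maincoveragelemma2} purely as the $P_{T_{\max}}\to\infty$ limit of the truncated-power-control expression in Lemma \ref{maincoveragelemma}, so that no new integral has to be evaluated. The starting observation is that the critical distance $r_c=(P_{T_{\max}}/\overline{P}_T)^{1/\alpha}$ diverges as $P_{T_{\max}}\to\infty$. Consequently $p=1-\exp(-\pi\lambda_a r_c^2)\to 1$, and since $R$ is Rayleigh with $\Pr(R>r_c)=e^{-\pi\lambda_a r_c^2}\to 0$, the second (integral) term of (\ref{uplink-SIR-truncated}) is bounded above by $\Pr(R>r_c)$ because $\mathcal{L}_{I_r}(\cdot)\le 1$, and therefore vanishes. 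Thus $\lim \mathcal{P}(T)=\lim \mathcal{L}_{I_r}(T\overline{P}_T^{-1})$, and it remains only to pass to the limit inside the closed form (\ref{laplacetransformgeneral}) evaluated at $s=T\overline{P}_T^{-1}$.

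Next I would evaluate the two contributions to the exponent of $\mathcal{L}_{I_r}(T\overline{P}_T^{-1})$ separately. For the first contribution, substituting $s\overline{P}_T=T$ turns it into $-\frac{2T}{\alpha-2}\,(1-e^{-\pi\lambda_a r_c^2}(1+\pi\lambda_a r_c^2))\,C_{\alpha}(T)$; since $e^{-\pi\lambda_a r_c^2}(1+\pi\lambda_a r_c^2)\to 0$ as $r_c\to\infty$, this converges to exactly $-\frac{2T}{\alpha-2}C_{\alpha}(T)$, which is the claimed exponent in (\ref{uplink-SIR}).

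The main obstacle is controlling the second contribution, $(1-p)\pi\lambda_a P_{T_{\max}}\,\mathbb{E}_{R_z}[R_z^{2-\alpha}C_{\alpha}(sP_{T_{\max}}R_z^{-\alpha})\mid R_z>r_c]$, in which the prefactor $(1-p)=e^{-\pi\lambda_a r_c^2}$ vanishes while the factor $P_{T_{\max}}$ grows. To resolve this competition I would note that conditioning on $R_z>r_c=(P_{T_{\max}}/\overline{P}_T)^{1/\alpha}$ forces $sP_{T_{\max}}R_z^{-\alpha}<s\overline{P}_T=T$, so by monotonicity of $C_{\alpha}$ the hypergeometric factor is bounded by the constant $C_{\alpha}(T)$, and for $\alpha>2$ the factor $R_z^{2-\alpha}$ is bounded by $r_c^{2-\alpha}$ on the conditioning event. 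Hence the whole contribution is $O(P_{T_{\max}}\,r_c^{2-\alpha}e^{-\pi\lambda_a r_c^2})=O(P_{T_{\max}}^{2/\alpha}e^{-\pi\lambda_a (P_{T_{\max}}/\overline{P}_T)^{2/\alpha}})$, where the Gaussian-type decay in $r_c$ dominates the polynomial growth, so the term tends to $0$. Combining the three limits and exponentiating yields $\lim_{P_{T_{\max}}\to\infty}\mathcal{P}(T)=\exp(-\frac{2T}{\alpha-2}C_{\alpha}(T))$; as a consistency check, this open-loop limit is free of $\overline{P}_T$, $\lambda_a$ and $r_c$, as expected for unbounded power control.
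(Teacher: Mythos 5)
Your proposal takes essentially the same route as the paper: the paper's own proof of Corollary \ref{maincoveragelemma2} is exactly the one-line observation that $r_c=(P_{T_{\max}}/\overline{P}_T)^{1/\alpha}\to\infty$ and $p\to 1$, and you are simply filling in the details of which terms of (\ref{uplink-SIR-truncated}) and (\ref{laplacetransformgeneral}) survive that limit; your handling of the vanishing tail integral (bounded by $\Pr(R>r_c)$ since $\mathcal{L}_{I_r}\le 1$) and of the competition between the factor $(1-p)=e^{-\pi\lambda_a r_c^2}$ and the growing factor $P_{T_{\max}}$ is sound.

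One justification step is stated backwards, although it does not break the argument. The function $C_{\alpha}$ is \emph{decreasing} in its argument, since $C_{\alpha}(s)=\frac{\alpha-2}{2}\int_{1}^{\infty}{(s+t^{\alpha/2})^{-1}\,\mathrm{d}t}$ with $C_{\alpha}(0)=1$; hence on the event $R_z>r_c$, where $sP_{T_{\max}}R_z^{-\alpha}<T$, monotonicity gives $C_{\alpha}(sP_{T_{\max}}R_z^{-\alpha})\ge C_{\alpha}(T)$, the opposite of the bound you claim. The repair is immediate: use the uniform bound $C_{\alpha}(\cdot)\le C_{\alpha}(0)=1$ (the paper itself proves $C_{\alpha}(s)<1$ for $\alpha>2$ in its lower-bound corollary for the SIR coverage), after which your estimate that the second contribution is $O\big(P_{T_{\max}}^{2/\alpha}e^{-\pi\lambda_a(P_{T_{\max}}/\overline{P}_T)^{2/\alpha}}\big)\to 0$ goes through verbatim, and the limit $\exp\big(-\frac{2T}{\alpha-2}C_{\alpha}(T)\big)$ follows as you conclude.
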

\begin{proof}
	As $P_{T_{\max}} \to \infty$, $r_c=\big({P_{T_{\max}}}/{\overline{P}_T}\big)^{1/\alpha} \to \infty$ and $p \to 1$, which yields the final result.
\end{proof}

\begin{cor}
The uplink SIR coverage is lower bounded by
\begin{eqnarray}
\mathcal{P}^{\rm LB}(T)= p\mathcal{L}_{I_r}^{\rm LB}(T\overline{P}_T^{-1})+\int_{r_c}^{\infty}{\!\mathcal{L}_{I_r}^{\rm LB}(Tr^{\alpha}P_{T_{\max}}^{-1})f_R(r)\, \mathrm{d}r},
\end{eqnarray}	
where $\mathcal{L}_{I_r}^{\rm LB}(s)$ is a lower bound for the Laplace transform of the interference and given as
\begin{eqnarray}
\mathcal{L}_{I_r}^{\rm LB}(s)&\approx&\exp\Big(- \frac{2s}{\alpha-2}\Big((1-e^{-\pi\lambda_ar_c^2}(1+\pi\lambda_ar_c^2))\overline{P}_TC_{\alpha}(s\overline{P}_T)\nonumber\\
&+&P_{T_{\max}}(\pi\lambda_a)^{\alpha/2}\Gamma(2-2/\alpha,\pi\lambda_ar_c^2)\Big)\Big).
\end{eqnarray}
\end{cor}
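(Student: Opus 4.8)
The plan is to reduce the corollary to a single pointwise comparison of Laplace transforms. Observe that $\mathcal{P}^{\rm LB}(T)$ has exactly the structure of $\mathcal{P}(T)$ in Lemma \ref{maincoveragelemma}, with $\mathcal{L}_{I_r}$ replaced everywhere by $\mathcal{L}_{I_r}^{\rm LB}$. Since $p\geq 0$ and $f_R(r)\geq 0$, both the leading term $p\,\mathcal{L}_{I_r}(T\overline{P}_T^{-1})$ and the integrand $\mathcal{L}_{I_r}(Tr^{\alpha}P_{T_{\max}}^{-1})f_R(r)$ are nondecreasing in the value of the Laplace transform. Hence, if I can establish the pointwise inequality $\mathcal{L}_{I_r}^{\rm LB}(s)\leq\mathcal{L}_{I_r}(s)$ for every $s>0$, then adding the first term to the integral over $[r_c,\infty)$ immediately gives $\mathcal{P}^{\rm LB}(T)\leq\mathcal{P}(T)$, which is the claim.

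For the pointwise bound I would use that both $\mathcal{L}_{I_r}(s)$ and $\mathcal{L}_{I_r}^{\rm LB}(s)$ have the form $\exp\big(-\frac{2s}{\alpha-2}\,\Xi\big)$ with a nonnegative exponent argument $\Xi$; since $\alpha>2$, the map $\Xi\mapsto\exp(-\frac{2s}{\alpha-2}\Xi)$ is strictly decreasing for $s>0$, so a lower bound on the Laplace transform is equivalent to an upper bound on $\Xi$. The first summand of $\Xi$, namely $(1-e^{-\pi\lambda_ar_c^2}(1+\pi\lambda_ar_c^2))\overline{P}_TC_{\alpha}(s\overline{P}_T)$, is literally identical in the two expressions, so the whole problem collapses to upper-bounding the second summand $(1-p)\pi\lambda_aP_{T_{\max}}\mathbb{E}_{R_z}[R_z^{2-\alpha}C_{\alpha}(sP_{T_{\max}}R_z^{-\alpha})\mid R_z>r_c]$.

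The key estimate, and the only genuinely nonroutine step, is the uniform bound $C_{\alpha}(T)\leq 1$ for all $T\geq 0$. I would establish it from the Euler integral representation of the Gauss hypergeometric function: with $a=1$, $b=1-\frac{2}{\alpha}$, $c=2-\frac{2}{\alpha}$ one has $c-b=1$, so $C_{\alpha}(T)={}_2F_1(1,1-\tfrac{2}{\alpha},2-\tfrac{2}{\alpha},-T)=(1-\tfrac{2}{\alpha})\int_0^1\frac{t^{-2/\alpha}}{1+Tt}\,\mathrm{d}t$, and bounding $\frac{1}{1+Tt}\leq 1$ for $T\geq 0$ gives $C_{\alpha}(T)\leq(1-\tfrac{2}{\alpha})\int_0^1 t^{-2/\alpha}\,\mathrm{d}t=1$. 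Applying $C_{\alpha}(sP_{T_{\max}}R_z^{-\alpha})\leq 1$ inside the conditional expectation strips off the hypergeometric factor and leaves $(1-p)\pi\lambda_aP_{T_{\max}}\mathbb{E}_{R_z}[R_z^{2-\alpha}\mid R_z>r_c]$.

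It then remains to evaluate this in closed form. Because $\mathbb{P}(R_z>r_c)=e^{-\pi\lambda_ar_c^2}=1-p$, the prefactor $(1-p)$ cancels the conditioning and turns the term into the tail integral $\pi\lambda_aP_{T_{\max}}\int_{r_c}^{\infty}r^{2-\alpha}f_{R_z}(r)\,\mathrm{d}r$ with $f_{R_z}(r)=2\pi\lambda_a r\,e^{-\pi\lambda_ar^2}$. The substitution $u=\pi\lambda_ar^2$ reduces it to $P_{T_{\max}}(\pi\lambda_a)^{\alpha/2}\int_{\pi\lambda_ar_c^2}^{\infty}u^{1-\alpha/2}e^{-u}\,\mathrm{d}u=P_{T_{\max}}(\pi\lambda_a)^{\alpha/2}\Gamma(2-\tfrac{\alpha}{2},\pi\lambda_ar_c^2)$, the incomplete-gamma term appearing in the stated $\mathcal{L}_{I_r}^{\rm LB}(s)$. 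Substituting this upper bound for the second summand yields $\Xi^{\rm LB}\geq\Xi$, hence $\mathcal{L}_{I_r}^{\rm LB}(s)\leq\mathcal{L}_{I_r}(s)$ and the corollary follows. I expect the hypergeometric bound $C_{\alpha}\leq 1$ to be the main (if modest) obstacle; the monotonicity reduction and the gamma-integral evaluation are routine.
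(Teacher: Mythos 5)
Your proposal is correct and takes essentially the same route as the paper: the paper's proof also reduces everything to the uniform bound $C_{\alpha}(\cdot)\le 1$ followed by a closed-form evaluation of the tail term $(1-p)\,\mathbb{E}_{R_z}\!\left[\left.R_z^{2-\alpha}\right\vert R_z>r_c\right]$; the only cosmetic difference is that you prove $C_{\alpha}\le 1$ from the Euler integral representation of ${}_2F_1$, while the paper uses the equivalent representation $C_{\alpha}(s)=\frac{\alpha/2-1}{s}\int_{1}^{\infty}\bigl(1+s^{-1}t^{\alpha/2}\bigr)^{-1}\,\mathrm{d}t$ arising in its Laplace-transform derivation (the two are related by the substitution $t\mapsto t^{-\alpha/2}$, so the bounds are literally the same). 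One caveat: your evaluation yields $\Gamma\bigl(2-\tfrac{\alpha}{2},\pi\lambda_a r_c^2\bigr)$, which is the correct value of the integral, but it does not match the $\Gamma\bigl(2-\tfrac{2}{\alpha},\pi\lambda_a r_c^2\bigr)$ appearing in the statement; that exponent is a typo in the paper (its own proof computes the same integral and mislabels the result), so you should flag the discrepancy explicitly rather than assert, as you do, that your expression is the one stated.
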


\begin{proof}		
Noting that $C_{\alpha}(s)=\frac{\frac{\alpha}{2}-1}{s}\int_{1}^{\infty}{ \! \frac{1}{1+s^{-1} t^{\alpha/2}}  \,\mathrm{d}t}<\frac{\frac{\alpha}{2}-1}{s}\int_{1}^{\infty}{ \! \frac{t^{-\alpha/2}}{s^{-1} }  \,\mathrm{d}t} =1$ for $\alpha>2$, we obtain the following upper bound for the conditional expectation in step (g) of (\ref{mainLT}):
\begin{eqnarray}
\label{RcondBound}
(1-p)\mathbb{E}_{R_z}\left[\left.R_z^{2-\alpha}C_{\alpha}(sP_{T_{\max}}R_z^{-\alpha})\right\vert R_z>r_c\right]
\leq (1-p)\mathbb{E}_{R_z}\left[\left.R_z^{2-\alpha}\right\vert R_z>r_c\right],
\end{eqnarray} 
where the RHS can be calculated as
\begin{eqnarray}
\label{RcondG}
(1-p)\mathbb{E}_{R_z}\left[\left.R_z^{2-\alpha}\right\vert R_z>r_c\right]=\int_{r_c^2}^{\infty}{\! v^{1-\alpha/2} \pi\lambda_a e^{-\pi\lambda_a v}\, \mathrm{d}v}
=\frac{1}{(\pi\lambda_a)^{1-\alpha/2}}\Gamma(2-2/\alpha,\pi\lambda_ar_c^2).
\end{eqnarray}
From (\ref{RcondL}), (\ref{RcondBound}) and (\ref{RcondG}), we can lower bound $\mathcal{L}_{I_r}(s)$, which yields the final result.
\end{proof}

\subsection{Coverage Probability and Number of Stages}
\label{number-stages}
The number of multi-hop stages $K$ is mainly determined by the SIR coverage and the {\em distance coverage}, which we define as the probability that the distance between a device and its nearest aggregator is below a threshold. We provide bounds on $K$ using these coverage concepts.
\begin{assu}
\label{assu-1}
{\bf Interstage independence.} The proposed hierarchical aggregation model introduces dependence among the stages of multi-hop communication since each subsequent stage is generated by the thinning of the previous stage. For analytical tractability, we assume that the multi-hop stages are independent of each other\footnote{This assumption is required for the transmission modes described in detail in Sect. \ref{rate}, and is validated in Sect. \ref{Performance}.}. Hence, the transmission is successful if and only if all the individual stages are successful. The success probability over $K$ stages is
\begin{eqnarray}
\mathcal{P}_{\mathrm{cov}}(K)=\mathbb{P}(\mathrm{SIR}_1>T, \hdots, \mathrm{SIR}_K>T)=\prod\nolimits_{k=1}^{K}{\mathcal{P}_k(T)},\nonumber
\end{eqnarray}
where $\mathcal{P}_k(T)$ denotes the coverage probability at stage $k$. With full channel inversion and minimum average path loss association, the uplink SIR coverage is independent of the infrastructure density as given in (\ref{uplink-SIR}). For the case $P_{T_{\max}} \to \infty$, since $\mathcal{P}_k(T)$ in (\ref{uplink-SIR}) is also independent of the device density, and only depends on the threshold $T$ and path loss exponent $\alpha$, is identical for all stages, and denoted by $\mathcal{P}(T)$. 
\end{assu}

\begin{lem}{\bf An upper bound on \em{K} (Sequential mode).}
Given a minimum probability of coverage requirement $\mathcal{P}_{\mathrm{cov}}(K)\!>\!1-\varepsilon$ and an SIR threshold $T$, the number of stages is upper bounded by
\begin{eqnarray}
\label{K-upper}
K_{\mathrm U}=\left\lceil\frac{\log{(1/(1-\varepsilon))}}{-\log(\max_k\mathcal{P}_k(T))}\right\rceil,\quad \rm{and}\quad
\lim_{P_{T_{\max}} \to \infty} K_{\mathrm U}=\left\lceil\frac{\log{(1/(1-\varepsilon))}}{TC_{\alpha}(T)}\left(\frac{\alpha-2}{2}\right)\right\rceil.
\end{eqnarray}
\end{lem}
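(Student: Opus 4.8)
The plan is to combine the interstage-independence product formula for the joint coverage (Assumption \ref{assu-1}) with a worst-case factorization of that product, and then invert the resulting exponential inequality to bound $K$. First I would recall from Assumption \ref{assu-1} that under interstage independence the joint coverage over $K$ stages factorizes as $\mathcal{P}_{\mathrm{cov}}(K)=\prod_{k=1}^{K}\mathcal{P}_k(T)$. Since each factor obeys $\mathcal{P}_k(T)\le \max_k\mathcal{P}_k(T)\le 1$, the product is bounded above by $\mathcal{P}_{\mathrm{cov}}(K)\le\big(\max_k\mathcal{P}_k(T)\big)^K$. This is the key step: replacing every per-stage coverage by the single largest one gives the most optimistic (loosest) estimate of the joint coverage, which is precisely what yields an upper bound on the admissible number of stages.

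Next I would impose the coverage requirement $\mathcal{P}_{\mathrm{cov}}(K)>1-\varepsilon$. Chaining it with the bound above forces the necessary condition $\big(\max_k\mathcal{P}_k(T)\big)^K>1-\varepsilon$. Taking logarithms and dividing by $\log(\max_k\mathcal{P}_k(T))$, which is strictly negative because each coverage probability lies below one, reverses the inequality and gives $K<\log(1/(1-\varepsilon))/\big(-\log(\max_k\mathcal{P}_k(T))\big)$. Rounding this real-valued threshold up to the nearest integer then produces the stated integer upper bound $K_{\mathrm U}=\lceil\,\cdot\,\rceil$; note that the ceiling is a valid (if slightly conservative) integer bound since any feasible $K$ is strictly below the threshold.

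For the limiting expression I would invoke Corollary \ref{maincoveragelemma2}: as $P_{T_{\max}}\to\infty$ the per-stage uplink coverage converges to $\mathcal{P}(T)=\exp\big(-\frac{2T}{\alpha-2}C_{\alpha}(T)\big)$, which by Assumption \ref{assu-1} is identical across all stages. Hence $\max_k\mathcal{P}_k(T)=\mathcal{P}(T)$ and $-\log(\max_k\mathcal{P}_k(T))=\frac{2T}{\alpha-2}C_{\alpha}(T)$; substituting into the general bound and simplifying the constant $\frac{\alpha-2}{2}$ reproduces the second displayed expression exactly.

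The main subtlety is keeping the direction of every inequality straight, and it is twofold. First, because the logarithm of a sub-unit probability is negative, the division step flips the sense of the inequality, so one must carefully track that the final inequality is $K<(\cdots)$ and not the reverse. Second, one must justify that using $\max_k$ rather than $\min_k$ is what produces an \emph{upper} bound on $K$: the $\max$-based factorization over-estimates the joint coverage, so it cannot under-count the number of feasible stages, whereas a $\min_k$ factorization would instead furnish a guaranteed/sufficient lower bound. Everything else reduces to elementary manipulation of the exponential once these two points are settled.
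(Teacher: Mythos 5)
Your proposal is correct and follows essentially the same route as the paper: both use the interstage-independence factorization $\mathcal{P}_{\mathrm{cov}}(K)=\prod_{k=1}^{K}\mathcal{P}_k(T)\leq\big(\max_k\mathcal{P}_k(T)\big)^K$, chain it with the requirement $\mathcal{P}_{\mathrm{cov}}(K)>1-\varepsilon$, and invert the logarithm to bound $K$, with the limiting expression obtained from Corollary \ref{maincoveragelemma2} as $P_{T_{\max}}\to\infty$. Your write-up simply makes explicit the sign-tracking and ceiling steps that the paper's one-line proof leaves implicit.
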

\begin{proof}
The upper bound is obtained by combining (\ref{uplink-SIR-truncated}) with the condition $\mathcal{P}_{\mathrm{cov}}(K)>1-\varepsilon$ and using the relation $\mathcal{P}_{\mathrm{cov}}(K)=\prod\nolimits_{k=1}^{K}{\mathcal{P}_k(T)}\leq \max_k\mathcal{P}_k(T)^K$. 
\end{proof}

In addition to the SIR outage, since the devices are randomly deployed, any device will be in outage when its nearest aggregator is outside its maximum transmission range. Thus, we also aim to investigate the minimum number of required stages given a distance outage constraint. 

A lower bound on the optimal number of multi-hop stages is given by the following Lemma. 

\begin{lem}\label{LB}{\bf A lower bound on \em{K}.}
The number of stages is lower bounded by
\begin{eqnarray}
\label{K-lower}
K_{\mathrm L}=\Big \lceil{\mathbb{E}[L(\lambda_a)]\mathbb{E}\Big[\frac{1}{\mathrm{N_a}}\Big]{\Big(\frac{P_{R_{\min}}}{P_{T_{\max}}}\Big)}^\frac{1}{\alpha}}\Big \rceil,
\end{eqnarray}
where $L(\lambda_a)$ denotes the total length of the connections, $\mathrm{N_a}$ is the number of devices in the Voronoi cell of a typical aggregator, $P_{T_{\max}}$ is the maximum transmit power and $P_{R_{\min}}$ is the minimum detectable signal power at the receiver.
\end{lem}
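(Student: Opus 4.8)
The plan is to convert the power/distance constraint into a maximum per-hop reach, compute the typical routing distance via Palm calculus, and then count hops.

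\emph{Step 1 (maximum per-hop distance).} First I would isolate the distance-coverage constraint. Ignoring fading for the range calculation, a device transmitting at its maximum power $P_{T_{\max}}$ to an aggregator at distance $d$ delivers received power $P_{T_{\max}}d^{-\alpha}$. The link is usable only if this exceeds the receiver sensitivity $P_{R_{\min}}$, i.e. $P_{T_{\max}}d^{-\alpha}\ge P_{R_{\min}}$, which gives the maximum reliable hop length
\begin{equation}
d_{\max}=\left(\frac{P_{T_{\max}}}{P_{R_{\min}}}\right)^{1/\alpha},\qquad\text{so that}\qquad \frac{1}{d_{\max}}=\left(\frac{P_{R_{\min}}}{P_{T_{\max}}}\right)^{1/\alpha}.
\end{equation}
This is exactly the factor appearing in \eqref{K-lower}.

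\emph{Step 2 (typical routing distance).} Next I would quantify the distance a payload must travel. Following the Palm-calculus computation already used for $\mathbb{E}[\mathrm{N_a}]$ and $\mathbb{E}[\mathrm{N_a}(d)]$ in \eqref{Na}--\eqref{Nad}, the \emph{total} length of the connections inside the typical aggregator cell is obtained by weighting each served device by its distance to the aggregator:
\begin{equation}
\mathbb{E}[L(\lambda_a)]=2\pi\lambda_u\int_{0}^{\infty}\! r^2\,e^{-\lambda_a\pi r^2}\,\mathrm{d}r=\frac{\lambda_u}{2\lambda_a^{3/2}}.
\end{equation}
Dividing this per-cell total length by the number of devices the cell serves converts it to a per-device hop distance; writing this as $\mathbb{E}[L(\lambda_a)]\,\mathbb{E}[1/\mathrm{N_a}]$ and using $\mathbb{E}[1/\mathrm{N_a}]\approx\lambda_a/\lambda_u$ recovers the mean nearest-aggregator distance $1/(2\sqrt{\lambda_a})$, i.e. the average distance a payload is advanced per stage.

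\emph{Step 3 (hop counting and ceiling).} Since every stage advances the payload by at most $d_{\max}$, the number of stages required to cover the mean routing distance $\mathbb{E}[L(\lambda_a)]\,\mathbb{E}[1/\mathrm{N_a}]$ must satisfy $K\ge \mathbb{E}[L(\lambda_a)]\,\mathbb{E}[1/\mathrm{N_a}]/d_{\max}$. Substituting $1/d_{\max}=(P_{R_{\min}}/P_{T_{\max}})^{1/\alpha}$ and taking the ceiling, because $K$ is an integer, yields $K_{\mathrm L}$ in \eqref{K-lower}.

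\emph{Main obstacle.} The hard part will be Steps~2--3: justifying the \emph{product} $\mathbb{E}[L(\lambda_a)]\,\mathbb{E}[1/\mathrm{N_a}]$ in place of $\mathbb{E}[L(\lambda_a)/\mathrm{N_a}]$, which requires decorrelating a cell's total connection length from its occupancy (an independence approximation, in the spirit of Assumption~\ref{assu-1}), and arguing that the cumulative per-device routing distance is at least this mean connection length, so that dividing by the per-hop budget $d_{\max}$ is a genuine lower bound rather than merely a heuristic. Everything else reduces to the elementary range calculation of Step~1 and the standard Poisson--Voronoi integral of Step~2.
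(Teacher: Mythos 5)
Your proposal is correct and follows essentially the same route as the paper's proof: the same Palm-calculus computation of $\mathbb{E}[L(\lambda_a)]=\lambda_u/(2\lambda_a^{3/2})$ (Theorem \ref{maintheo} with $f(x)=\norm{x}$), the same maximum range $(P_{T_{\max}}/P_{R_{\min}})^{1/\alpha}$, and the same division by $\mathrm{N_a}$ with the product form $\mathbb{E}[L(\lambda_a)]\,\mathbb{E}[1/\mathrm{N_a}]$ justified by an independence approximation, which the paper also invokes (in a footnote) exactly where you flag it as the main obstacle.
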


\begin{proof}
The idea of the proof is similar to the proof of Theorem \ref{maintheo} (see Appendix \ref{App:AppendixA}), where $f(x)=\norm{x}$. The mean total length of connections, in the Voronoi cell of an aggregator equals 
\begin{eqnarray}
	\label{connectionlength}
	\mathbb{E}[L(\lambda_a)]=\lambda_u \int_{\mathbb{R}^2}{ \! f(x)e^{-\lambda_a \pi \norm{x}^2} \, \mathrm{d}x}=2\pi\lambda_u \int\nolimits_{0}^{\infty}{ \! r^2e^{-\lambda_a \pi r^2} \, \mathrm{d}r}=\frac{\lambda_u}{2\lambda_a^{3/2}}.
\end{eqnarray}
Given a maximum transmitter power constraint, $P_{T_{\max}}$ for each device, the maximum transmission range is given by $\left({P_{T_{\max}}}/{P_{R_{\min}}}\right)^{1/\alpha}$. Dividing $L(\lambda_a)$ by $\mathrm{N_a}$ and taking its expectation with respect to the distribution of $\mathrm{N_a}$, we obtain the mean length of connections\footnote{For tractability, we take expectation over a PPP first to find $\mathbb{E}[L(\lambda_a)]$, and then multiply it by $\mathbb{E}\Big[\frac{1}{\mathrm{N_a}}\Big]$ assuming independence.}, and dividing this ratio by the maximum transmission range, we obtain the desired result.
\end{proof}

The following Lemma provides a lower bound for $K_{\mathrm L}$ that is based on the mean total length of the connections to the typical aggregator and the fraction of aggregators.
\begin{lem}
$K_{\mathrm L}$ is lower bounded by
\begin{eqnarray}
\Big\lceil{\frac{1}{\mathbb{E}\left[\mathrm{N_a}\right]}\frac{(1-\gamma)}{2\lambda^{1/2}{\gamma^{K_{\mathrm L}/2+1}}}\Big(\frac{P_{R_{\min}}}{P_{T_{\max}}}\Big)^{1/\alpha}}\Big\rceil \leq K_{\mathrm L} \leq K_{\mathrm U}.
\end{eqnarray}
\end{lem}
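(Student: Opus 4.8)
The plan is to start from the closed form for $K_{\mathrm L}$ established in Lemma \ref{LB}, namely $K_{\mathrm L}=\lceil \mathbb{E}[L(\lambda_a)]\,\mathbb{E}[1/\mathrm{N_a}]\,(P_{R_{\min}}/P_{T_{\max}})^{1/\alpha}\rceil$, and to weaken it into the stated, more explicit lower bound by (i) controlling the reciprocal moment $\mathbb{E}[1/\mathrm{N_a}]$ and (ii) substituting the stage-dependent densities of Table \ref{table:tab3}. The only genuinely analytic ingredient is Jensen's inequality: since $x\mapsto 1/x$ is convex on $(0,\infty)$ and $\mathrm{N_a}>0$, we have $\mathbb{E}[1/\mathrm{N_a}]\ge 1/\mathbb{E}[\mathrm{N_a}]$. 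Replacing $\mathbb{E}[1/\mathrm{N_a}]$ by the smaller quantity $1/\mathbb{E}[\mathrm{N_a}]$ can only decrease the argument of the ceiling, so the resulting expression remains a lower bound for $K_{\mathrm L}$; this is exactly what turns the hard-to-evaluate reciprocal moment into the explicit factor $1/\mathbb{E}[\mathrm{N_a}]$ appearing in the claim.

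Next I would evaluate $\mathbb{E}[L(\lambda_a)]=\lambda_u/(2\lambda_a^{3/2})$ from (\ref{connectionlength}) at stage index $k=K_{\mathrm L}$, the binding configuration: among the first $K_{\mathrm L}$ stages this is the one with the sparsest aggregator process and hence the largest per-hop connection lengths, so it furnishes the tightest distance (range) constraint. Substituting the intermediate-stage values $\lambda_u(K_{\mathrm L})=\lambda\gamma^{K_{\mathrm L}-1}$ and $\lambda_a(K_{\mathrm L})=\lambda\gamma^{K_{\mathrm L}}$ from Table \ref{table:tab3} and cancelling powers of $\gamma$ gives $\mathbb{E}[L(\lambda_a)]=1/(2\lambda^{1/2}\gamma^{K_{\mathrm L}/2+1})$, which fixes the self-referential exponent $K_{\mathrm L}/2+1$ in the statement. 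Collecting the three factors $\mathbb{E}[L(\lambda_a)]$, $1/\mathbb{E}[\mathrm{N_a}]$, and $(P_{R_{\min}}/P_{T_{\max}})^{1/\alpha}$, and inserting the extra slack factor $(1-\gamma)<1$ (which again only shrinks the argument and hence preserves the lower-bound direction), produces precisely $\tfrac{1}{\mathbb{E}[\mathrm{N_a}]}\tfrac{(1-\gamma)}{2\lambda^{1/2}\gamma^{K_{\mathrm L}/2+1}}(P_{R_{\min}}/P_{T_{\max}})^{1/\alpha}$. Since every manipulation decreases the argument of the ceiling and $\lceil\cdot\rceil$ is nondecreasing, the ceiling of this expression is at most $K_{\mathrm L}$, which is the left inequality. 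The right inequality $K_{\mathrm L}\le K_{\mathrm U}$ is a consistency/feasibility statement: $K_{\mathrm L}$ is the least number of hops for which the range constraint can be met, while $K_{\mathrm U}$ from (\ref{K-upper}) is the largest number of hops compatible with $\mathcal{P}_{\mathrm{cov}}(K)>1-\varepsilon$, so any admissible configuration sandwiches the optimal number of stages as $K_{\mathrm L}\le K\le K_{\mathrm U}$, and in particular $K_{\mathrm L}\le K_{\mathrm U}$.

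I expect the main obstacle to be the self-referential nature of the bound: $K_{\mathrm L}$ appears inside the exponent $\gamma^{K_{\mathrm L}/2+1}$ while simultaneously being the quantity bounded, so one must read the inequality as a fixed-point/consistency condition on $K_{\mathrm L}$ rather than a circular definition, and must justify that substituting the stage index $k=K_{\mathrm L}$ (rather than stage $1$ or an arbitrary intermediate stage) is the correct binding choice. A secondary subtlety is careful bookkeeping of the direction of each relaxation — Jensen's inequality, the $(1-\gamma)$ slack, and the monotonicity of $\lceil\cdot\rceil$ — so that every step moves the argument of the ceiling downward, which is what guarantees a valid lower bound rather than an upper one.
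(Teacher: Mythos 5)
Your proposal is correct and follows essentially the same route as the paper's proof: apply Jensen's inequality $\mathbb{E}[1/\mathrm{N_a}]\geq 1/\mathbb{E}[\mathrm{N_a}]$ to the closed form of Lemma \ref{LB}, substitute the geometric stage densities to obtain the factor $1/(2\lambda^{1/2}\gamma^{K_{\mathrm L}/2+1})$, and use monotonicity (the per-hop connection length grows with the stage index since $\gamma<1$) together with monotonicity of the ceiling to justify the self-referential exponent. The only cosmetic difference is that the paper obtains the factor $(1-\gamma)$ directly by writing $\lambda_u=\lambda\gamma^{K-1}(1-\gamma)$ rather than inserting it as deliberate slack as you do; both choices preserve the lower-bound direction.
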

\begin{proof}
Using (\ref{connectionlength}), we can lower bound $K_{\mathrm L}$ as
\begin{eqnarray}
\label{KLproof}
K_{\mathrm L}\geq\Big\lceil{\frac{1}{\mathbb{E}\left[\mathrm{N_a}\right]}\frac{\lambda_u}{2\lambda_a^{3/2}}\Big(\frac{P_{R_{\min}}}{P_{T_{\max}}}\Big)^{1/\alpha}}\Big\rceil,
\end{eqnarray}
where we use the convexity of $1/\mathrm{N_a}$, i.e., $\mathbb{E}\left[\frac{1}{\mathrm{N_a}}\right]\geq \frac{1}{\mathbb{E}[\mathrm{N_a}]}$. Noting that $\lambda_u$ and $\lambda_a$ are functions of $K$, we have the relation $\mathbb{E}[L(\lambda_a)]=\frac{\lambda_u}{2\lambda_a^{3/2}}=\frac{\lambda\gamma^{K-1}(1-\gamma)}{2\left(\lambda\gamma^K\right)^{3/2}}=\frac{(1-\gamma)}{2\lambda^{1/2}{\gamma^{K/2+1}}}$, which is increasing in $K$ since $\gamma\leq 0.5$. Plugging this relation into (\ref{KLproof}), we get the bound.
\end{proof}

\section{Transmission Rate Models}
\label{rate}
For an interference limited network, the rate of the typical device is given by $\mathrm{Rate}=\frac{W}{\mathrm{N_a}}\log{(1+\mathrm{SIR})}$, where $W$ is the total bandwidth of the communication channel, and $\mathrm{N_a}$ is the load at the typical aggregator. The average number of devices served by the typical aggregator is denoted by $\mathbb{E}[\mathrm{N_a}]=\frac{\lambda_u}{\lambda_a}$. Rate coverage is defined as rate exceeding a given threshold, i.e., 
\begin{eqnarray}
\label{coverage-prob}
\mathbb{P}(\mathrm{Rate}>\rho)=\sum\limits_{l=0}^{\infty}{\mathbb{P}\left(\mathrm{SIR}>2^{\frac{\rho \mathrm{N_a}}{W}}-1| \mathrm{N_a}=l\right)\mathbb{P}_{\mathrm{\mathrm{N_a}}}(l)},
\end{eqnarray}
where $\mathbb{P}_{\mathrm{\mathrm{N_a}}}(l)$ is the probability mass function (PMF) of $\mathrm{N_a}$, and is given by the following Lemma.

\begin{lem}
The PMF of the number of devices served per aggregator of stage $k$, i.e., $\mathrm{N_a}(k)$, is
\begin{eqnarray}
\label{NkPMF}
\mathbb{P}_{\mathrm{N_a}(k)}(l)=\frac{G_{\mathrm{N_a}(k)}^{(l)}(0)}{l!}=\frac{3.5^{3.5}(\lambda_u(k)/\lambda_a(k))^{l}}{(3.5+(\lambda_u(k)/\lambda_a(k)))^{3.5+l}}\frac{\Gamma(3.5+l)}{\Gamma(3.5)\Gamma(l+1)},
\end{eqnarray}
where $\lambda_u(k)$ and $\lambda_a(k)$ for $k\geq 1$ are given in Table \ref{table:tab3}.
\end{lem}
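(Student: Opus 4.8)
The plan is to treat $\mathrm{N_a}(k)$ as a \emph{mixed Poisson} random variable and read the PMF off the probability generating function (PGF), which is exactly what the first equality $G^{(l)}_{\mathrm{N_a}(k)}(0)/l!$ in (\ref{NkPMF}) invites. Conditioned on the area $A_k$ of the Voronoi cell of the typical stage-$k$ aggregator, the transmitters of that stage form a PPP of density $\lambda_u(k)$ that is independent of the aggregator tessellation, so the number falling in the cell is Poisson with mean $\lambda_u(k)A_k$. I would then (i) express the PGF of this mixed Poisson as the Laplace transform of the cell-area law, (ii) invoke the standard gamma approximation for a typical planar Poisson--Voronoi cell, and (iii) extract the Taylor coefficients. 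The stage index enters only through the densities $\lambda_u(k),\lambda_a(k)$ of Table \ref{table:tab3}, so a single derivation covers all $k$.

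\textbf{Step 1 (PGF as a Laplace transform).} For $\mathrm{N_a}(k)\mid A_k\sim\mathrm{Poisson}(\lambda_u(k)A_k)$ I would write $G_{\mathrm{N_a}(k)}(z)=\mathbb{E}\big[z^{\mathrm{N_a}(k)}\big]=\mathbb{E}_{A_k}\big[e^{-\lambda_u(k)A_k(1-z)}\big]=\mathcal{L}_{A_k}\!\big(\lambda_u(k)(1-z)\big)$, with $\mathcal{L}_{A_k}(s)=\mathbb{E}[e^{-sA_k}]$. This reduces the whole computation to the Laplace transform of the cell area.

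\textbf{Step 2 (gamma cell-area model).} Following \cite{JaraiSzabo2008}, I would approximate the normalized area $Y_k=\lambda_a(k)A_k$ of the typical cell by a gamma law with both shape and rate equal to $c=3.5$, so that $\mathbb{E}[Y_k]=1$ (consistent with $\mathbb{E}[A_k]=1/\lambda_a(k)$ and with the factor $\tfrac{4.5}{3.5}=(c+1)/c$ appearing in (\ref{ERandETlambdaa})). Then $A_k$ is gamma with shape $c$ and rate $c\lambda_a(k)$, whence $\mathcal{L}_{A_k}(s)=\big(c\lambda_a(k)/(c\lambda_a(k)+s)\big)^{c}$. Substituting into Step 1 and writing $\mu_k=\lambda_u(k)/\lambda_a(k)=\mathbb{E}[\mathrm{N_a}(k)]$ from (\ref{Na}) yields the negative-binomial PGF
\begin{equation*}
G_{\mathrm{N_a}(k)}(z)=\Big(\frac{c}{c+\mu_k(1-z)}\Big)^{c}=p_k^{\,c}\,(1-q_k z)^{-c},
\end{equation*}
with $p_k=c/(c+\mu_k)$ and $q_k=\mu_k/(c+\mu_k)=1-p_k$.

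\textbf{Step 3 (extracting the PMF).} The $l$-th Taylor coefficient of $(1-q_k z)^{-c}$ at $z=0$ is $\binom{c+l-1}{l}q_k^{\,l}=\tfrac{\Gamma(c+l)}{\Gamma(c)\,l!}q_k^{\,l}$, so $\mathbb{P}_{\mathrm{N_a}(k)}(l)=G^{(l)}_{\mathrm{N_a}(k)}(0)/l!=p_k^{\,c}\tfrac{\Gamma(c+l)}{\Gamma(c)\,\Gamma(l+1)}q_k^{\,l}$. Replacing $p_k,q_k$, setting $c=3.5$ and $\mu_k=\lambda_u(k)/\lambda_a(k)$, and using $l!=\Gamma(l+1)$ collapses this to the right-hand side of (\ref{NkPMF}). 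The step I expect to carry the real weight is Step 2: the exact law of a Poisson--Voronoi cell area has no closed form, so the gamma fit with shape $3.5$ is an approximation imported from \cite{JaraiSzabo2008}. Once that model is granted, Steps 1 and 3 are exact and mechanical --- the mixed-Poisson/Laplace identity and a single Taylor expansion of a negative-binomial PGF.
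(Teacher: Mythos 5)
Your proposal is correct and follows essentially the same route as the paper: both condition on the typical Voronoi cell area to get a mixed-Poisson PGF, plug in the gamma (shape $=$ rate $=3.5$) approximation of \cite{JaraiSzabo2008}, and recover the PMF from the derivatives of the resulting PGF $G_{\mathrm{N_a}(k)}(z)=3.5^{3.5}\big(3.5+(1-z)\lambda_u(k)/\lambda_a(k)\big)^{-3.5}$. The only difference is presentational --- you phrase the mixing step as a Laplace transform and carry out the Taylor-coefficient extraction explicitly (identifying the negative-binomial form), whereas the paper computes the mixing integral directly and leaves the differentiation implicit.
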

\begin{proof}
Normalized distribution function of Voronoi cell areas in 2D can be modeled by \cite{JaraiSzabo2008} as $f(y)=\frac{3.5^{3.5}}{\Gamma(3.5)}y^{\frac{5}{2}}e^{-\frac{7}{2}y}$. Using the densities of the transmitters and the aggregators of stage $k$, the probability generating function (PGF) of the stage $k$ devices in the random area $y$ is \cite{Singh2013}
\begin{eqnarray}
\label{PGF}
G_{\mathrm{N_a}(k)}(z)=\mathbb{E}[z^{\mathrm{N_a}(k)}]=\mathbb{E}[\exp{((\lambda_u(k)/\lambda_a(k)) y (z-1))}],
\end{eqnarray}
where conditioned on $y$, the PGF is of a Poisson random variable $\mathrm{N_a}(k)$ with mean $(\lambda_u(k)/\lambda_a(k)) y$.

Combining (\ref{PGF}) and the pdf $f(y)$, we obtain the PGF of $\mathrm{N_a}(k)$ as
\begin{eqnarray}
\label{PGFofG}
G_{\mathrm{N_a}(k)}(z)=\int\nolimits_{0}^{\infty}{\! \exp{\Big(\frac{\lambda_u(k)}{\lambda_a(k)} y (z-1)\Big)}\frac{3.5^{3.5}}{\Gamma(3.5)}y^{\frac{5}{2}}e^{-\frac{7}{2}y}\, \mathrm{d}y}
=\frac{3.5^{3.5}}{(3.5+(1-z)\lambda_u(k)/\lambda_a(k))^{3.5}}.
\end{eqnarray}
Then, the PMF of $\mathrm{N_a}(k)$ is recovered by taking derivatives of $G$.
\end{proof}
The key assumption in our analysis is that there is one active device per resource block in each Voronoi cell. Using (\ref{NkPMF}) and $\lambda_u(k)/\lambda_a(k)=\mathbb{E}[\mathrm{N_a}(k)]$, the probability of not finding any device in the Voronoi cell of the typical aggregator at stage $k$ is 
\begin{eqnarray}
\label{0probability}
\mathbb{P}_{\mathrm{N_a}(k)}(0)=G_{\mathrm{N_a}(k)}(0)=3.5^{3.5}(3.5+\mathbb{E}[\mathrm{N_a}(k)])^{-3.5}.
\end{eqnarray}

The number of devices $\mathrm{N_a}(k)$ served by the typical aggregator is mainly determined by the fraction $\gamma$. When $\gamma$ is high, i.e., $\mathbb{E}[\mathrm{N_a}(k)]$ is low, the probability that there is no transmitting device within the Voronoi cell of the typical aggregator is not negligible. Let $p_{\mathrm{th}}(k)$ be the probability that there is at least a device in the Voronoi cell of the typical aggregator in the $k^{\rm th}$ stage. Therefore, the interference field of stage $k$ is thinned by $p_{\mathrm{th}}(k)$, and the effective density of the interfering devices at stage $k$ is $p_{\mathrm{th}}(k)\lambda_u(k)$. Using (\ref{0probability}), $p_{\mathrm{th}}(k)=1-3.5^{3.5}(3.5+\mathbb{E}[\mathrm{N_a}(k)])^{-3.5}$. From Table \ref{table:tab3}, we can see that $p_{\mathrm{th}}(k)$ is the same for $1< k < K$, and different for $k\in \{1, K\}$.

Note that in evaluating the average transmitter energy consumption in (\ref{TXenergyandcoverage}), we also require the second moment of the number of devices served per aggregator, which can be calculated adding the first and second derivatives of PGF of $\mathrm{N_a}(k)$ in (\ref{PGFofG}) evaluated at $z=1^{-}$ as follows
\begin{eqnarray}
\label{secondmoment}
\mathbb{E}[\mathrm{N_a}^2(k)]
=\frac{\lambda_u(k)}{\lambda_a(k)}+\frac{4.5}{3.5}\left(\frac{\lambda_u(k)}{\lambda_a(k)}\right)^2.
\end{eqnarray}

Conditioned on having at least one active device per resource block in each Voronoi cell, $\mathbb{E}[\mathrm{N_a}(k)\vert \mathrm{N_a}(k)>0]=p_{\mathrm{th}}(k)^{-1}\mathbb{E}[\mathrm{N_a}(k)]$ and $\mathbb{E}[\mathrm{N_a}^2(k)\vert \mathrm{N_a}(k)>0]=p_{\mathrm{th}}(k)^{-1}\mathbb{E}[\mathrm{N_a}^2(k)]$.

The uplink SIR coverage for successive stages in (\ref{uplink-SIR}) is independent of the device density and $p_{\mathrm{th}}(k)$. However, the rate coverage results for the parallel mode depend on the device density, and the interference field of stage $k$ is thinned by $p_{\mathrm{th}}(k)$, as described in Sect. \ref{ParallelStages}.

We consider two main transmission protocols, namely i) a successive transmission protocol where the stages are activated sequentially, i.e., a half-duplex sequential mode, and ii) a parallel transmission mode, which is either a full-duplex protocol where all stages are simultaneously active, or a half-duplex protocol with alternating active stages. We investigate their energy efficiencies and provide numerical comparisons for the rate-energy-delay tradeoffs in Sect. \ref{Performance}. 

\subsection{Rate Distribution for Successive Stages}
\label{SuccessiveStages}
In this mode, each transmission cycle consists of the stages operating in succession. Stages may not be repeated before a cycle is completed. This mode may provide low rate, but it has low interference since multiple stages are not active simultaneously. Let $\mathcal{K}=\{1, \hdots, K\}$ denote the set of stages and $\mathcal{R}=\{\mathrm{Rate}_1, \hdots, \mathrm{Rate}_K\}$ be the set of rates in terms of the total number of bits transmitted per unit time. Since the total number of bits that can be transmitted per $K$ units of time is $\min_{k\in\mathcal{K}}{\mathrm{Rate}_k}$, the transmission rate in successive mode is $\mathrm{R_S}=K^{-1}\min_{k\in\mathcal{K}}{\mathrm{Rate}_k}$.

\begin{remark}
\label{IndepAssum}
{\bf Dependence of hierarchical levels.}
The hierarchical levels are not independent from each other as detailed in Sect. \ref{multi-stage}, and hence, it is not tractable to analyze the joint rate distribution for successive stages. Instead, we define the rate outage as in (\ref{ratecoverageSuccessive}) where transmission rates are assumed independent. Without tracking the path of the bits (payload) transmitted, we only consider if the hierarchical transmission process is successful. Transmission from a device is successful if its payload is delivered to the BS at the end of $K$ stages. We give a comparison for the analyses with the independence assumption and the monte carlo simulations in Sect. \ref{Performance}.
\end{remark}

With the independence assumption in Remark \ref{IndepAssum}, the rate coverage for successive stages is
\begin{eqnarray}
\label{ratecoverageSuccessive}
\mathbb{P}(\mathrm{R_S}>\rho)
=\prod\limits_{k\in \mathcal{K}}{\mathbb{P}(\mathrm{Rate}_k>K\rho)}\approx\prod\limits_{k\in \mathcal{K}}{\sum\limits_{l=0}^{\infty}{\mathcal{P}_k(2^{\frac{K\rho l}{W}}-1)\mathbb{P}_{\mathrm{N_a(k)}}(l)}}.
\end{eqnarray}
If we let $P_{T_{\max}} \to \infty$, the rate coverage is simplified to
\begin{eqnarray}
\lim_{P_{T_{\max}} \to \infty}  \mathbb{P}(\mathrm{R_S}>\rho) \approx \prod\limits_{k\in \mathcal{K}}{\sum\limits_{l=0}^{\infty}{\exp{\Big(-\frac{2^{\frac{K\rho l}{W}}-1}{\frac{\alpha}{2}-1}C_{\alpha}(2^{\frac{K\rho l}{W}}-1)\Big)}\mathbb{P}_{\mathrm{N_a(k)}}(l)}}.\nonumber
\end{eqnarray}

\subsection{Rate Distribution for Full-Duplex Parallel Stages}
\label{ParallelStages}
In full-duplex parallel mode, transmissions are not interrupted during a transmission cycle unlike the successive transmission mode. All multi-hop stages operate in parallel, the $1^{\rm st}$ stage devices only transmit, and the rest of the devices both transmit and aggregate simultaneously, during all stages of the multi-hop transmission. Therefore, this mode is a full-duplex model. 

Due to the simultaneous transmissions at all levels of the hierarchical model, the interference at each stage is due to i) the interferers of that stage, i.e., the intra-stage interference, and ii) the remaining transmitting devices of the other stages, i.e., the inter-stage interference. The hierarchical levels are determined in the same manner similar to the successive mode, and also dependent in this mode, and hence, the inter-stage interference is correlated. Although full-duplex parallel mode offers high transmission rate compared to successive mode, it has higher interference since all the stages are active. The intra-stage interference in the parallel mode can be obtained in the similar manner as in the successive mode. The following lemma provides the analytical expressions for the Laplace transforms of intra-stage and inter-stage interference.

\begin{lem} \label{parallelinterferencelemma}
Given the active transmission stage $k$, the Laplace Transforms of the intra-stage interference and the inter-stage interference are given as follows:
\begin{enumerate}[(a)]
\item The Laplace transform of the intra-stage interference at stage $k$ is
\begin{multline}
\label{intrastageinterference}
\mathcal{L}_{I_{k}}(s)
\approx 
\exp\Big(- \frac{2s}{\alpha-2}\Big((1-e^{-\pi\lambda_a^{\rm eff}(k)r_c^2}(1+\pi\lambda_a^{\rm eff}(k)r_c^2))\overline{P}_TC_{\alpha}(s\overline{P}_T)\\
+(1-p_k)\pi\lambda_a^{\rm eff}(k)P_{T_{\max}}\mathbb{E}_{R_{z_k}}\left[\left.R_{z_k}^{2-\alpha}C_{\alpha}\Big(\frac{sP_{T_{\max}}}{R_{z_k}^{\alpha}}\Big)\right\vert R_{z_k}>r_c\right]\Big)\Big),
\end{multline}
where $p_k=1-\exp(-\pi\lambda_a(k)r_c^2)$ and $\lambda_{a}^{\rm eff}(k)=p_{\mathrm{th}}(k)\lambda_a(k)$.  
\item The Laplace transform of the total inter-stage interference from stages $\{\left.l \right\vert l\in\mathcal{K}, l\neq k \}$ is
\begin{multline} 
\mathcal{L}_{I_{k^c}}(s)\approx 
\prod\limits_{l\in k^c}\exp\Big(- (1-e^{-\pi\lambda_a^{\rm eff}(l)r_c^2}(1+\pi\lambda_a^{\rm eff}(l)r_c^2))(B_{\alpha}(s \overline{P}_T)+\frac{2s \overline{P}_T}{\alpha-2}C_{\alpha}(s \overline{P}_T))
\\
-(1-p_l)\pi\lambda_a^{\rm eff}(l)\mathbb{E}_{R_{z_l}}\Big[R_{z_l}^2\left.\Big(B_{\alpha}\Big(\frac{sP_{T_{\max}}}{R_{z_l}^{\alpha}}\Big)+\frac{2sP_{T_{\max}}}{(\alpha-2)R_{z_l}^{\alpha}}C_{\alpha}\Big(\frac{s P_{T_{\max}}}{R_{z_l}^{\alpha}}\Big)\Big)\right \vert R_{z_l}>r_c\Big]\Big),
\end{multline}
where $B_{\alpha}(s)={_2F_1}\Big(1,\frac{2}{\alpha},1+\frac{2}{\alpha},-\frac{1}{s}\Big)$ is obtained using the Gauss-Hypergeometric function.
\end{enumerate}
Using the Laplace transforms, the uplink SIR coverage $\mathcal{P}_k(T)$ can be found using (\ref{uplink-SIR-truncated}).
\end{lem}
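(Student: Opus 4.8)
The plan is to treat the two parts in turn, deriving each Laplace transform from the probability generating functional (PGFL) of the interfering point process exactly as in the proof of Lemma~\ref{maincoveragelemma} (Appendix~\ref{App:AppendixD}), and to locate precisely where the inter-stage geometry departs from the intra-stage geometry.

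For part~(a), I would first observe that the intra-stage interference at stage~$k$ is structurally identical to the single-stage uplink interference of Lemma~\ref{maincoveragelemma}. Under orthogonal access there is at most one active transmitter per Voronoi cell per resource block, and a cell is occupied with probability $p_{\mathrm{th}}(k)$; hence the field of active co-stage interferers is the stage-$k$ aggregator process independently thinned to the effective density $\lambda_a^{\rm eff}(k)=p_{\mathrm{th}}(k)\lambda_a(k)$. Writing $\mathcal{L}_{I_k}(s)=\mathbb{E}[\exp(-sI_k)]$ and using $g_z\sim\exp(1)$ so that $\mathbb{E}_{g_z}[e^{-sg_z x}]=(1+sx)^{-1}$, the PGFL of this thinned PPP reduces the computation to a radial integral over the interferer distance $D_z$ to the typical aggregator. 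Splitting the transmit power $\min\{P_{T_{\max}},\overline{P}_T R_z^{\alpha}\}$ at the critical distance $r_c$ into its two power-control regimes, and imposing the nearest-aggregator association constraint $D_z\ge R_z$ (an interferer closer to the typical aggregator than to its own would have associated with the former), the integral starts at $D_z=R_z$ and, after the substitution used in Appendix~\ref{App:AppendixD}, produces the Gauss-hypergeometric factor $C_\alpha$. This is exactly the expression of Lemma~\ref{maincoveragelemma} with $\lambda_a$ replaced by $\lambda_a^{\rm eff}(k)$, giving~(\ref{intrastageinterference}).

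For part~(b), I would invoke the inter-stage independence of Assumption~\ref{assu-1} to factor the total inter-stage interference over stages, $\mathcal{L}_{I_{k^c}}(s)=\prod_{l\in k^c}\mathcal{L}_{I_l}(s)$, so it suffices to compute the contribution of one interfering stage $l$ and to take the product. The decisive structural difference from part~(a) is that an interferer of stage~$l$ is associated to its own stage-$l$ aggregator, not to the typical stage-$k$ aggregator; consequently there is no constraint tying its distance $D_z$ to the typical aggregator to its own serving distance $R_z$, and the radial integral over $D_z$ now runs over all of $(0,\infty)$ rather than $(R_z,\infty)$. Splitting this integral at $R_z$, the tail $(R_z,\infty)$ reproduces the same $\tfrac{2s}{\alpha-2}C_\alpha$ contribution as in part~(a), while the extra near-field part $(0,R_z)$ evaluates to the second Gauss-hypergeometric function $B_\alpha(s)={}_2F_1(1,\tfrac{2}{\alpha},1+\tfrac{2}{\alpha},-1/s)$. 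Carrying both power-control regimes through this decomposition and taking the product over $l\in k^c$ yields the stated formula, each stage entering through its effective density $\lambda_a^{\rm eff}(l)$ and the conditional serving-distance law $R_{z_l}\mid R_{z_l}>r_c$ from Assumption~\ref{distanceassumption}.

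The main obstacle I anticipate is the bookkeeping of the two integration regions for the inter-stage interferers: one must verify that removing the association cutoff enlarges the domain from $(R_z,\infty)$ to $(0,\infty)$, and that the resulting near-field integral collapses to exactly $B_\alpha$ while the far-field integral reproduces the $C_\alpha$ term shared with part~(a). Alongside this, the approximations flagged by the ``$\approx$'' signs must be applied consistently: treating the interferer serving distances $R_{z_l}$ as independent Rayleigh variables (Assumption~\ref{distanceassumption}), decoupled from the positions of the interferers relative to the typical aggregator, and invoking Assumption~\ref{assu-1} to justify the product form across stages. Establishing these decouplings cleanly, rather than the radial integrations themselves, is where the care is required.
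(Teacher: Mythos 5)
Your proposal is correct and follows essentially the same route as the paper's proof in Appendix \ref{App:AppendixE}: part (a) is the single-stage PGFL computation of Lemma \ref{maincoveragelemma} rerun with the thinned effective interferer density $\lambda_a^{\rm eff}(k)$, and part (b) factors over stages and removes the nearest-aggregator association cutoff, so the radial integral extends over the near field as well and produces exactly the extra $B_\alpha$ term (the paper's $\int_0^\infty$ versus $\int_1^\infty$ after the change of variables). The only cosmetic difference is attribution: the paper justifies the product form across stages by the independent-thinning construction of the stage transmitter processes and the assumed independence of $\{R_{z_l}\}$, rather than by Assumption \ref{assu-1}, but this is the same independence in substance.
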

\begin{proof}
See Appendix \ref{App:AppendixE}.
\qedhere
\end{proof}

\begin{cor}\label{interferencealpha4}
The Laplace transform of the total inter-stage interference for $P_{T_{\max}} \to \infty$ is 
\begin{eqnarray}
\label{interferenceinterstage}
\lim_{P_{T_{\max}} \to \infty} \mathcal{L}_{I_{k^c}}(s)
&\approx&\exp\Big(-B_{\alpha}(s \overline{P}_T)-\frac{2s \overline{P}_T}{\alpha-2}C_{\alpha}(s \overline{P}_T)\Big)^{(K-1)}.
\end{eqnarray}  
\end{cor}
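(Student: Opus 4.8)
The plan is to pass to the limit $P_{T_{\max}} \to \infty$ directly inside the product representation of $\mathcal{L}_{I_{k^c}}(s)$ supplied by Lemma~\ref{parallelinterferencelemma}(b), treating the factors indexed by $l \in k^c$ one at a time. The organizing observation is that $r_c = (P_{T_{\max}}/\overline{P}_T)^{1/\alpha} \to \infty$, so the survival probability $1-p_l = e^{-\pi\lambda_a(l) r_c^2}$ of the power-limited regime $R_{z_l}>r_c$ decays to zero. Accordingly, I would isolate within each exponent the two contributions: the power-controlled (near-interferer) term carrying the prefactor $1 - e^{-\pi\lambda_a^{\rm eff}(l)r_c^2}(1+\pi\lambda_a^{\rm eff}(l)r_c^2)$, and the power-limited (far-interferer) conditional expectation scaled by $(1-p_l)$. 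For the first term, since $e^{-x}(1+x)\to 0$ as $x\to\infty$, the prefactor tends to $1$, so this term converges to the $l$-independent quantity $-\big(B_{\alpha}(s\overline{P}_T)+\tfrac{2s\overline{P}_T}{\alpha-2}C_{\alpha}(s\overline{P}_T)\big)$.

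The crux is to show that the far-interferer term vanishes. I would rewrite $(1-p_l)\mathbb{E}_{R_{z_l}}[\,\cdot\mid R_{z_l}>r_c]$ as an unconditional integral against the Rayleigh density $2\pi\lambda_a(l)\, r_z\, e^{-\pi\lambda_a(l) r_z^2}$ over $[r_c,\infty)$, which absorbs the survival factor $1-p_l$ exactly, since $\mathbb{P}(R_{z_l}>r_c)=e^{-\pi\lambda_a(l)r_c^2}=1-p_l$. On this range $R_{z_l}>r_c$ forces the hypergeometric arguments $sP_{T_{\max}}/R_{z_l}^{\alpha} < s\overline{P}_T$, so $B_{\alpha}$ and $C_{\alpha}$ are uniformly bounded by a constant $M=M(s,\overline{P}_T,\alpha)$ independent of $P_{T_{\max}}$, using $C_{\alpha}<1$ as already established in the proof of the lower-bound corollary together with the analogous boundedness of $B_{\alpha}$. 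The remaining integral is then dominated by $M\pi\lambda_a^{\rm eff}(l)\int_{r_c}^{\infty} 2\pi\lambda_a(l)\, r_z^3\, e^{-\pi\lambda_a(l) r_z^2}\,\mathrm{d}r_z$, which the substitution $w=\pi\lambda_a(l) r_z^2$ evaluates in closed form to $M\lambda_a^{\rm eff}(l)\,(\pi\lambda_a(l) r_c^2 + 1)\,e^{-\pi\lambda_a(l) r_c^2}$. Since the exponential decay overwhelms the polynomial growth $\pi\lambda_a(l) r_c^2$, this upper bound tends to $0$, so the far-interferer term disappears in the limit.

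Combining the two contributions, each factor in the product over $l\in k^c$ converges to the same limit $\exp\!\big(-B_{\alpha}(s\overline{P}_T)-\tfrac{2s\overline{P}_T}{\alpha-2}C_{\alpha}(s\overline{P}_T)\big)$, and since $|k^c|=K-1$ the product collapses to the claimed $(K-1)$-th power, yielding \eqref{interferenceinterstage}. I expect the main obstacle to be precisely the dominated-decay argument for the far-interferer integral: one must verify that the hypergeometric factors remain uniformly bounded on the truncated range $R_{z_l}>r_c$ so that the Rayleigh exponential tail controls the polynomial prefactor $R_{z_l}^2$ and drives the whole expression to zero, rather than to a nonzero constant.
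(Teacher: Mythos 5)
Your proof is correct and takes essentially the same route as the paper, which states Corollary~\ref{interferencealpha4} without further argument as the direct $P_{T_{\max}}\to\infty$ limit of Lemma~\ref{parallelinterferencelemma}(b): the prefactor $1-e^{-\pi\lambda_a^{\rm eff}(l)r_c^2}(1+\pi\lambda_a^{\rm eff}(l)r_c^2)$ tends to $1$ and the $(1-p_l)$-weighted far-interferer term vanishes, which is precisely what you verify (your uniform-boundedness argument for $B_{\alpha}$ and $C_{\alpha}$ supplies detail the paper leaves implicit, mirroring its Corollary~\ref{maincoveragelemma2} reasoning that $r_c\to\infty$, $p\to 1$). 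The only blemish is a harmless bookkeeping slip: after the substitution the bound should read $M\,\bigl(\lambda_a^{\rm eff}(l)/\lambda_a(l)\bigr)\bigl(1+\pi\lambda_a(l)r_c^2\bigr)e^{-\pi\lambda_a(l)r_c^2}$ rather than $M\,\lambda_a^{\rm eff}(l)\bigl(1+\pi\lambda_a(l)r_c^2\bigr)e^{-\pi\lambda_a(l)r_c^2}$, but either way the exponential decay drives it to zero and the conclusion stands.
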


\begin{lem} The uplink SIR coverage probability for the full-duplex parallel mode is given by
\begin{eqnarray}
\label{probcoverageParallel}
\mathbb{P}(\mathrm{SIR_P}>T)
\approx\prod\limits_{k\in \mathcal{K}}{\Big(p_k\mathcal{L}_{I_k}(T\overline{P}_T^{-1})\mathcal{L}_{I_{k^c}}(T\overline{P}_T^{-1})+\int_{r_c}^{\infty}{\!\mathcal{L}_{I_k}(Tr^{\alpha}P_{T_{\max}}^{-1})\mathcal{L}_{I_{k^c}}(Tr^{\alpha}P_{T_{\max}}^{-1})f_{R_k}(r)\, \mathrm{d}r}\Big)},
\end{eqnarray}
where $f_{R_k}(r)=({r}/{\sigma_k^2})e^{-r^2/2\sigma_k^2}$ for $r\geq 0$ and $\sigma_k=\sqrt{1/(2\pi\lambda_a(k))}$.
\end{lem}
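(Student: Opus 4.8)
The plan is to reduce the $K$-stage parallel coverage to a product of per-stage coverage probabilities, each of which reuses the single-stage truncated-power-control machinery of Lemma \ref{maincoveragelemma}. First I would fix an active stage $k$ and isolate the typical device transmitting in that stage. Its received signal power obeys the same truncated channel-inversion law as in the single-stage analysis, so conditioning on the link distance $R_k$ (Rayleigh with parameter $\sigma_k$ reflecting the stage-$k$ aggregator density $\lambda_a(k)$) and on the $\exp(1)$ fading, the event $\{\mathrm{SIR}>T\}$ translates---exactly as in the derivation of (\ref{uplink-SIR-truncated})---into an exponential tail probability of the aggregate interference, evaluated at $s=T\overline{P}_T^{-1}$ when $R_k\le r_c$ and at $s=TR_k^{\alpha}P_{T_{\max}}^{-1}$ when $R_k>r_c$. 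Averaging over $R_k$ (so that $\mathbb{P}(R_k\le r_c)=p_k$ and the remaining mass is integrated against $f_{R_k}$ on $(r_c,\infty)$) then reproduces the bracketed structure of (\ref{probcoverageParallel}), provided the aggregate interference is represented by its Laplace transform at the relevant argument.

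The key structural step is the factorization of that aggregate Laplace transform. In parallel mode the total interference seen at stage $k$ splits as $I=I_k+I_{k^c}$ into an intra-stage term and an inter-stage term. Because the aggregator point processes of distinct stages are produced by independent thinnings of the common parent PPP $\Psi$, the interferer configurations contributing to $I_k$ and to $I_{k^c}$ are independent, so $\mathcal{L}_{I}(s)=\mathbb{E}[e^{-sI_k}]\,\mathbb{E}[e^{-sI_{k^c}}]=\mathcal{L}_{I_k}(s)\mathcal{L}_{I_{k^c}}(s)$. Substituting the two factors supplied by Lemma \ref{parallelinterferencelemma} at $s=T\overline{P}_T^{-1}$ and $s=Tr^{\alpha}P_{T_{\max}}^{-1}$ yields precisely the per-stage coverage $\mathcal{P}_k(T)=p_k\mathcal{L}_{I_k}(T\overline{P}_T^{-1})\mathcal{L}_{I_{k^c}}(T\overline{P}_T^{-1})+\int_{r_c}^{\infty}\mathcal{L}_{I_k}(Tr^{\alpha}P_{T_{\max}}^{-1})\mathcal{L}_{I_{k^c}}(Tr^{\alpha}P_{T_{\max}}^{-1})f_{R_k}(r)\,\mathrm{d}r$, i.e.\ the factor indexed by $k$ in (\ref{probcoverageParallel}).

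Finally I would assemble the stages: invoking the interstage independence of Assumption \ref{assu-1}, the joint event that every stage succeeds has probability $\prod_{k\in\mathcal{K}}\mathcal{P}_k(T)$, which is exactly (\ref{probcoverageParallel}). I expect the only genuine obstacle to be justifying the two independence approximations that underlie the $\approx$ sign: the intra/inter-stage factorization of $\mathcal{L}_I$ and the cross-stage product over $k$. Both are only approximate, since the Voronoi cells of consecutive stages are correlated---the same correlation quantified by $\rho_{k,k+1}$ in Sect.\ \ref{multi-stage}---and the per-cell activity thinning $p_{\mathrm{th}}$ couples the effective interferer densities across stages. The remaining analytic labor, namely the closed forms of $\mathcal{L}_{I_k}$ and $\mathcal{L}_{I_{k^c}}$, is already discharged in Lemma \ref{parallelinterferencelemma}, so this lemma amounts to an assembly of those ingredients through the single-stage coverage template (\ref{uplink-SIR-truncated}).
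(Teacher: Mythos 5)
Your proposal is correct and follows essentially the same route as the paper: it instantiates the truncated-power-control coverage template (\ref{uplink-SIR-truncated}) per stage, factorizes the aggregate interference Laplace transform as $\mathcal{L}_{I_k}\mathcal{L}_{I_{k^c}}$ using the independence of the intra- and inter-stage interference (itself a consequence of the independent thinning of the parent PPP, as in Lemma \ref{parallelinterferencelemma}), and then multiplies the per-stage coverages via the interstage independence of Assumption \ref{assu-1}. Your explicit acknowledgment that both independence steps are approximations (due to Voronoi-cell correlation across stages) is consistent with, and somewhat more candid than, the paper's own brief justification.
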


\begin{proof} 
The proof is similar to that of Theorem I in \cite{Singh2014}. However, instead of one serving stage, the serving stages change sequentially. Result follows from the evaluation of (\ref{intrastageinterference}) using (\ref{uplink-SIR-truncated}). Since the total inter-stage interference in (\ref{laplaceinterstage}) is independent from the intra-stage interference, its Laplace transform is incorporated as a multiplicative term.
\end{proof}

Using (\ref{uplink-SIR}) and (\ref{interferenceinterstage}), the uplink SIR coverage for $P_{T_{\max}} \to \infty$ is given by
\begin{eqnarray}
\label{FDcoverage}
\mathbb{P}(\mathrm{SIR_P}>T)
\approx\prod\limits_{k\in \mathcal{K}}{\exp{\Big(-\frac{2T}{\alpha-2}C_{\alpha}(T)\Big)}\exp\Big(-B_{\alpha}(T)-\frac{2T}{\alpha-2}C_{\alpha}(T)\Big)^{K-1}}.
\end{eqnarray}

\begin{lem} 
\label{fullduplexratecoverage}
The rate coverage probability for the full-duplex parallel transmission mode is
\begin{eqnarray}
\label{ratecoverageParallel}
\mathbb{P}(\mathrm{R_P}>\rho)=\prod\limits_{k\in \mathcal{K}}{\mathbb{P}(\mathrm{Rate}_k>\rho)}
=\prod\limits_{k\in \mathcal{K}}{\sum\limits_{l=0}^{\infty}{\mathbb{P}(\mathrm{SIR}_k>2^{\frac{\rho l}{W}}-1|\mathrm{N_a}(k)=l)\mathbb{P}_{\mathrm{N_a(k)}}(l)}}.
\end{eqnarray}
\end{lem}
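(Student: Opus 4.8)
The plan is to mirror the argument used for the successive-mode rate distribution in Sect.~\ref{SuccessiveStages}, while accounting for the fact that full-duplex parallel operation involves no time-sharing across the $K$ stages. First I would characterize the end-to-end throughput. Since all hierarchical levels transmit simultaneously throughout a transmission cycle, the payload reaches the BS at a rate dictated by the slowest hop, so the parallel-mode rate is the \emph{bottleneck} rate $\mathrm{R_P}=\min_{k\in\mathcal{K}}\mathrm{Rate}_k$. Unlike the successive mode, where $\mathrm{R_S}=K^{-1}\min_{k\in\mathcal{K}}\mathrm{Rate}_k$ because each stage consumes a distinct time slot, here no $K^{-1}$ factor appears; this is precisely why the threshold inside the product in (\ref{ratecoverageParallel}) is $2^{\rho l/W}-1$ rather than $2^{K\rho l/W}-1$.

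Given this, the event $\{\mathrm{R_P}>\rho\}$ coincides with $\bigcap_{k\in\mathcal{K}}\{\mathrm{Rate}_k>\rho\}$. The second step is to factorize the probability of this intersection. Invoking the inter-stage independence of Assumption~\ref{assu-1} (the same approximation used in Remark~\ref{IndepAssum}), I would write $\mathbb{P}(\mathrm{R_P}>\rho)=\prod_{k\in\mathcal{K}}\mathbb{P}(\mathrm{Rate}_k>\rho)$. Crucially, each per-stage factor $\mathbb{P}(\mathrm{Rate}_k>\rho)$ is \emph{not} a single-hop quantity: the SIR at stage $k$ already embeds both the intra-stage and inter-stage interference through the product $\mathcal{L}_{I_k}\mathcal{L}_{I_{k^c}}$ established in Lemma~\ref{parallelinterferencelemma} and carried into the SIR-coverage expression (\ref{probcoverageParallel}). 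Thus the inter-stage coupling in the interference field is retained inside every factor, and only the outage events across stages are decoupled.

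The third step converts each factor into the stated sum. Using the rate definition of Sect.~\ref{rate} and conditioning on the load, the event $\{\mathrm{Rate}_k>\rho\}\cap\{\mathrm{N_a}(k)=l\}$ is equivalent to $\{\mathrm{SIR}_k>2^{\rho l/W}-1\}\cap\{\mathrm{N_a}(k)=l\}$, so the law of total probability over the load PMF $\mathbb{P}_{\mathrm{N_a}(k)}(l)$ from (\ref{NkPMF}) gives $\mathbb{P}(\mathrm{Rate}_k>\rho)=\sum_{l=0}^{\infty}\mathbb{P}(\mathrm{SIR}_k>2^{\rho l/W}-1\mid\mathrm{N_a}(k)=l)\,\mathbb{P}_{\mathrm{N_a}(k)}(l)$, which is exactly (\ref{coverage-prob}) specialized to stage $k$. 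Substituting this into the product over $\mathcal{K}$ then yields (\ref{ratecoverageParallel}).

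The main obstacle is the justification of the factorization in the second step: the hierarchical levels are genuinely dependent, since each is a thinning of the previous stage and the inter-stage interference is correlated, so the first equality is an approximation rather than an identity. I would handle this by appealing explicitly to Assumption~\ref{assu-1} and by emphasizing that the conditional term $\mathbb{P}(\mathrm{SIR}_k>\cdot\mid\mathrm{N_a}(k)=l)$ is itself an approximation inherited from the Laplace-transform approximations of Lemma~\ref{parallelinterferencelemma}; the accuracy of both approximations is what the numerical study in Sect.~\ref{Performance} is used to validate.
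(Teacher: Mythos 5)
Your proposal is correct and follows essentially the same route as the paper: the paper's (very terse) proof simply combines the parallel-mode SIR coverage in (\ref{probcoverageParallel}) with the rate definition, which is exactly your decomposition into the bottleneck rate $\mathrm{R_P}=\min_{k\in\mathcal{K}}\mathrm{Rate}_k$, the inter-stage independence factorization of Assumption \ref{assu-1}, and the law of total probability over the load PMF, with the inter-stage interference retained inside each factor through $\mathcal{L}_{I_k}\mathcal{L}_{I_{k^c}}$. Your write-up merely makes explicit (and correctly flags as approximations) the steps the paper leaves implicit.
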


\begin{proof}
Proof follows from the combination of (\ref{probcoverageParallel}) with the definition of rate. 
\end{proof}
The rate coverage expression for the full-duplex parallel mode is validated in Sect. \ref{Performance}. 

\begin{remark}
Although the full-duplex parallel transmission strategy is probably not feasible for M2M communication, it is helpful to have a comparison of the rate-energy tradeoffs of both models, and provided for completeness. The models described above, i.e., the sequential mode which is half-duplex by design, and the full-duplex parallel mode, are the two principle design schemes that mainly differ in terms of their total rate coverages and energy consumptions. 
\end{remark}
Next, we introduce the half-duplex parallel transmission, which is a feasible scheme for M2M.

\subsection{Rate Distribution for Half-Duplex Parallel Stages}
\label{HalfDuplexParallelStages}
The full-duplex parallel mode can be transformed into a half-duplex communication scheme. In this mode, at a particular time slot, only the even or odd stages are active. Analysis of this model is quite similar to the parallel mode analysis, using only the active stages when calculating the inter-stage interference. The SIR coverage of the half-duplex mode can be characterized in a fashion similar to full-duplex mode SIR coverage in (\ref{probcoverageParallel}), and is given by the following lemma. 
\begin{lem} The uplink SIR coverage probability for the half-duplex parallel mode is given by
\begin{eqnarray}
\label{probcoverageHalfDuplexParallel}
\mathbb{P}(\mathrm{SIR_P}>T)
\approx\prod\limits_{k\in \mathcal{K}_{\rm odd}}{\Big(p_k\mathcal{L}_{I_k}(T\overline{P}_T^{-1})\mathcal{L}_{I_{k^c}}(T\overline{P}_T^{-1})+\int_{r_c}^{\infty}{\!\mathcal{L}_{I_k}(Tr^{\alpha}P_{T_{\max}}^{-1})\mathcal{L}_{I_{k^c}}(Tr^{\alpha}P_{T_{\max}}^{-1})f_{R_k}(r)\, \mathrm{d}r}}\Big).
\end{eqnarray}
where calculations of $\lambda_a^{\rm eff}(k)$ and the distributions of $\{R_{z_k}\}$ in (\ref{intrastageinterference}) and (\ref{laplaceinterstage}) are done over the set of active stages and $\mathcal{K}_{\rm odd}$ denotes the set of active stages, i.e., the odd stages.

The uplink SIR coverage for $P_{T_{\max}} \to \infty$ is given by
\begin{eqnarray}
\label{HDcoverage}
\mathbb{P}(\mathrm{SIR_P}>T)
\approx \prod\limits_{k\in \mathcal{K}_{\rm odd}}{\exp{\Big(-\frac{2T}{\alpha-2}C_{\alpha}(T)\Big)}\exp\Big(-B_{\alpha}(T)-\frac{2T}{\alpha-2}C_{\alpha}(T)\Big)^{\frac{K}{2}-1}}.
\end{eqnarray}
\end{lem}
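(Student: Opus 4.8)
The plan is to reuse the full-duplex parallel derivation that produced (\ref{probcoverageParallel}) and (\ref{FDcoverage}), changing only the index set over which the stages interfere. First I would condition on the typical device residing in an active stage $k\in\mathcal{K}_{\rm odd}$ and invoke the truncated-power-control coverage formula (\ref{uplink-SIR-truncated}) from Lemma \ref{maincoveragelemma}. As in the full-duplex case, this splits the stage-$k$ coverage into the $p_k$ term, which collects the devices at distance at most $r_c$, and the integral term over $r>r_c$ weighted by $f_{R_k}$, thereby reproducing the structure of the per-stage factor in (\ref{probcoverageHalfDuplexParallel}).

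Next I would decompose the interference seen by this typical device into its intra-stage and inter-stage components. By Lemma \ref{parallelinterferencelemma} and its derivation in Appendix \ref{App:AppendixE}, these two components are independent, so their Laplace transforms multiply, giving the product $\mathcal{L}_{I_k}(s)\mathcal{L}_{I_{k^c}}(s)$ inside each per-stage factor. The only structural change relative to full-duplex is that, because at any slot only the odd stages transmit, the inter-stage field $I_{k^c}$ is restricted to the active set $\{l\in\mathcal{K}_{\rm odd}: l\neq k\}$. Consequently the effective densities $\lambda_a^{\rm eff}(l)$ and interferer-distance distributions $f_{R_{z_l}}$ entering (\ref{intrastageinterference}) and (\ref{laplaceinterstage}) are recomputed over $\mathcal{K}_{\rm odd}$ rather than over all of $\mathcal{K}$.

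Then, appealing to the inter-stage independence of Assumption \ref{assu-1}, I would take the product of the per-stage coverage factors over the active stages $k\in\mathcal{K}_{\rm odd}$, which yields (\ref{probcoverageHalfDuplexParallel}). For the regime $P_{T_{\max}}\to\infty$ I would let $r_c\to\infty$ so that $p_k\to 1$; the intra-stage factor then collapses to $\exp(-\tfrac{2T}{\alpha-2}C_{\alpha}(T))$ by Corollary \ref{maincoveragelemma2}, and the inter-stage factor collapses to the per-interferer form $\exp(-B_{\alpha}(T)-\tfrac{2T}{\alpha-2}C_{\alpha}(T))$ of Corollary \ref{interferencealpha4}. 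The decisive bookkeeping step is the exponent: with $K$ total stages the active (odd) stages number $K/2$ for even $K$, so each active stage sees $K/2-1$ other active interferers rather than the $K-1$ of the full-duplex case; inserting this count produces exactly (\ref{HDcoverage}).

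The main obstacle I expect is this counting of active interfering stages, namely verifying that restricting $I_{k^c}$ to $\mathcal{K}_{\rm odd}$ reduces the inter-stage exponent from $K-1$ to precisely $K/2-1$, together with making explicit that the same approximation underlying the full-duplex analysis is invoked here: the correlated inter-stage interference across the active stages is treated as independent, in the spirit of Assumption \ref{assu-1}, which is what renders the product form and the power $K/2-1$ tractable rather than exact. All remaining steps are routine specializations of Lemmas \ref{maincoveragelemma} and \ref{parallelinterferencelemma} to the reduced set of active stages.
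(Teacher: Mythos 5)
Your proposal is correct and follows essentially the same route as the paper, which itself gives no formal proof but simply asserts that the half-duplex case is obtained from the full-duplex analysis in (\ref{probcoverageParallel}) by restricting the inter-stage interference and the product of per-stage coverage factors to the active set $\mathcal{K}_{\rm odd}$. Your explicit bookkeeping of the interferer count ($K/2-1$ active co-interfering stages replacing $K-1$) and your flagging of the inter-stage independence approximation are exactly the steps the paper leaves implicit.
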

For the half-duplex mode, the rate coverage expression in (\ref{ratecoverageParallel}) changes in accordance with (\ref{probcoverageHalfDuplexParallel}), which can be found by following the steps in Lemma \ref{fullduplexratecoverage}. Since only half of the stages are active simultaneously, the overall rate is half the rate of the active stages. Thus, to achieve a rate threshold of $\rho$, the rate threshold for the active stages should be set to $2\rho$. Note also that $p_{\mathrm{th}}(k)$'s are modified in accordance with the updated values of $\mathbb{E}[\mathrm{N_a}(k)]$.

\subsection{Expected Communication Delay}
\label{energy-delay}
Average communication delay depends on the number of hops in the transmission and the communication rate. Using the bounds (\ref{K-upper}) and (\ref{K-lower}) on the total number of hops $K$ in Sect. \ref{number-stages}, and denoting the delay of the typical device due to the transmission of a payload of $M$ bits at rate $R$ over $K$ stages by $\mathcal{D}(R,K)$, it satisfies $\mathcal{D}(R,K)=t(R,K)-t(R,1)$, where $t(R,1)$ is the direct transmission duration as a function of the transmission rate $R$ in the unit of bits/s and the payload $M$ in the unit of bits, and it is given as $t(R,1)=MR^{-1}$, and $t(R,K)$ is the transmission duration over $K$ hops. We calculate the average transmission delay conditioned on the devices that satisfy the minimum rate threshold $\rho$ over $K$ hops, which is obtained as
\begin{eqnarray}
\label{delayformula}
\mathbb{E}[\left.\mathcal{D}(R,K)\right\vert R>\rho]=\mathbb{E}\left[\left.t(R,K)\right\vert R>\rho\right]-M\mathbb{E}\Big[\left.\frac{1}{R}\right\vert R>\rho\Big],
\end{eqnarray}
where $\mathbb{E}\left[\left.\frac{1}{R}\right\vert R>\rho\right]=\int\nolimits_{0}^{1/\rho}{ \! \Big[1-\mathbb{P}\big(R\geq \frac{1}{t}\big)\Big]\, \mathrm{d}t}\approx\sum\limits_{l=1}^{\infty}{\int\nolimits_{0}^{1/\rho}{ \! \Big[1-\mathcal{P}(2^{\frac{l}{Wt}}-1)\Big] \, \mathrm{d}t}\mathbb{P}_{\mathrm{N_a}}(l)}$, which can be calculated using the rate distribution given in (\ref{coverage-prob}).

The expected delay depends on the transmission protocol. As discussed in Sect. \ref{rate}, we mainly focus on two different transmission protocols, i.e., successive and full-duplex parallel transmission modes as described in detail in Sects. \ref{SuccessiveStages} and \ref{ParallelStages}, respectively.

\begin{figure}[t!]
\centering
\begin{minipage}{.47\textwidth}
  \centering
  \includegraphics[width=\linewidth]{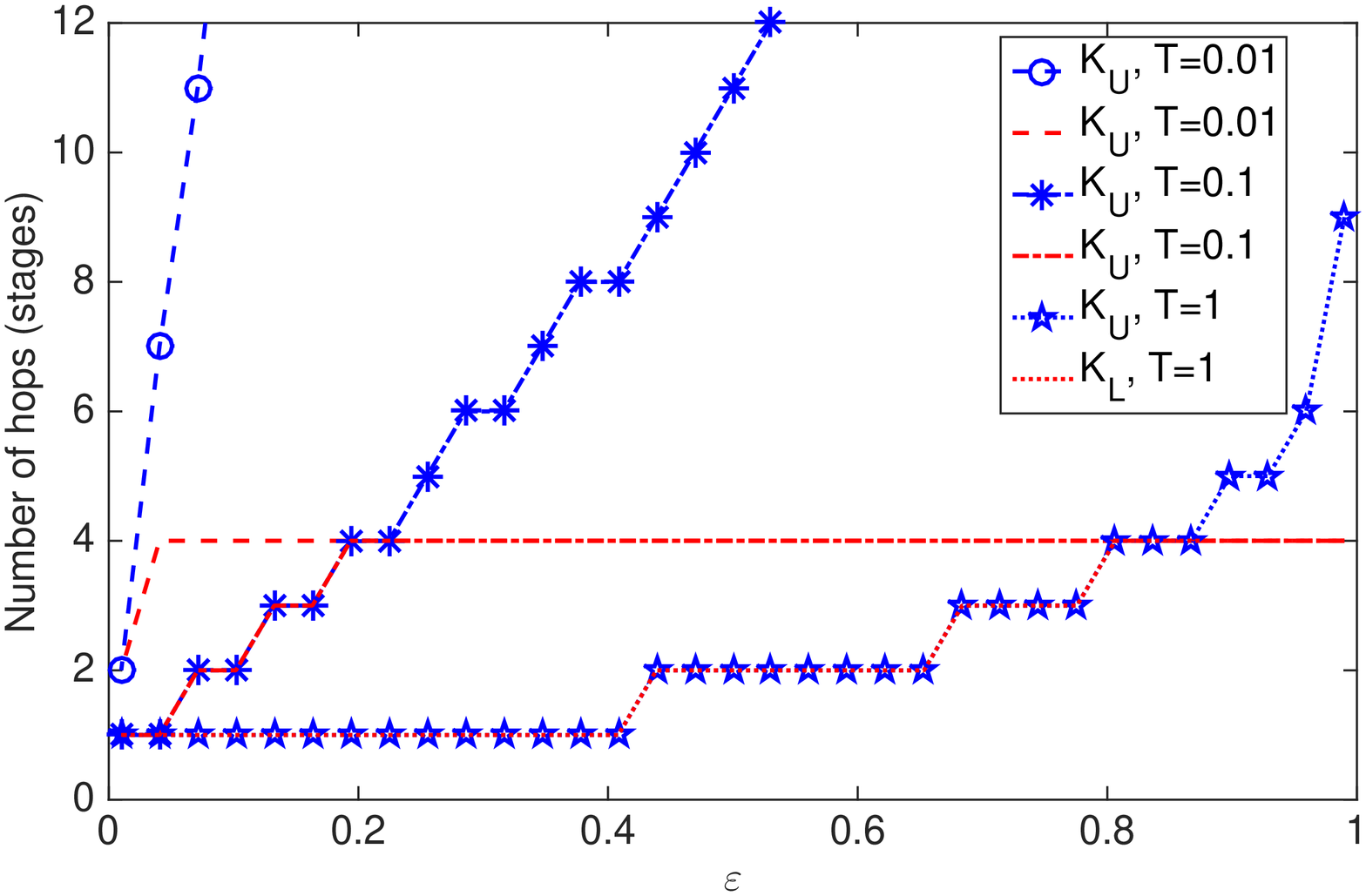}
  \caption{\footnotesize{Number of hops versus the total SIR outage probability constraint, $\varepsilon$ for variable $T$.}}
  \label{fig-Pout2}
\end{minipage}
\hfill
\begin{minipage}{.47\textwidth}
  \centering
  \includegraphics[width=\linewidth]{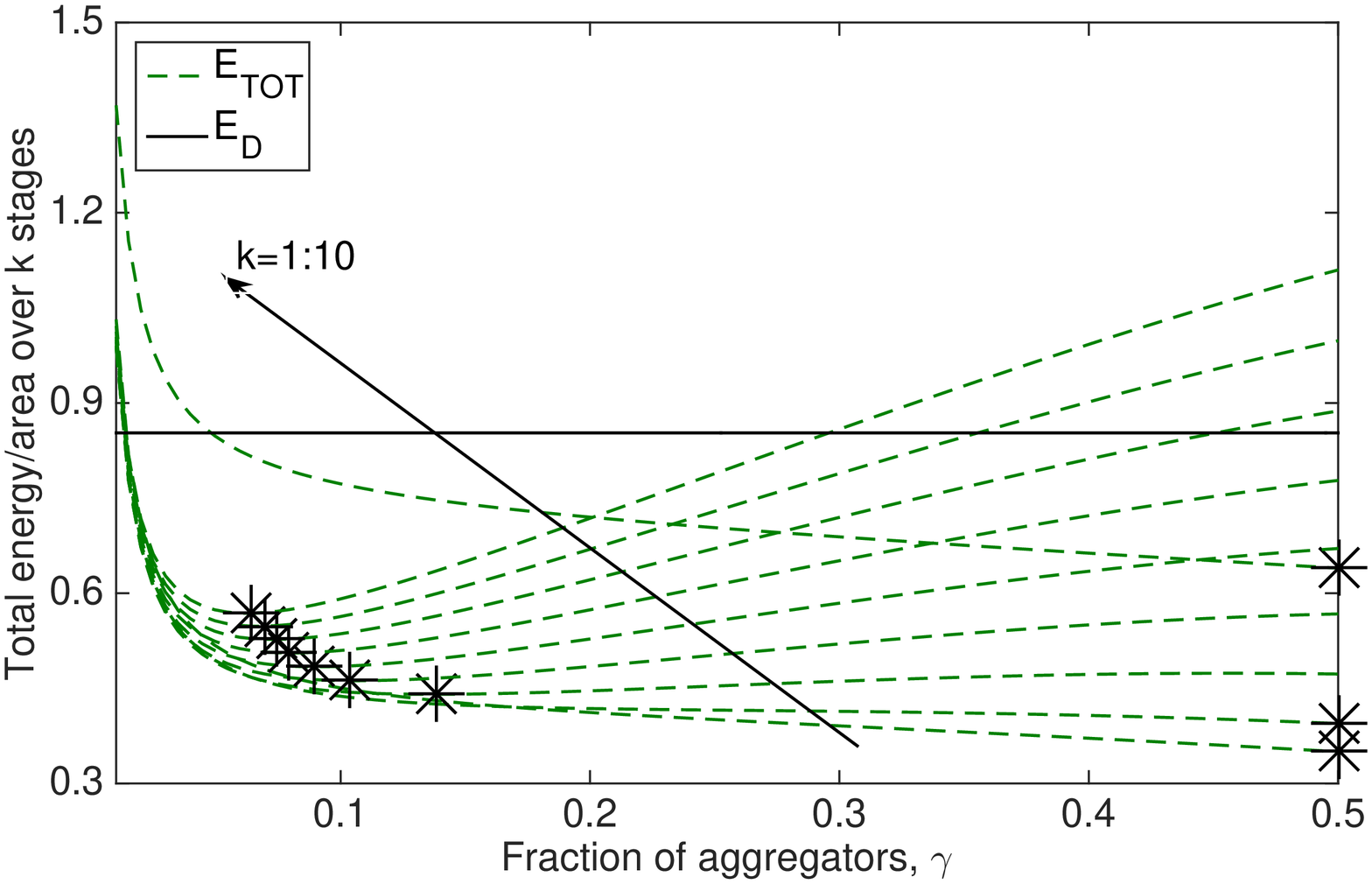}
  \caption{\footnotesize{Total energy density, with optimal fraction of aggregators, $\gamma_{\textrm{opt}}$, marked for $P_{\rm LO}=5{\rm mW}$ \cite{Wang2006}.}}
  \label{fig-3}
\end{minipage}
\end{figure}

{\bf Successive Transmission.} 
Using the distribution of the sequential mode transmission rate $R_S$ given in (\ref{ratecoverageSuccessive}), the expected communication duration for $K$ stages can be obtained as
\begin{eqnarray}
\label{sequentialduration}
\mathbb{E}[\left.t(R_S,K)\right\vert R_S>\rho]
\approx MK\int\nolimits_{0}^{1/\rho} \! \Big[1-\prod\limits_{k\in \mathcal{K}}\sum\nolimits_{l=0}^{\infty}\mathcal{P}_k(2^{\frac{Kl}{Wt}}-1) \mathbb{P}_{\mathrm{N_a(k)}}(l)\Big] \, \mathrm{d}t.
\end{eqnarray}

{\bf Parallel Transmission.} 
The distribution of the full-duplex parallel mode transmission rate $R_P$ is given in (\ref{ratecoverageParallel}). Therefore, the expected communication duration is obtained as
\begin{eqnarray}
\label{parallelduration}
\mathbb{E}[\left.t(R_P,K)\right\vert R_P>\rho]
\approx MK\int\nolimits_{0}^{1/\rho} \! \Big[1-\prod\limits_{k\in \mathcal{K}}\sum\limits_{l=0}^{\infty}{\mathbb{P}(\mathrm{SIR}_k>2^{\frac{l}{Wt}}-1)\mathbb{P}_{\mathrm{N_a(k)}}(l)} \Big]\, \mathrm{d}t,
\end{eqnarray}
where $\mathbb{P}(\mathrm{SIR}_k>2^{\frac{l}{Wt}}-1)$ can be calculated using (\ref{probcoverageParallel}).

The expected delay for the sequential (or full-duplex parallel) mode can be found by combining (\ref{sequentialduration}) (or (\ref{parallelduration})) with the delay expression in (\ref{delayformula}). 

\section{Numerical Results}
\label{Performance}
For the performance evaluation of the aggregator-based M2M communication model in Sects. \ref{single-stage}-\ref{multi-stage} and the rate coverage models in Sect. \ref{rate}, we incorporate the power-dissipation factors of IEEE 802.11b (WiFi) technology \cite{Wang2006}. The simulation parameters are shown in Table \ref{table:tab2}.
 
{\bf Optimal number of hops.} We illustrate the trend between the number of required stages versus the outage probability constraint for different thresholds $T=\{0.01, 0.1, 1\}$ in Fig. \ref{fig-Pout2}. For low outage probabilities, i.e., small $\varepsilon$ values, $K_{\mathrm U}$ should be also low as the average transmission range is long. As $\varepsilon$ increases, the maximum transmission range decreases and $K_{\mathrm U}$ increases. The variation of the lower bound on the number of hops, $K_{\mathrm L}$, for varying $\varepsilon$ and $K_{\mathrm U}$ is also shown.

{\bf Optimal fraction of receivers decay with the number of stages.} We evaluate the performance of the energy model in Sects. \ref{single-stage} and \ref{multi-stage},  for $K\in\{1,\hdots, 10\}$. In Figs. \ref{fig-3}-\ref{fig-5},  we observe the variation of the average total energy density with respect to $\gamma$ for different $P_{\rm LO}$, where $E_D$ (solid line) corresponds to the energy density of direct transmission to the BS. The optimal values of $\gamma$ that minimize the total energy density, i.e., $\gamma_{\textrm{opt}}$ (marked), are decreasing with $K$.  

\begin{figure*}[t!]
\centering
\begin{minipage}{.47\textwidth}
  \centering
  \includegraphics[width=\linewidth]{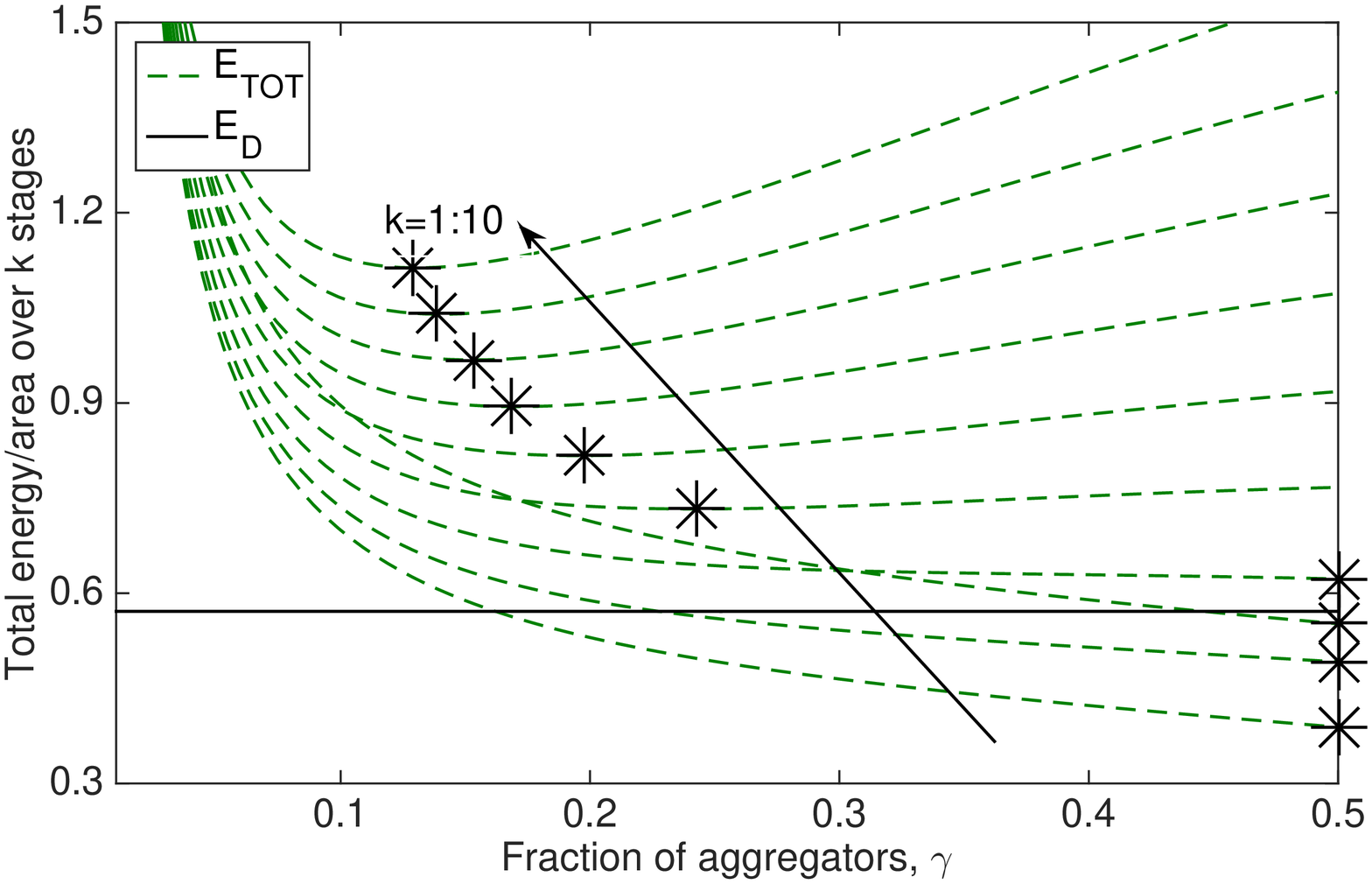}
  \caption{\footnotesize{Total energy density, with optimal fraction of aggregators, $\gamma_{\textrm{opt}}$, marked for $P_{\rm LO}=P_{\rm RX}/4$ \cite{Wang2006}.}}
  \label{fig-5}
\end{minipage}
\hfill
\begin{minipage}{.47\textwidth}
  \centering
  \includegraphics[width=\linewidth]{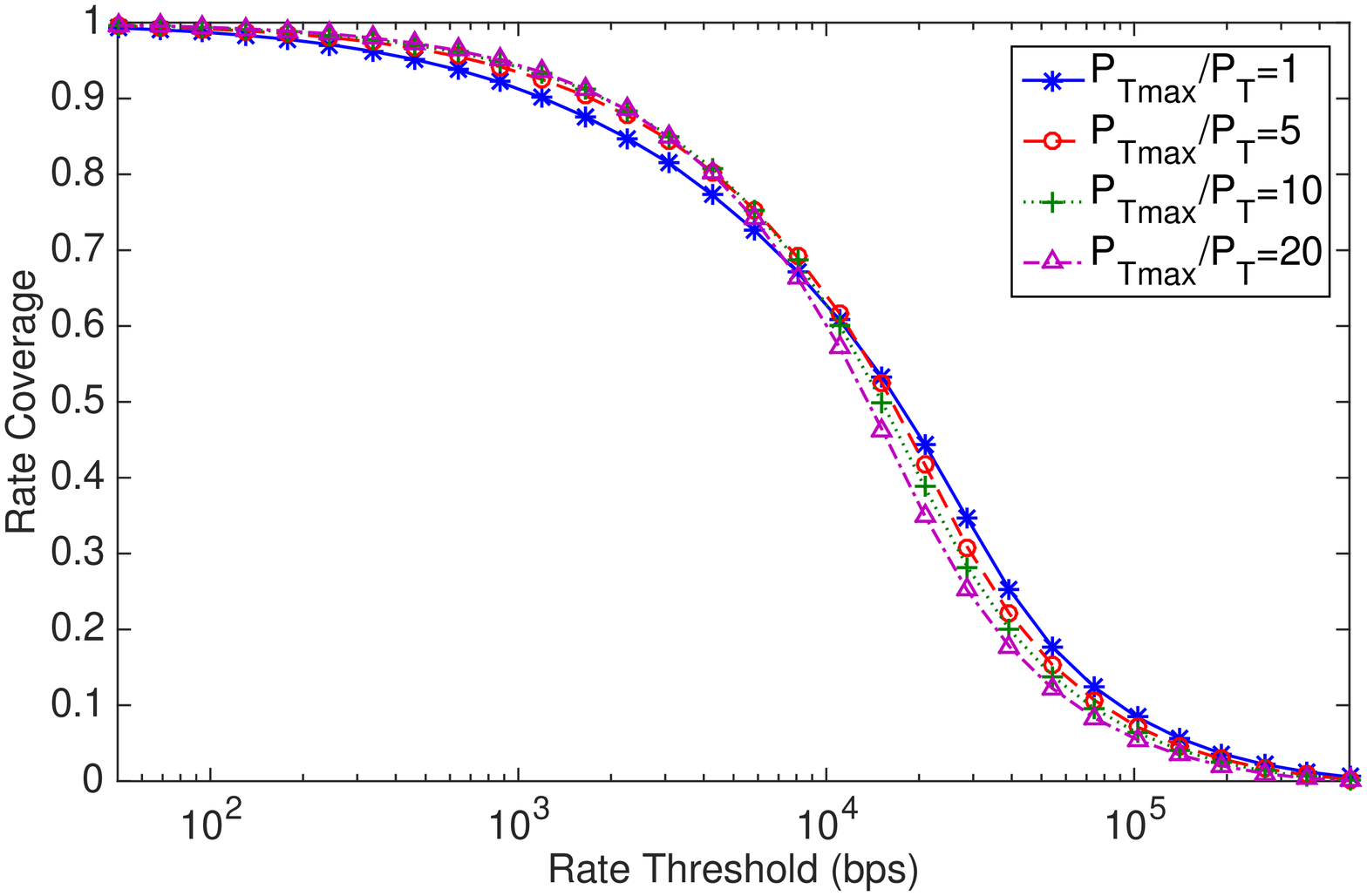}
  \caption{\footnotesize{Rate coverage distribution with maximum power constraint $P_{T_{\max}}/\overline{P}_T=[1,5,10,20]$.}}
  \label{fig-RatePowerConstraint}
\end{minipage}
\end{figure*}

\begin{table}[b!]\footnotesize
\begin{center}
  \begin{tabular}{|l|l||l|l|}
   \hline
    \bf Parameter & \bf Value/Range & \bf Parameter & \bf Value/Range \\ 
    \hline
    Total bandwidth & $W=10^5\mathrm{Hz}$ & Power consumption of RX (TX) & $P_{\rm RX}=200$ ($P_{\rm TX}=100$) {\rm mW}\\
    \hline
     Path loss exponent & $\alpha=4$ & Network size & $2R \times 2R$ sq. $\mathrm{km}$, $R=2.5 \mathrm{km}$\\
    \hline
    Device (BS) density & $\lambda=10^{3}$ ($\lambda_{\mathrm{BS}}=1$) per sq. km & Payload & $M=100$ bits\\
    \hline
  \end{tabular} 
\end{center}
\caption{\footnotesize{Simulation parameters.}}
\label{table:tab2}
\end{table}

The simulation setup for the verification of analytical rate models developed in Sect. \ref{rate} is as follows. Device locations are distributed as PPP over a square region of size $5 \times 5$ sq. $\mathrm{km}$. Note that the density of aggregators at stage $k$ is $\lambda \gamma^k$, i.e., the number of aggregators decay geometrically. Therefore, for high $K$ values we need a region with much larger area for the validation of the model, but scaling the region increases the computational complexity exponentially. Therefore, we restrict ourselves to $K=1:3$, which is also consistent with the plot characteristics on the number of hops illustrated in Fig. \ref{fig-Pout2}, and investigate the performance of the proposed model. Furthermore, we approximate (\ref{ratecoverageSuccessive}) by truncating $\mathrm{N_a}$ over the range $l=0, 1, \hdots, L$, where $L$ is set to 20. This is a reasonable range as the fraction of aggregators is chosen to be $\gamma=0.1$, i.e., there are 9 devices per aggregator on average.

{\bf Rate coverage for half-duplex and full-duplex modes.} The rate coverage for a single-stage model for different maximum power levels is shown in Fig. \ref{fig-RatePowerConstraint}. As $P_{T_{\max}}/\overline{P}_T$ increases, coverage is improved for thresholds less than $10^4$ bps. For the rest of the simulations, we assume $P_{T_{\max}}\to\infty$ for tractability. The rate coverage for the sequential and the full-duplex modes for different number of hops are illustrated in Figs. \ref{fig-sequentialrates} and \ref{fig-parallelrates}. The analytical results do not exactly match the simulations, and the difference is due to the {\em interstage independence assumption} in Sect. \ref{number-stages}, and the {\em independent power control assumption} in \cite{Novlan2013}, which is the main limitation of this work. Comparing Figs. \ref{fig-sequentialrates} and \ref{fig-parallelrates}, we observe that the full-duplex mode does not offer higher rate coverage compared to sequential mode, which is due to {\em inter-stage interference} as detailed in Sect. \ref{ParallelStages}. Thus, in terms of rate coverage, sequential mode is preferable over full-duplex mode. From (\ref{HDcoverage}) and (\ref{FDcoverage}), the half-duplex parallel mode has higher coverage than full-duplex mode for the same $K$. Due to limited space, we do not provide an illustration for that scheme.

\begin{figure}[t!]
\centering
\begin{minipage}{.47\textwidth}
  \centering
  \includegraphics[width=\linewidth]{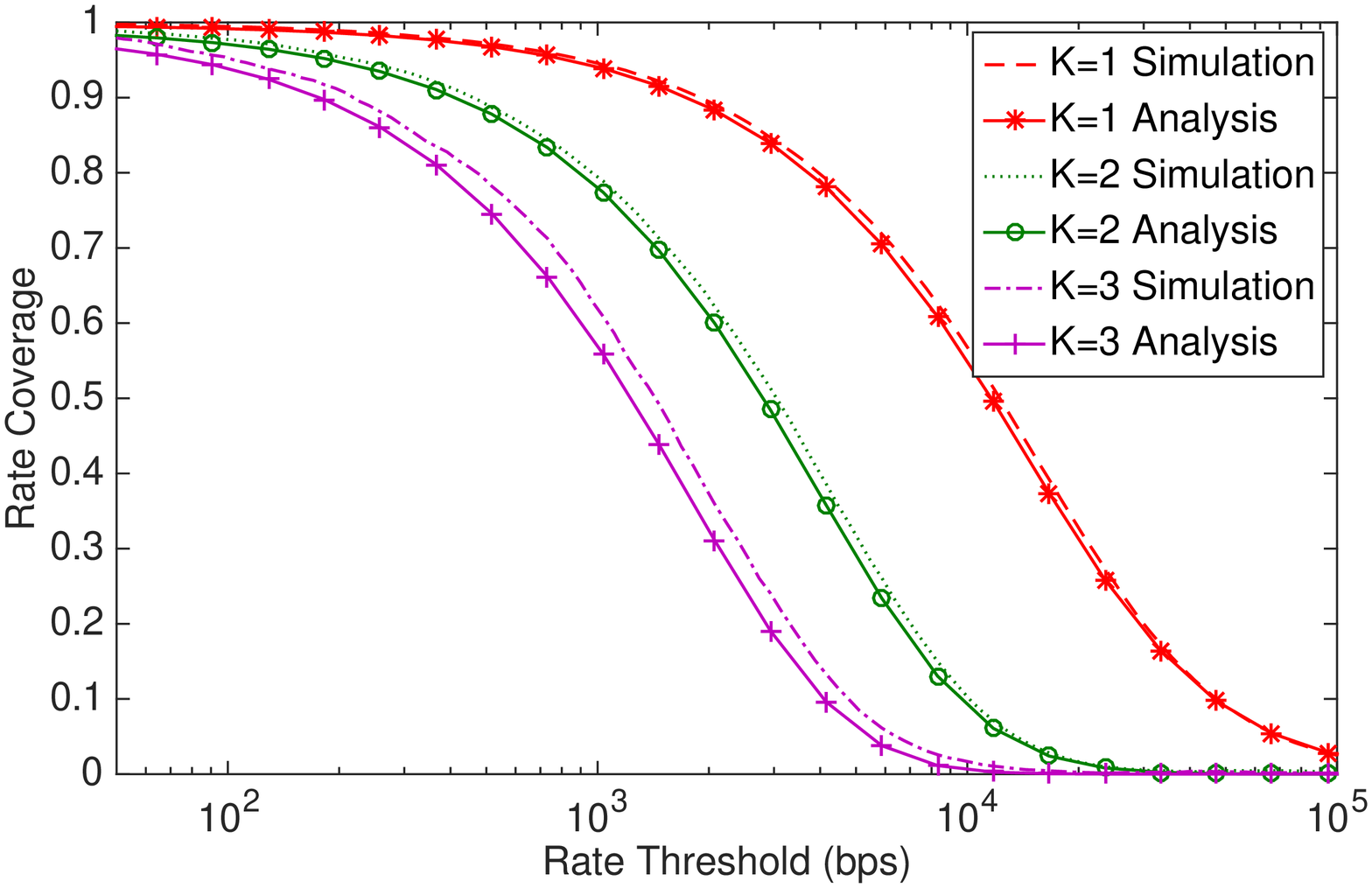}
  \caption{\footnotesize{Rate coverage distributions in sequential mode for $K=\{1, 2, 3\}$ stages.}}
  \label{fig-sequentialrates}
\end{minipage}
\hfill
\begin{minipage}{.47\textwidth}
  \centering
  \includegraphics[width=\linewidth]{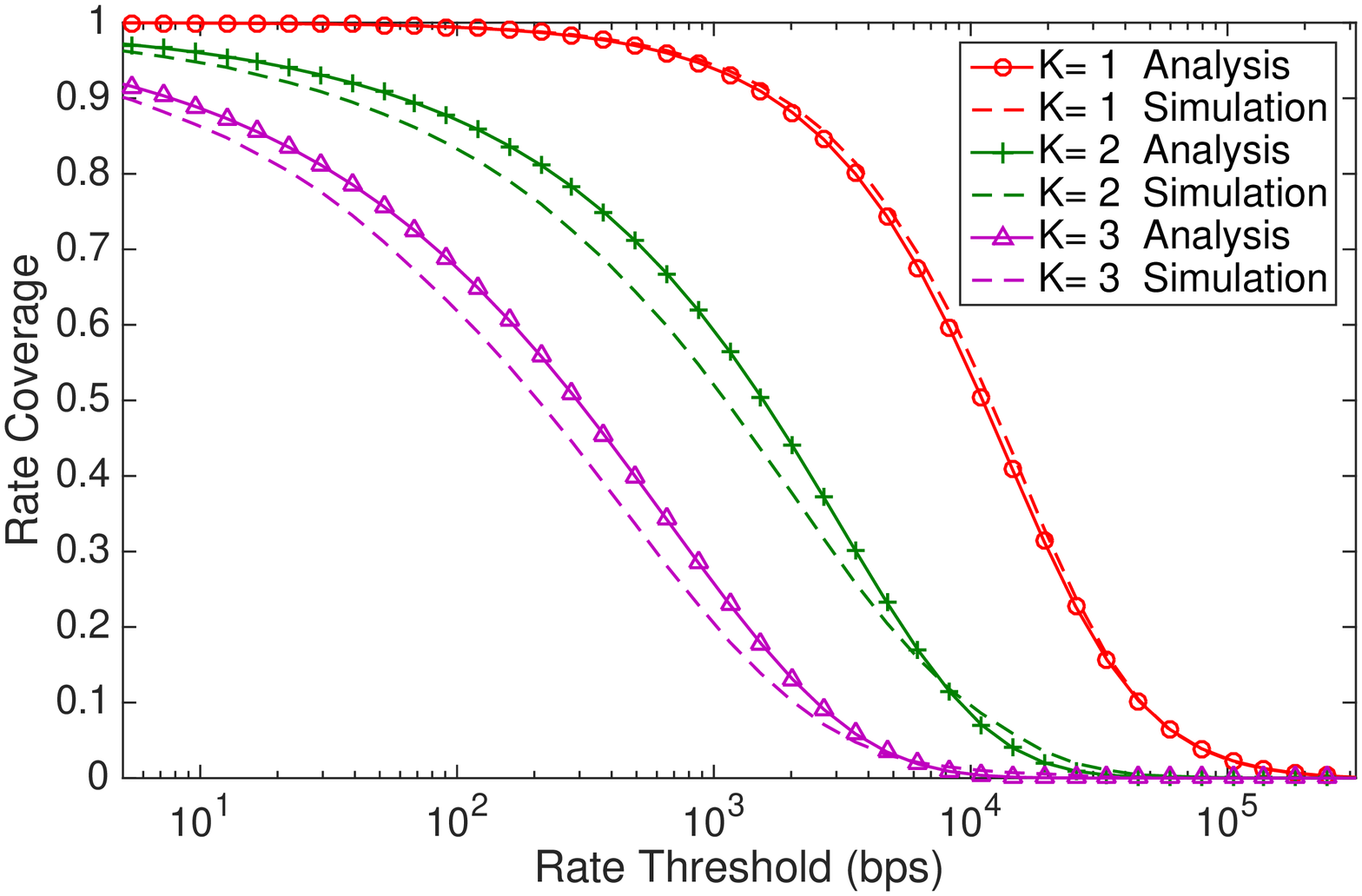}
  \caption{\footnotesize{Rate coverage distributions in full-duplex parallel mode for $K=\{1, 2, 3, 4\}$ stages.}}
  \label{fig-parallelrates}
\end{minipage}
\end{figure}

{\bf Rate-energy tradeoffs.} We compare the average total energy density for the sequential and full-duplex parallel schemes in Fig. \ref{fig-Rate-Energy-Tradeoff1}. The rate coverage decays as the hop number increases, which reduces the chance of successful transmissions, and the overall energy consumption. To make a relevant comparison based on the energy consumptions of sequential and parallel modes, we consider multiple transmission cycles equal to the number of stages, i.e., $K=2$. On the plot legend, $\mathcal{E}_P$ and $\mathcal{E}_S$ stand for the total energy densities calculated based on the actual rate coverage probabilities, and $Q_P$ and $Q_S$ stand for the rate outage probabilities, i.e., complements of the rate coverage probabilities, of the full-duplex parallel and sequential transmission modes, respectively. In full-duplex mode, all devices are always active and transmitting, but the communication rate is low due to higher interference, hence, its energy consumption is less than the successive mode. We also investigate the average total energy density for the half-duplex parallel scheme with $K=2$ in Fig. \ref{fig-Rate-Energy-Tradeoff2}. In the figure legend, HD stands for half-duplex. This scheme has higher coverage probability as only half of the stages are operating simultaneously, and higher transmit distance as the aggregator to device fraction is $\gamma^2\ll \gamma$, which yields higher energy consumption. 

Different transmission modes have revealed the tradeoffs between the coverage and the energy requirements. Full-duplex mode has low energy density, but low rate coverage and high delay. Sequential mode also has low energy density, high rate coverage and low delay. On the other hand, half-duplex parallel mode has higher energy density ($\times 2$) and higher rate coverage compared to full-duplex mode. Unlike the full-duplex mode, which is not convenient for M2M communication despite being energy efficient, the half-duplex parallel mode is favorable with higher energy consumption. However, its energy consumption can be reduced by readjusting the aggregator fractions for alternating stages. Since M2M is delay tolerant, sequential mode is also feasible and a low-cost technique. Considering the operating regime for M2M devices and the simplicity of their design, sequential mode is preferable as it has low energy density and high coverage.

\begin{figure}[t!]
\centering
\begin{minipage}{.47\textwidth}
  \centering
  \includegraphics[width=\linewidth]{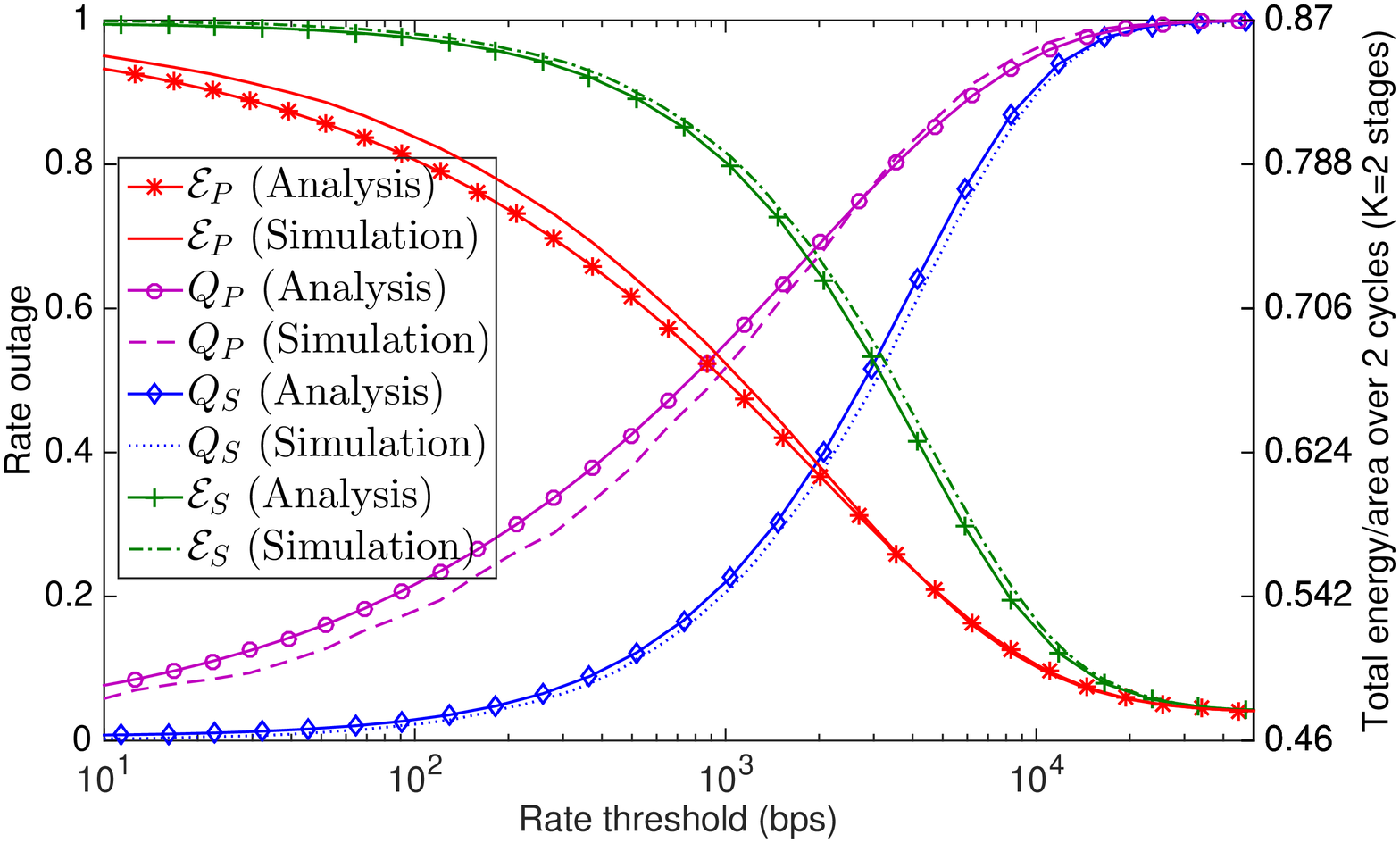}
  \caption{\footnotesize{A rate outage-total energy density tradeoff for sequential and parallel modes, $K=2$.}}
  \label{fig-Rate-Energy-Tradeoff1}
\end{minipage}
\hfill
\begin{minipage}{.47\textwidth}
  \centering
  \includegraphics[width=\linewidth]{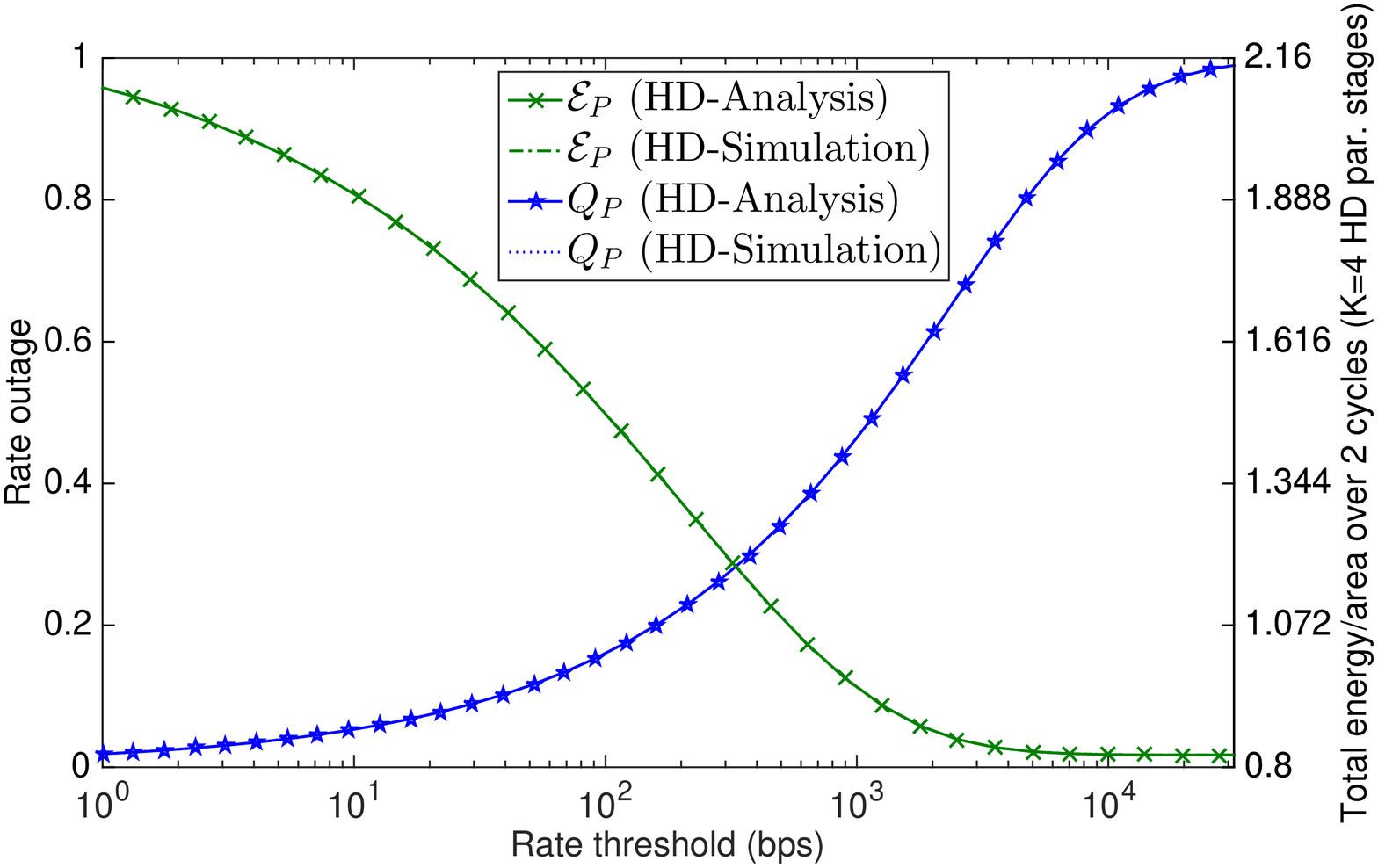}
  \caption{\footnotesize{A rate outage-total energy density tradeoff comparison of half-duplex parallel model for $K=2$.}}
  \label{fig-Rate-Energy-Tradeoff2}
\end{minipage}
\end{figure}

\section{Conclusions}
We study a general multi-hop-based uplink communication scheme for M2M communication, and develop a new data aggregation model for M2M devices, using tools from stochastic geometry. To the best of our knowledge, this is the first work on power-limited communication providing a unified energy consumption model with coverage in cellular networks. Our results show that the uplink coverage characteristics dominate the trend of the energy consumption for the proposed transmission modes. Considering the operating regime of M2M devices, sequential and half-duplex parallel modes are more feasible compared to full-duplex mode.

Interesting extensions would include the minimization of the energy expenditure through joint optimization of the optimal number of multi-hop stages and fraction of aggregators. Better energy models can be developed to incorporate the different states of the transceiver, i.e., on, idle, sleep and off states, which will provide a more accurate energy model for the proposed communication scheme. Strategies for synchronization of transmissions is also important to prevent multi-hop delays and save receiver energy consumption. Multi-slope path loss models \cite{Zhang2015} would more accurately model the effect of interference and noise for networks with increasing device density.



\begin{appendix}
\section{Appendices}
\subsection{Proof of Theorem \ref{maintheo}} \label{App:AppendixA}
Let $\Psi_a=\sum_{n}\delta_{X_n}$, $\Psi_u=\sum_{n}\delta_{\bar{X}_n}$, and $V_0$ be the Voronoi cell of the typical aggregator $X_0\in \Psi_a$ located at the origin. Notice that $x\in V_0$ if and only if $\Psi_a(\mathrm{B}^o(x,\norm{x}))=0$, where $\mathrm{B}^o(x,\norm{x})$ is the open ball centered at $x$ and of radius $\norm{x}$. Hence,
\begin{eqnarray}
\mathbb{E}_{\Psi_a}^0 \int_{\mathbb{R}^2}{ \! f(x)1_{x\in V_0} \, \mathrm{d}x}
\!=\!\int_{\mathbb{R}^2}{ \! f(x)\mathbb{P}_{\Psi_a}^0\left[\Psi_a(\mathrm{B}^o(x,\norm{x}))=0\right] \, \mathrm{d}x},\nonumber
\end{eqnarray}
which is the expectation with respect to the conditional (Palm) probability conditioned on $0\in \Psi_a$. As $\Psi_a$ is PPP due to thinning, from Slivnyak's theorem \cite{BaccelliBook1}, the RHS can be rewritten as
\begin{eqnarray}
\int_{\mathbb{R}^2}{ \! f(x)\mathbb{P}_{\Psi_a}^0\left[\Psi_a(\mathrm{B}^o(x,\norm{x}))=0\right] \, \mathrm{d}x}=\int_{\mathbb{R}^2}{ \! f(x)\mathbb{P}\left[\Psi_a(\mathrm{B}^o(x,\norm{x}))=0\right] \, \mathrm{d}x}.\nonumber
\end{eqnarray}
Using the Poisson law, we have $\mathbb{E}_{\Psi_a}^0{\sum\limits_{n}{f(\bar{X}_n)1_{\bar{X}_n\in V_0}}}=\lambda_u \int_{\mathbb{R}^2}{ \! f(x)e^{-\lambda_a \pi \norm{x}^2} \, \mathrm{d}x}$. We let $f(x)=\min\{P_{T_{\max}}, \overline{P}_T\norm{x}^{\alpha}\}$ be the transmit power with a maximum power constraint $P_{T_{\max}}$. Then, we have the following final result for the mean total uplink power of the devices in the Voronoi cell of the typical aggregator, i.e., the total power consumption of the PAs: 
\begin{eqnarray}
 P(\lambda_a)=\frac{2\pi\lambda_u \overline{P}_T}{\eta} \int_{0}^{r_c}{ \! r^{\alpha}e^{-\lambda_a \pi r^2} r\, \mathrm{d}r}+\frac{2\pi\lambda_u P_{T_{\max}}}{\eta}\int_{r_c}^{\infty}{ \! e^{-\lambda_a \pi r^2} r\, \mathrm{d}r},\nonumber
\end{eqnarray}
where $r_c=\big({P_{T_{\max}}}/{\overline{P}_T}\big)^{1/\alpha}$, and the final result follows from employing a change of variables $u=\lambda_a\pi r^2$, and $\gamma(s,x)=\int_{0}^{x}{ \! t^{s-1}e^{- t} \, \mathrm{d}t}$ is the lower incomplete gamma function. 

\subsection{Proof of Lemma \ref{meanPR}}\label{App:AppendixB}
	The average received power from a device at distance $x$ from the typical aggregator is $P_R(x)=\min\{P_{T_{\max}}\norm{x}^{-\alpha},\overline{P}_T\}$. The average number of devices within distance $r_c=\big( {P_{T_{\max}}}/{\overline{P}_T}\big)^{1/\alpha}$ of the typical aggregator is $\mathbb{E}\left[\mathrm{N_a}(r_c)\right]=\frac{\lambda_u}{\lambda_a}\left(1-\exp\left(-\pi\lambda_a r_c^2\right)\right)$. Using the Poisson law, the mean total received power is given by
	\begin{eqnarray}
	\overline{P}_R=\mathbb{E}_{\Psi_a}^0{\sum\limits_{n}{P_R(\bar{X}_n)1_{\bar{X}_n\in V_0}}}=\lambda_u \int_{\mathbb{R}^2}{ \! P_R(x)e^{-\lambda_a \pi \norm{x}^2} \, \mathrm{d}x}=
	2\pi\lambda_u\int_{0}^{\infty}{ \! P_R(r) e^{-\lambda_a \pi r^2} r\, \mathrm{d}r},\nonumber
	\end{eqnarray}
	where the final result follows from incorporating $P_R(x)$ to split the integral into two parts, employing a change of variables $\lambda_a\pi r^2=u$ and the definition of incomplete Gamma function.

\subsection{Proof of Proposition \ref{mainprop}} \label{App:AppendixC}
Note that $\mathcal{E}(\pmb{\lambda_a}(K))=\sum_{k=1}^{K}{c_k(\gamma)}$, where $c_k(\gamma)$ is given in Proposition \ref{simple}. Note also that $\min\{\mathcal{E}(\pmb{\lambda_a}(K))\}+\min\{c_{K+1}(\gamma)\}\leq \min\{\mathcal{E}(\pmb{\lambda_a}(K+1))\}$. Adopting $\gamma^{(K)}$ to represent the ratio for $K$ stages, and letting $\gamma^{(K)*}$ be the optimal value for $K$ stages, optimal solutions of $\mathcal{E}(\pmb{\lambda_a}(K))$ and $\mathcal{E}(\pmb{\lambda_a}(K+1))$ are $\gamma^{(K)*} = \argmin \mathcal{E}(\pmb{\lambda_a}(K))$ and $\gamma^{(K+1)*} = \argmin \mathcal{E}(\pmb{\lambda_a}(K+1))$, yielding
\begin{eqnarray}
\label{gammabounds}
\sum\nolimits_{k=1}^{K}{c_k(\gamma^{(K)*})}+\min\{c_{K+1}(\gamma)\}\leq\sum\nolimits_{k=1}^{K+1}{c_k(\gamma^{(K+1)*})}.
\end{eqnarray}
It is also clear that
\begin{eqnarray}
\label{gammasum}
\sum\nolimits_{k=1}^{K}{c_k(\gamma^{(K)*})}\leq\sum\nolimits_{k=1}^{K}{c_k(\gamma^{(K+1)*})}.
\end{eqnarray}
Using (\ref{gammabounds}) and (\ref{gammasum}), we have $\min\{c_{K+1}(\gamma)\}\leq {c_{K+1}(\gamma^{(K+1)*})}$. Assume that $\gamma^{(K+1)*}> \gamma^{(K)*}$. Then, from Remark \ref{rem}, $c_{k}(\gamma^{(K)*})>c_{k}(\gamma^{(K+1)*})$ for $k\leq K-1$, and $c_{K}(\gamma^{(K)*})<c_{K}(\gamma^{(K+1)*})$. Using (\ref{gammasum}), and from $\gamma^{(K+1)*}> \gamma^{(K)*}$, we can infer that $c_{K+1}(\gamma^{(K+1)*})$ is increasing in $\gamma^{(K+1)*}$, and reduce $c_{K+1}(\gamma^{(K+1)*})$, i.e., the energy density at stage $K+1$, by decreasing $\gamma^{(K+1)*}$ to $\gamma^{(K)*}$, which implies $\gamma^{(K+1)*} \neq \argmin \mathcal{E}(\pmb{\lambda_a}(K+1))$, giving a contradiction. Thus, $\gamma^{(K+1)*}\leq \gamma^{(K)*}$.

\subsection{Proof of Lemma \ref{maincoveragelemma}} \label{App:AppendixD}
The Laplace transform of the interference can be written as	
\begin{eqnarray}
\label{mainLT}
\mathcal{L}_{I_r}(s)&\stackrel{(a)}{=}&\mathbb{E}\Big[\prod\limits_{z\in \Psi_u\backslash \{x\}}\mathbb{E}_{R_z}\Big[\frac{1}{1+s \min\{P_{T_{\max}}, \overline{P}_T R_z^{\alpha} \} D_z^{-\alpha}}\Big]\Big]\nonumber\\
&\stackrel{(b)}{\approx}& \exp{\Big(-\int_{y>0}{ \! \Big(1-\mathbb{E}_{R_z}\Big[\Big.\frac{1}{1+s \min\{P_{T_{\max}}, \overline{P}_T R_z^{\alpha} \} y^{-\alpha}}\Big \vert R_z<y \Big]\Big) \,\Lambda_u(\mathrm{d}y)}\Big)}\nonumber\\
&\stackrel{(c)}{=}& \exp{\Big(-\int_{y>0}{ \! p\mathbb{E}_{R_z}\Big[\Big.\frac{1}{1+s^{-1} \overline{P}_T^{-1} R_z^{-\alpha}  y^{\alpha}}\Big \vert R_z<y, R_z<r_c \Big] \,\Lambda_u(\mathrm{d}y)}\Big)}\nonumber\\
&&\exp{\Big(-\int_{y>0}{ \! (1-p)\mathbb{E}_{R_z}\Big[\Big.\frac{1}{1+s^{-1} P_{T_{\max}}^{-1} y^{\alpha}}\Big \vert R_z<y, R_z>r_c \Big] \,\Lambda_u(\mathrm{d}y)}\Big)}\nonumber\\
&\stackrel{(d)}{=}& \exp{\Big(-\pi\lambda_a p\mathbb{E}_{R_z}\Big[\Big.R_z^2\int_{1}^{\infty}{ \! \frac{1}{1+s^{-1} \overline{P}_T^{-1} t^{\alpha/2}}  \,\mathrm{d}t}\Big \vert R_z<r_c\Big]\Big)}\nonumber\\
&&\exp{\Big(-\pi\lambda_a(1-p)\mathbb{E}_{R_z}\Big[\Big.\int_{1}^{\infty}{ \! R_z^2\frac{1}{1+s^{-1} P_{T_{\max}}^{-1} t^{\alpha/2}R_z^{\alpha}}   \,\mathrm{d}t}\Big \vert R_z>r_c\Big]\Big)}\nonumber\\
&\stackrel{(e)}{=}& \exp{\Big(-\pi\lambda_a p\mathbb{E}_{R_z}\left[R_z^2\vert R_z<r_c\right]\frac{2}{\alpha-2}s\overline{P}_T C_{\alpha}(s\overline{P}_T)\Big)}\nonumber\\
&&\exp{\Big(-\pi\lambda_a(1-p)\mathbb{E}_{R_z}\Big[\Big.R_z^2\frac{2}{\alpha-2}sP_{T_{\max}}R_z^{-\alpha}C_{\alpha}(sP_{T_{\max}}R_z^{-\alpha})\Big\vert R_z>r_c\Big]\Big)}\nonumber\\
&\stackrel{(f)}{=}& \exp\Big(- \frac{2s}{\alpha-2}\Big((1-e^{-\pi\lambda_ar_c^2}(1+\pi\lambda_ar_c^2))\overline{P}_TC_{\alpha}(s\overline{P}_T)\nonumber\\
&+&(1-p)\pi\lambda_aP_{T_{\max}}\mathbb{E}_{R_z}\left[\left.R_z^{2-\alpha}C_{\alpha}(sP_{T_{\max}}R_z^{-\alpha})\right\vert R_z>r_c\right]\Big)\Big),
\end{eqnarray}
where ($a$) follows from the iid nature of $\{g_z\}$ and the independence of $\{R_z\}$ (see Assumption \ref{distanceassumption}). The process $\Psi_u$ is not a PPP but a Poisson-Voronoi perturbed lattice and
hence the functional form of the interference (or the Laplace
functional of $\Psi_u$) is not tractable \cite{Singh2014}. Authors in \cite{Singh2014}
propose an approximation to characterize the corresponding
process as an inhomogeneous PPP with intensity measure
function $\Lambda_u(dy) = 2\pi\lambda_ay(dy)$. Hence, ($b$) follows from the definition of probability generating functional (PGFL) of the PPP \cite{Stoyan1996}, the independent path loss between the device and its serving aggregator \cite{Singh2014}, i.e., $R_z^{\alpha}$'s are independent, and the assumption that interferer's path loss is bounded as $R_z^{\alpha} < y^{\alpha}$ \cite{Singh2014}, ($c$) follows from the maximum power constraint, where $p=1-\exp\big(-\pi\lambda_a r_c^2\big)$ denotes the probability that $R_z<r_c$ and can be obtained using (\ref{Na}) and (\ref{Nad}), ($d$) follows from change of variables $t=(x/R_z)^2$, ($e$) follows using the definition of Gauss-Hypergeometric function, yielding $\int_{1}^{\infty}{ \! \frac{1}{1+s^{-1} t^{\alpha/2}}  \,\mathrm{d}t}=\frac{2s}{\alpha-2}{_2F_1}\Big(1,1-\frac{2}{\alpha},2-\frac{2}{\alpha},-s\Big)=\frac{2s}{\alpha-2}C_{\alpha}(s)$, ($f$) follows from Assumption \ref{distanceassumption} for $R_z$. Hence, $R_z^2$ is exponentially distributed with rate parameter $1/(2\sigma^2)=\pi\lambda_a$ that yields	
\begin{eqnarray} 
\label{RcondL}
p\mathbb{E}_{R_z}\left[R_z^2\vert R_z<r_c\right]=\int_{0}^{r_c^2}{\! v \pi\lambda_a e^{-\pi\lambda_a v}\, \mathrm{d}v}
=\frac{1-e^{-\pi\lambda_ar_c^2}(1+\pi\lambda_ar_c^2)}{\pi\lambda_a}.
\end{eqnarray}	

Conditioned on the distance between the device and its associated aggregator, 
\begin{eqnarray}
\mathbb{P}(\mathrm{SIR}>T\vert r<r_c)
=\mathbb{E}_{I_r}\left[\mathbb{P}\left(g >T\overline{P}_T^{-1}I_r\vert r<r_c\right)\right]\approx\mathcal{L}_{I_r}(T\overline{P}_T^{-1}),\nonumber\\
\mathbb{P}(\mathrm{SIR}>T\vert r>r_c)
=\mathbb{E}_{I_r}\left[\mathbb{P}\left(g >Tr^{\alpha}P_{T_{\max}}^{-1}I_r\vert r>r_c\right)\right]\approx\mathcal{L}_{I_r}(Tr^{\alpha}P_{T_{\max}}^{-1}).\nonumber
\end{eqnarray}
Hence, the uplink SIR coverage is obtained as $\mathcal{P}(T)
\approx p\mathcal{L}_{I_r}(T\overline{P}_T^{-1})+\int_{r_c}^{\infty}{\!\mathcal{L}_{I_r}(Tr^{\alpha}P_{T_{\max}}^{-1})f_R(r)\, \mathrm{d}r}$, where $R$ is Rayleigh distributed with parameter $\sigma=\sqrt{1/(2\pi\lambda_a)}$ from Assumption \ref{distanceassumption}.

\vspace{-0.2cm}
\subsection{Proof of Lemma \ref{parallelinterferencelemma}} \label{App:AppendixE}
Part (a). The intra-stage interference can be found using the density of active receiving devices of stage $k$, which is given by $\lambda_a^{\rm eff}(k)=p_{\mathrm{th}}(k)\lambda_a(k)$. The Laplace transform of $I_k$ is 
\begin{eqnarray}
\mathcal{L}_{I_k}(s)&=&\mathbb{E}\Big[\exp\Big(-s\sum\nolimits_{z_k\in \Psi_{u,k}\backslash \{x\}}g_z \min\{P_{T_{\max}}, \overline{P}_T R_{z_{k}}^{\alpha} \} D_{z_k}^{-\alpha}\Big)\Big]\nonumber\\
&{\approx}& 
\exp\Big(- \frac{2s}{\alpha-2}\Big((1-e^{-\pi\lambda_a^{\rm eff}(k)r_c^2}(1+\pi\lambda_a^{\rm eff}(k)r_c^2))\overline{P}_TC_{\alpha}(s\overline{P}_T)\nonumber\\
&+&(1-p_k)\pi\lambda_a^{\rm eff}(k)P_{T_{\max}}\mathbb{E}_{R_{z_k}}\left[\left.R_{z_k}^{2-\alpha}C_{\alpha}(sP_{T_{\max}}R_{z_k}^{-\alpha})\right\vert R_{z_k}>r_c\right]\Big)\Big).
\end{eqnarray}
This result is similar to the Laplace transform of the interference in (\ref{laplacetransformgeneral}), where the distribution of $R_{z_k}$ is Rayleigh, but with parameter $\sigma_k=\sqrt{1/(2\pi\lambda_a(k))}$ and $p_k=1-\exp(-\pi\lambda_a(k)r_c^2)$ is the probability that $R_{z,k}<r_c$ that decreases with increasing $k$.

Part (b). Let $I_{k^c}$ be the total inter-stage interference at stage $k$, and $\Psi_{u,k^c}=\cup_{j\neq k}\Psi_{u(j)}$ be the point process denoting the location of inter-stage devices transmitting on the same resource as the typical device. Note that the transmitter processes for all stages are determined by independent thinning of the initial device process $\Psi$ modeled as PPP. Hence, $\Psi_{u,k^c}$ is the superposition of the inter-stage transmitter processes, and is a PPP with density $\lambda_u^{\rm eff}(k^c)=\sum_{j\neq k}p_{\mathrm{th}}(j)\lambda_u(j)$, and is also independent of $\Psi_{u(k)}$ since the random variables $\{R_{z_k}\}$ are assumed independent \cite{Novlan2013}. Then, the Laplace transform of $I_{k^c}$, i.e., $\mathcal{L}_{I_{k^c}}(s)$, is given by
\begin{eqnarray}
\label{laplaceinterstage}
\mathcal{L}_{I_{k^c}}(s)
&=&\mathbb{E}\Big[\exp\Big(-s\sum\limits_{l\in k^c}{\sum\limits_{z_l\in \Psi_{u,l}}g_{z_l} \min\{P_{T_{\max}}, \overline{P}_T R_{z_l}^{\alpha} \} D_{z_l}^{-\alpha}}\Big)\Big]\nonumber\\
&\stackrel{(a)}{=}&\prod\limits_{l\in k^c}{\mathbb{E}\Big[\exp\Big(-s\sum\limits_{z_l\in \Psi_{u,l}}g_{z_l} \min\{P_{T_{\max}}, \overline{P}_T R_{z_l}^{\alpha} \} D_{z_l}^{-\alpha}\Big)\Big]}\nonumber\\
&{\approx}& \prod\limits_{l\in k^c}{\exp\Big(-\pi\lambda_a^{\rm eff}(l) p_l\mathbb{E}_{R_{z_l}}\Big[\Big.R_{z_l}^2\int_{0}^{\infty}{ \! \frac{1}{1+s^{-1} \overline{P}_T^{-1} t^{\alpha/2}}  \,\mathrm{d}t}\Big \vert R_{z_l}<r_c\Big]\Big)}\nonumber\\
&&\prod\limits_{l\in k^c}{\exp\Big(-\pi\lambda_a^{\rm eff}(l)(1-p_l)\mathbb{E}_{R_{z_l}}\Big[\Big.\int_{0}^{\infty}{ \! R_{z_l}^2\frac{1}{1+s^{-1} P_{T_{\max}}^{-1} t^{\alpha/2}R_{z_l}^{\alpha}}   \,\mathrm{d}t}\Big \vert R_{z_l}>r_c\Big]\Big)}\nonumber\\
&{=}& \prod\limits_{l\in k^c}\exp\Big(- (1-e^{-\pi\lambda_a^{\rm eff}(l)r_c^2}(1+\pi\lambda_a^{\rm eff}(l)r_c^2))(B_{\alpha}(s \overline{P}_T)+\frac{2s \overline{P}_T}{\alpha-2}C_{\alpha}(s \overline{P}_T))
\\
&&
-(1-p_l)\pi\lambda_a^{\rm eff}(l)\mathbb{E}_{R_{z_l}}\Big[R_{z_l}^2\left.\Big(B_{\alpha}\Big(\frac{s P_{T_{\max}}}{R_{z_l}^{\alpha}}\Big)+\frac{2sP_{T_{\max}}}{(\alpha-2)R_{z_l}^{\alpha}}C_{\alpha}\Big(\frac{s P_{T_{\max}}}{R_{z_l}^{\alpha}}\Big)\Big)\right \vert R_{z_l}>r_c\Big]\Big),\nonumber
\end{eqnarray}
where $R_{z_l}$ is Rayleigh with parameter $\sigma_l=\sqrt{1/(2\pi\lambda_a(l))}$, $p_l=1-\exp(-\pi\lambda_a(l)r_c^2)$ is the probability that $R_{z,l}<r_c$, and ($a$) follows from the independence of $\{R_{z_l}\}$ over stages $\{l\}$.

\end{appendix}

\begin{spacing}{1.35}
\bibliographystyle{IEEEtran}
\bibliography{M2Mreferences}  
\end{spacing}

\end{document}